\definecolor{THc}{rgb}{0.2,0.9,0.5}
\newcommand{\change}[1]{{\color{black} #1}}
\newtheorem{theorem}{Theorem}
\def\app#1#2{%
  \mathrel{%
    \setbox0=\hbox{$#1\sim$}%
    \setbox2=\hbox{%
      \rlap{\hbox{$#1\propto$}}%
      \lower1.1\ht0\box0%
    }%
    \raise0.25\ht2\box2%
  }%
}
\providecommand{\proofname}{Proof}
\DeclareMathOperator{\E}{\mathbb{E}}
\newcommand{\bigO}[1]{\mathcal{O}\mleft(#1\mright)}
\newcommand{\Tr}{\mathrm{Tr}}
\providecommand{\factname}{Fact}
\providecommand{\theoremname}{Theorem}
\providecommand{\claimname}{Claim}
\providecommand{\lemmaname}{Lemma}
\providecommand{\definitionname}{Definition}
\providecommand{\corollaryname}{Corollary}
\providecommand{\conjecturename}{Conjecture}
\newcommand\norm[1]{\left\lVert#1\right\rVert}
\newcommand{\sectionMain}[1]{
\let\oldaddcontentsline\addcontentsline
\renewcommand{\addcontentsline}[3]{}
\section{#1}
\let\addcontentsline\oldaddcontentsline
}
\begin{document}

\title{Optimal Conversion from Classical to Quantum Randomness via Quantum Chaos}
\author{Wai-Keong Mok}
\affiliation{Institute for Quantum Information and Matter, California Institute of Technology, Pasadena, CA 91125, USA}
\author{Tobias Haug}
\affiliation{Quantum Research Centre, Technology Innovation Institute, Abu Dhabi, UAE}
\author{Adam L. Shaw}
\thanks{Present address: Department of Physics, Stanford University, Stanford, CA}
\affiliation{Institute for Quantum Information and Matter, California Institute of Technology, Pasadena, CA 91125, USA}

\author{Manuel Endres}
\affiliation{Institute for Quantum Information and Matter, California Institute of Technology, Pasadena, CA 91125, USA}
\author{John Preskill}
\affiliation{Institute for Quantum Information and Matter, California Institute of Technology, Pasadena, CA 91125, USA}
\affiliation{AWS Center for Quantum Computing, Pasadena CA 91125}

\begin{abstract}
Quantum many-body systems provide a unique platform for exploring the rich interplay between chaos, randomness, and complexity. In a recently proposed paradigm known as deep thermalization, random quantum states of system $A$ are generated by performing projective measurements on system $B$ following chaotic Hamiltonian evolution acting jointly on $AB$. In this scheme, the randomness of the projected state ensemble arises from the intrinsic randomness of the outcomes when $B$ is measured. Here we propose a modified scheme, in which classical randomness injected during the protocol is converted by quantum chaos into quantum randomness of the resulting state ensemble. We show that for generic chaotic systems this conversion is optimal in that each bit of injected classical entropy generates as much additional quantum randomness as adding an extra qubit to $B$. This significantly enhances the randomness of the projected ensemble without imposing additional demands on the quantum hardware. Our scheme can be easily implemented on typical analog quantum simulators, providing a more scalable route for generating quantum randomness valuable for many applications. In particular, we demonstrate that the accuracy of a shadow tomography protocol can be substantially improved.
\end{abstract}

\maketitle
\begin{figure*}
\centering
\subfloat{%
  \includegraphics[width=0.8\linewidth]{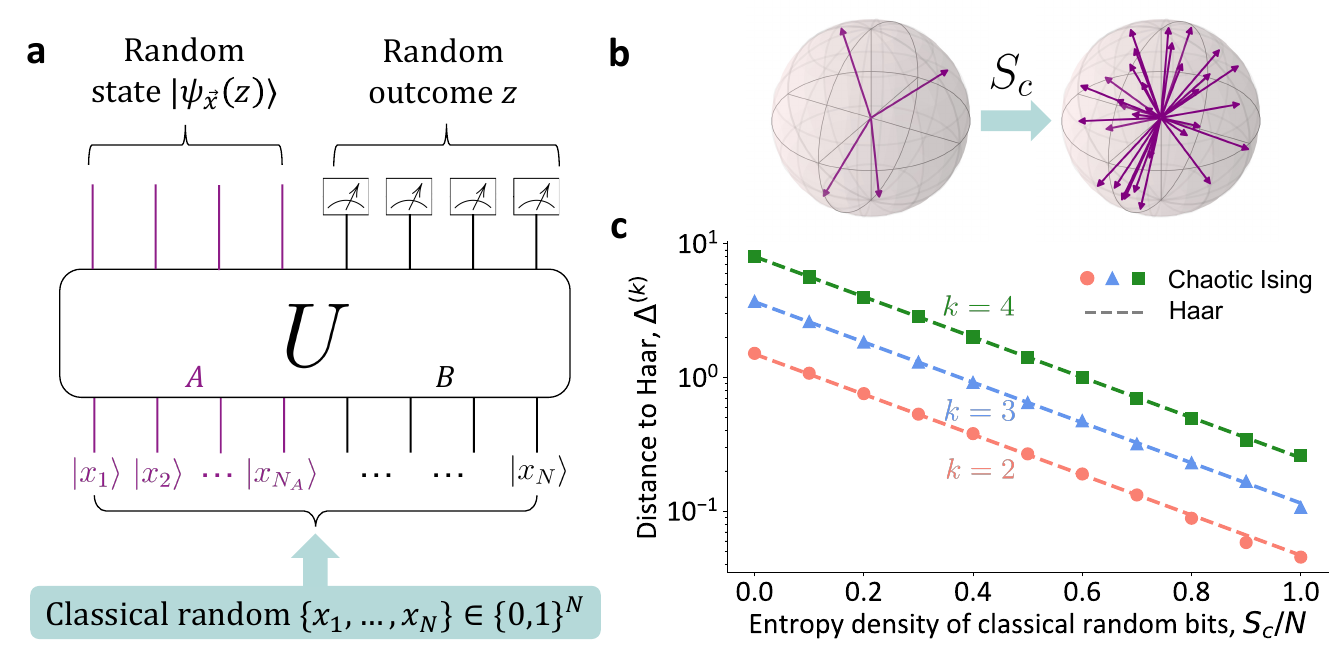}%
}
\caption{Classically-enhanced projected ensembles via random basis state initialization. (a) The system is initialized in a random computational basis state $\bigotimes_{i=1}^{N} \ket{x_i}$, where the $N$-bit string $x = (x_1,\ldots,x_N) \in \{0,1\}^N$ is drawn randomly from a distribution $q(x)$ with entropy $S_{\text{c}}$. The initial state evolves under a fixed unitary $U$, and $N_B$ bath qubits are measured in the computational basis to yield a random outcome $z$. For generic chaotic dynamics, the projected states $\ket{\psi_x(z)}$ on the remaining $N_A$ qubits form a 
projected ensemble $\mathcal{E}$ that approximates a $k$-design. (b) Classical randomness of entropy $S_{\text{c}}$ increases the size of the projected ensemble $\mathcal{E}$ by a factor of up to $ 2^{S_{\text{c}}}$, illustrated on the Bloch sphere. (c) Normalized Hilbert-Schmidt distance $\Delta^{(k)}$ between the $k$-th moment of the projected ensemble $\mathcal{E}$ and the Haar ensemble, against the entropy density $S_{\text{c}}/N$ of the classical distribution $q(x)$, with $N_A = 4$ and $N_B = 6$. The points are obtained numerically by evolving the initial state with the chaotic mixed-field 1D Ising Hamiltonian $H_0$~\eqref{eq:H0_mfim} for a time $JT = 10^3$. The dashed lines denote the analytical root-mean-square distance $\Delta_{\text{rms}}^{(k)}$~\eqref{eq:rms_uniformbasis} when $U$ is a Haar random unitary.}
 \label{fig:reinit_scheme}
\end{figure*}
\textit{Introduction.---} 
Preparing ensembles of random quantum states is an increasingly important task in quantum information science. Quantum randomness is theoretically interesting for its connections to quantum chaos~\cite{cotler2017chaos,hosur2016chaos,roberts2017chaos} and quantum descriptions of black hole dynamics~\cite{page1993information,page2013time,hayden2007black}, and is also practically relevant for a wide range of applications such as randomized benchmarking~\cite{knill2008randomized,dankert2009exact}, quantum communication~\cite{hayden2008decoupling}, phase retrieval~\cite{shelby2017phase}, shadow tomography~\cite{huang2020predicting,mcginley2023shadow}, cryptography~\cite{divincenzo2002quantum,ananth2024simultaneous}, and large-scale device benchmarking~\cite{arute2019quantum,morvan2023phase,zhu2022quantum,shaw2024benchmarking}. However, in general it is difficult to generate truly random quantum states -- known as Haar random states –- which have exponential complexity. Further, for these applications it is typically desirable to create random states on many qubits, but this task is stymied by limitations on the fidelity of near-term quantum processors.

In lieu of creating such truly random states, many applications can already be accomplished through only low-order approximations to the Haar random ensemble, known as $k$-designs~\cite{gross2007evenly,dankert2009exact}, which are statistically indistinguishable from Haar random states for any observable involving up to $k$ copies of the state. Thus, it is useful to characterize the randomness of a quantum state ensemble by quantifying its distance from a $k$-design for $k \geq 2$. 
It is known that approximate $k$-designs can be generated efficiently by random unitary circuits (RUCs)~\cite{brandao2016local,haferkamp2022random,harrow2023approximate,schuster2024random}. Effectively, such RUCs  convert classical randomness (the arrangement of the circuit) into quantum randomness (the generated $k$-design), albeit while requiring a high degree of spatio-temporal control and low noise.

Recently, a new paradigm has emerged for realizing random states without requiring the high-level of control of RUCs, but instead relying on projective measurements following a fixed chaotic Hamiltonian evolution~\cite{Cotler2023emergent,choi2023preparing}, in a process known as deep thermalization~\cite{Cotler2023emergent,choi2023preparing,Ippoliti2022solvable,ho2022exact,wilming2022high,claeys2022emergent,shrotriya2023nonlocality,liu2024deep,bhore2023deep,ippoliti2023dynamical,lucas2023generalized,mark2024maximum,chan2024projected,chang2024deep,varikuti2024unraveling}. In this approach, one evolves a many-qubit quantum system under a chaotic Hamiltonian (with spectral statistics described by random matrix theory), and then measures a subset of the qubits (the bath). For a wide range of physical systems, the resulting state ensemble of the unmeasured qubits (the projected ensemble) approximates a $k$-design~\cite{Cotler2023emergent,choi2023preparing}. This method for generating random states can be used for practical applications such as benchmarking analog quantum simulators~\cite{choi2023preparing}. Since the dynamics is fixed, unlike RUCs, the randomness stems from the intrinsic uncertainty of quantum measurements. While this approach yields exact Haar random states in the limit of infinitely many measured qubits, the convergence to successively higher order $k$-designs is limited by the size of the projectively measured bath, which poses a practical bottleneck for near-term quantum hardware with limited qubit number. 

Here we propose and numerically implement new strategies for circumventing this constraint via injection of classical randomness into quantum chaotic dynamics. This approach allows us to effectively generate higher order $k$-designs for a fixed system size by supplying classical entropy rather than additional bath qubits. We inject classical entropy via either randomizing the initial basis states prepared for the Hamiltonian evolution, or injecting time-independent disorder into the evolution itself. This approach allows us to sample random quantum states with significantly fewer quantum resources than would be required using RUCs or the previously proposed version of deep thermalization. 
In fact, we find that each bit of classical entropy can produce as much additional quantum randomness as adding an extra qubit to the bath, signifying an optimal conversion of classical to quantum randomness. The injected classical randomness can effectively double the size of the bath; hence using the same number of physical qubits we can generate state ensembles that are far closer to the Haar ensemble compared to the projected ensemble protocol proposed in \cite{Cotler2023emergent}. 

Our procedures are experimentally simple to execute on modern quantum platforms, including most analog quantum simulators, and can thus enable the creation of more complex random states on near-term quantum devices. Our approach can improve many applications of quantum randomness~\cite{knill2008randomized,dankert2009exact,hayden2008decoupling,shelby2017phase,huang2020predicting,mcginley2023shadow,divincenzo2002quantum,ananth2024simultaneous,arute2019quantum,morvan2023phase,zhu2022quantum,shaw2024benchmarking}; we describe one in particular, by demonstrating how injecting classical randomness can enhance the performance of shadow tomography~\cite{huang2020predicting,mcginley2023shadow}.

\textit{Projected ensembles and deep thermalization.---}
In the basic framework of the projected ensemble~\cite{Cotler2023emergent,choi2023preparing,goldstein2006distribution,goldstein2016universal}, a fixed initial state of $N$ qubits, taken to be $ |0\rangle^{\otimes N}$, is evolved under a fixed unitary evolution $U_{AB}$. The system is bipartitioned into two subsystems, delineated as $A$ and $B$, of size $N_A$ and $N_B$ qubits, respectively. Qubits in $B$ are then measured in the computational basis producing a bitstring $z \in \{0,1\}^{N_B}$ with probability $p(z)$, which induces a pure quantum state in $A$ conditioned on the measurement outcome. The set of these post-measurement states, together with the probabilities $p(z)$, collectively encode the entire state 
and is known as the projected ensemble.

Our scheme exploits this basic framework, but improves the convergence of the projected ensemble to $k$-designs via the injection of classical randomness. In the simplest case (see Fig.~\ref{fig:reinit_scheme}(a)), this is accomplished by preparing not a fixed initial state for the evolution, but a randomized one. Concretely, the initial state is randomly sampled from the ensemble $\mathcal{E}_{\text{init}} = \{\ket{x},q(x)\}_{x \in \{0,1\}^{N}}$ of computational basis states with probability $q(x)$, and then evolved under the same fixed $U_{AB}$. The bath is then projectively measured, now yielding bitstrings $z$ labeled by the choice of initial state with probabilities $p_x(z) = q(x) \norm{(I_A \otimes \bra{z}_B)U_{AB}\ket{x}}^2$. Accordingly, the post-measurement states in $A$ are labeled by both the bitstring measured in $B$ as well as the initial state, $\ket{\psi_x(z)} \propto (I_A \otimes \bra{z}_B) U_{AB}\ket{x}$. The set of these probabilities and (normalized) post-measurement states then forms the \textit{classically-enhanced} projected ensemble
$\mathcal{E} = \{q(x) p_x(z);\ket{\psi_x(z)}\}$. 

In order to assess the degree of randomness of the projected ensemble $\mathcal{E}$, we compare its moments against that of the Haar ensemble (see Supplemental Materials (SM)~\cite{supp} for more details). The $k$-th moment operator of the ensemble $\mathcal{E}$ is given by
\begin{equation}
    \rho^{(k)} = \sum_{\substack{{x \in \{0,1\}^{N}} \\ {z \in \{0,1\}^{N_B}}}} q(x) p_x(z) (\ket{\psi_x(z)}\bra{\psi_x(z)})^{\otimes k}.
\label{eq:enhanced_projens}
\end{equation}
We then compute the normalized Hilbert-Schmidt distance to the $k$-th moment of the Haar ensemble \change{on $A$},
\begin{equation}
    \Delta^{(k)} = \frac{\norm{\rho^{(k)} - \rho_{\text{Haar}}^{(k)}}_2}{\norm{\rho_{\text{Haar}}^{(k)}}_2}\,,
\label{eq:deltak}
\end{equation}
\change{where $\norm{\cdot}_2$ is the Hilbert-Schmidt norm.} This has an intuitive entropic interpretation since $\Delta^{(k)}$ decreases when the quantum R\'{e}nyi 2-entropy of $\rho^{(k)}$ increases. $\mathcal{E}$ forms an $\epsilon$-approximate state $k$-design if $\Delta^{(k)} \leq \epsilon$, which is consistent with the usual definition using the trace distance~\cite{supp}. The classical randomness of the \change{initial states $\mathcal{E}_{\text{init}}$ on the full system $AB$ can be quantified by $S_c$,}
the R\'{e}nyi 2-entropy of $q(x)$. 
The original protocol in~\cite{Cotler2023emergent,choi2023preparing} corresponds to $S_c = 0$.  


\begin{figure*}
\centering
\subfloat{%
  \includegraphics[width=0.9\linewidth]{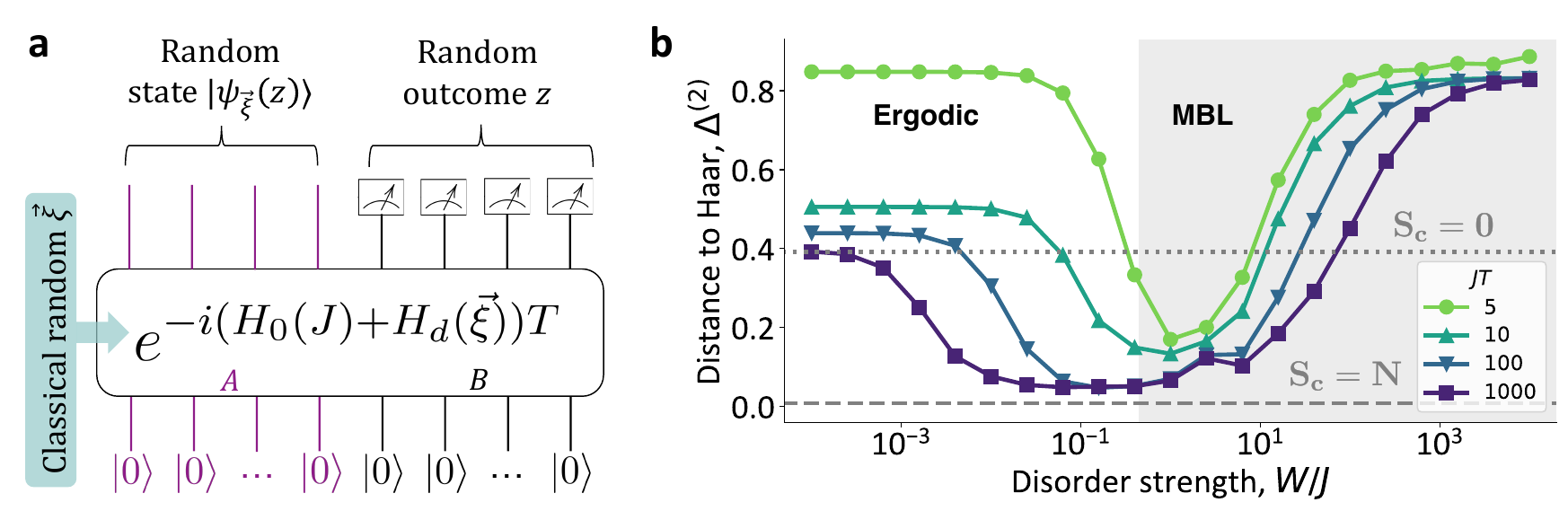}%
}
\caption{(a) Classically-enhanced projected ensembles via random Hamiltonian disorder. The initial state is fixed as $\ket{0}^{\otimes N}$. A spatially inhomogeneous random field $H_d(\vec{\xi}\,)$ with fluctuations of size $W$ is applied on $N$ qubits on top of a fixed chaotic Hamiltonian $H_0$ with interaction energy $J$, and the state is evolved for a quench time $T$. (b) Normalized Hilbert-Schmidt distance $\Delta^{(2)}$ between the projected ensemble $\mathcal{E}$ and the Haar ensemble, against the disorder strength $W/J$. The points are obtained numerically by evolving the initial state under the chaotic (mixed-field) Ising model with disorder~\eqref{eq:mfim_disorder}, for $N_A = 3$ and $ 
 N_B = 8$. A total of $2^{N_A + N_B}$ disorder realizations are sampled, giving a projected ensemble of size $|\mathcal{E}| = 2^{N_A + 2 N_B}$. Numerical results are compared against analytical benchmarks~\eqref{eq:rms_uniformbasis} of $\Delta_{\text{rms}}^{(k)}$ for Haar random unitary evolution with $S_{\text{c}} = 0$ and $N$, denoted by the dotted and dashed lines respectively. The grey shaded region marks the MBL regime, with the ergodic-MBL crossover point at $W/J \approx 0.42$~\cite{supp}. For very large disorder $W/J \gg 1$, the projected ensemble converges to a $1$-design, and $\Delta^{(2)}$ saturates and becomes independent of $N_B$.}
 \label{fig:disorder}
\end{figure*}

\textit{Optimal conversion from classical to quantum randomness.---}
While it is obvious that injecting classical randomness via $\mathcal{E}_{\text{init}}$ can only increase the quantum randomness of the projected ensemble (Fig.~\ref{fig:reinit_scheme}(b)), the interesting question is to what extent quantum randomness can be increased. For analytical tractability, we study the case where $U_{AB}$ is a fixed unitary drawn from the Haar measure on the unitary group $\mathcal{U}(2^{N_A+N_B})$. The quantum randomness in the `typical' case is measured by the root-mean-square distance $\Delta_{\text{rms}}^{(k)} = \sqrt{\E_{U \sim \text{Haar}}\left(\Delta^{(k)}\right)^2}$, obtained by averaging $\left(\Delta^{(k)}\right)^2$ (which is a function of $U_{AB}$) over the Haar measure. Note that the trace distance between $\rho^{(k)}$ and $\rho_{\text{Haar}}^{(k)}$ can be upper bounded by $\bigO{\Delta_{\text{rms}}^{(k)}}$, which we find is exponentially smaller in both $N_A$ and $k$ compared to the bound of Ref.~\cite{Cotler2023emergent} (see Theorem 2 in SM~\cite{supp}). Our main analytical result is stated as follows:
\begin{theorem}[Classically-enhanced projected ensembles]
\label{thm:rms_dist}
Let $\mathcal{E}_{\normalfont\text{init}} = \{q(x),\ket{x}\}$ be an set of orthonormal initial states $\ket{x}$ with probability distribution $q(x)$ and $S_c = -\log_2 \left(\sum_x q(x)^2\right)$ is the R\'{e}nyi 2-entropy of $q(x)$. For any \change{$2 \leq k < 2^{(N_A+N_B)/4}$}, the root-mean-square distance $\normalfont\Delta_{\text{rms}}^{(k)}$ between the $k$-th moments of the projected ensemble $\mathcal{E}$ and the Haar ensemble approaches
\begin{equation}\normalfont
    \left(\Delta_{\text{rms}}^{(k)}\right)^2 = \frac{1}{k! 2^{S_{\text{c}}+N_B- kN_A}}
\label{eq:rms_uniformbasis}
\end{equation}
 as $N_A$ and $N_B \to \infty$.
\end{theorem}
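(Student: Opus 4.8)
\emph{Proof plan.} Write $d_A=2^{N_A}$, $d_B=2^{N_B}$, $d=d_Ad_B$, and let $D_k=\binom{d_A+k-1}{k}=\dim\operatorname{Sym}^k(\mathbb{C}^{d_A})$, with $\Pi_{\mathrm{sym}}$ the projector onto the symmetric subspace. The strategy is to reduce the Haar average of $(\Delta^{(k)})^2$ to the single scalar $\E_U[\Tr((\rho^{(k)})^2)]$ and evaluate it by Weingarten calculus. Since $\rho^{(k)}$ is supported on $\operatorname{Sym}^k(\mathbb{C}^{d_A})$ while $\rho_{\mathrm{Haar}}^{(k)}=\Pi_{\mathrm{sym}}/D_k$, one has $\Tr[(\rho_{\mathrm{Haar}}^{(k)})^2]=1/D_k$ and $\Tr[\rho^{(k)}\rho_{\mathrm{Haar}}^{(k)}]=\Tr[\rho^{(k)}]/D_k=1/D_k$, so expanding the numerator of \eqref{eq:deltak} yields the exact identity $(\Delta^{(k)})^2=D_k\,\Tr[(\rho^{(k)})^2]-1$. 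Averaging over $U_{AB}$, the theorem reduces to
\[
\E_U\!\left[\Tr\!\left((\rho^{(k)})^2\right)\right]=\sum_{x,x',z,z'}q(x)q(x')\,\E_U\!\left[p_x(z)\,p_{x'}(z')\,\bigl|\langle\psi_x(z)|\psi_{x'}(z')\rangle\bigr|^{2k}\right],
\]
which I would split into a ``diagonal'' block $x=x',z=z'$ and an ``off-diagonal'' remainder.

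The diagonal block equals $\sum_x q(x)^2\sum_z\E_U[p_x(z)^2]$ because $|\langle\psi_x(z)|\psi_x(z)\rangle|^{2k}=1$. Here $p_x(z)$ is the squared norm of the projection of the Haar-random column $U_{AB}\ket{x}\in\mathbb{C}^d$ onto a $d_A$-dimensional sector, so it follows a $\operatorname{Beta}(d_A,d-d_A)$ law and $\E_U[p_x(z)^2]=\tfrac{d_A(d_A+1)}{d(d+1)}$ exactly; summing over the $d_B$ outcomes gives $\tfrac{d_A+1}{d+1}$, hence the diagonal block is $2^{-S_c}\tfrac{d_A+1}{d+1}\to 2^{-S_c}/d_B$. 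Multiplying by $D_k\to d_A^{\,k}/k!$ already gives the claimed value $1/(k!\,2^{S_c+N_B-kN_A})$, so it remains to show that $D_k$ times the off-diagonal block tends to $1$ and cancels the $-1$.

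For the off-diagonal block I would evaluate each $\E_U[p_x(z)p_{x'}(z')|\langle\psi_x(z)|\psi_{x'}(z')\rangle|^{2k}]$ by Weingarten calculus on $U_{AB}^{\otimes 2k}\otimes\bar U_{AB}^{\otimes 2k}$, writing it as $\E_U[|\langle\tilde\psi_x(z)|\tilde\psi_{x'}(z')\rangle|^{2k}]/(p_x(z)p_{x'}(z'))^{k-1}$ with $\ket{\tilde\psi_x(z)}=(I_A\otimes\bra{z}_B)U_{AB}\ket{x}$, and replacing the denominators $p_x(z)^{k-1}$ by their concentration value $d_B^{-(k-1)}$ (the relative fluctuations of $p_x(z)$ are $O(d_A^{-1/2})$; the stated range of $k$ is precisely what makes this replacement, together with the known large-$d$ scaling $\mathrm{Wg}(\sigma,d)=O(d^{-2k-|\sigma|})$, simultaneously accurate). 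For $z\neq z'$ the row and column Kronecker deltas force the two permutations to interchange the two groups of $k$ copies; the dominant term is the two permutations equal, which pins the $d_A$-dimensional ``$A$'' indices into cycles, and the resulting cycle sum $\sum_{\sigma\in S_k}d_A^{\,c(\sigma)}=d_A(d_A+1)\cdots(d_A+k-1)=k!\,D_k$ (with $c(\sigma)$ the number of cycles) reconstructs $\E_U[|\langle\psi_x(z)|\psi_{x'}(z')\rangle|^{2k}]\to1/D_k$, the value for two independent Haar-random states on $A$, together with $\E_U[p_x(z)p_{x'}(z')]\to d_B^{-2}$. Summing over the $\simeq d_B^{\,2}$ pairs $z\neq z'$ and over all $x,x'$ (total $q$-weight $1$) then gives $1/D_k$, while the terms with $z=z',x\neq x'$ carry an extra $d_B^{-1}$ and are subleading.

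Collecting, $\E_U[\Tr((\rho^{(k)})^2)]\to 2^{-S_c}/d_B+1/D_k$, whence $(\Delta_{\mathrm{rms}}^{(k)})^2=D_k\,\E_U[\Tr((\rho^{(k)})^2)]-1\to 2^{-S_c}D_k/d_B\to 1/(k!\,2^{S_c+N_B-kN_A})$, using $D_k\to d_A^{\,k}/k!$ as $N_A\to\infty$. I expect the main obstacle to be the off-diagonal Weingarten bookkeeping: one must check that over all four cases ($x=x'$ or $x\neq x'$, $z=z'$ or $z\neq z'$) no other class of permutation contractions beats the one identified above --- a careful balance of powers of $d_A$ (from free $A$-indices) against powers of $d$ (from the Weingarten weights) --- and that the denominator replacement and the subleading Weingarten corrections are controlled uniformly throughout the regime $k<2^{(N_A+N_B)/4}$.
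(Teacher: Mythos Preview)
Your plan is sound and reaches the correct asymptotics, but it differs from the paper in one essential technical choice: how to handle the normalization denominators $p_x(z)^{k-1}$. You propose to replace $p_x(z)$ by its concentration value $1/d_B$ and then do a straight $2k$-fold Weingarten computation on the unnormalized overlaps. The paper instead uses the \emph{replica trick}: it introduces an auxiliary integer power $m$, computes
\[
G^{(m,k)}_{x,x'}=\E_U\sum_{z,z'}\bigl(p_x(z)p_{x'}(z')\bigr)^{m}\,\bigl|\langle\tilde\psi_x(z)|\tilde\psi_{x'}(z')\rangle\bigr|^{2k}
\]
as a $2(m{+}k)$-fold Weingarten sum, and analytically continues to $m=1-k$ at the end. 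The paper also splits on $x=x'$ versus $x\neq x'$ (quoting the known single-initial-state result of Ippoliti--Ho for the former), rather than on $(x,z)=(x',z')$ versus the rest as you do.

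What each approach buys: your Beta-distribution computation of the diagonal block is more elementary and exact than the paper's route, and your decomposition isolates the $2^{-S_c}/d_B$ term very cleanly. On the other hand, the replica trick packages the denominator and the overlap into a single polynomial Haar moment, so all error control reduces to one uniform large-$d$ Weingarten estimate (the paper's Lemma on $\mathrm{Wg}(1_k,d)$ dominance); this is what makes the subleading bookkeeping systematic across all four $(x,x',z,z')$ cases and across the full range $k<d^{1/4}$. In your approach, the step ``replace $(p_x(z)p_{x'}(z'))^{-(k-1)}$ by $d_B^{2(k-1)}$ inside $\E_U$'' is the place where real work remains: you must control the correlation between $p_x(z)$ and the overlap, and the lower tail of $p_x(z)$, uniformly in $k$ --- which is doable via negative-moment bounds for Beta variables, but is precisely the analysis that the replica trick sidesteps. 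Both routes are heuristic at one point (your concentration replacement; their analytic continuation in $m$), so neither is strictly more rigorous, but the replica computation is closer to a closed-form derivation.
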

The detailed proof of Theorem~\ref{thm:rms_dist} using Weingarten calculus is provided in the SM~\cite{supp}. From Eq.~\eqref{eq:rms_uniformbasis}, we can see that quantum randomness is maximized when $\mathcal{E}_{\text{init}}$ forms a $1$-design, i.e., $S_c = N_A + N_B$ corresponding to the maximum amount of classical entropy in the initial state distribution. \change{The expression for $k=1$ is trivial and derived in the SM~\cite{supp}.} This reveals explicitly the conversion of classical to quantum randomness. Moreover, since $\Delta_\text{rms}^{(k)}$ vanishes exponentially with $N_B + S_c$, Theorem~\ref{thm:rms_dist} implies an (asymptotically) optimal conversion from classical to quantum randomness, where injecting $S_{\text{c}}$ bits of classical entropy yields the same $\Delta_{\text{rms}}^{(k)}$ as adding $S_{\text{c}}$ bath qubits without classical randomness. In other words, classical randomness and bath qubits are interconvertible resources from the perspective of projected ensembles. 

This is particularly advantageous for experimental implementation since $N_B$ is often limited on the quantum hardware due to noise and practical constraints. Additionally, up to $N_B$ bits of classical entropy can be obtained naturally by reusing the bath measurement outcome $z$ as an initial state of $B$ in a subsequent run of the protocol. 
Equation~\eqref{eq:rms_uniformbasis} indicates that injecting classical randomness can reduce the distance between the projected ensemble and the Haar ensemble by a factor exponential in $N=N_A+N_B$ at a very modest cost for the quantum hardware. 
Thus injecting classical randomness significantly reduces the quantum resources needed to sample random quantum states on analog simulators.

Since $S_{\text{c}}$ is at most $N_A+N_B$ bits, this gives an effective bath size of up to $N_A + 2N_B$ qubits, essentially increasing the bath size by more than a factor of 2. This allows one to generate higher-order designs with a maximal achievable $k$ almost twice as large~\cite{supp} as in the original protocol~\cite{Cotler2023emergent,choi2023preparing}. For example, $2$-designs can be generated on $A$ with $N_B < N_A$. In contrast, not even a $1$-design can be achieved for $N_B < N_A$ if no classical randomness is injected.   


We now show numerically that Theorem~\ref{thm:rms_dist} holds even for Hamiltonian dynamics. We plot $\Delta^{(k)}$ against the classical entropy density $S_{\text{c}}/N$ in Fig.~\ref{fig:reinit_scheme}(c) for $U_{AB} = \exp(-iH_0 T)$ with quench time $JT=10^3$, where $H_0$ is the mixed-field 1\text{D} Ising Hamiltonian
\begin{equation}
    H_0 = \sum_{i=1}^{N_A+N_B} (h_x X_i + h_y Y_i + J X_i X_{i+1})
\label{eq:H0_mfim}
\end{equation}
with open boundary conditions. $X_i, Y_i$ and $Z_i$ are Pauli operators for qubit $i$. The computational basis states are eigenstates of $Z_i$, and thus satisfy $\braket{H_0} = 0$ which is necessary for deep thermalization at infinite temperature. We choose $\{h_x, h_y, J\} = \{0.8090,0.9045,1\}$ in the non-integrable and chaotic regime~\cite{kim2013ballistic,kim2014testing}, which was also numerically demonstrated to produce random projected ensembles~\cite{Cotler2023emergent} for $S_{\text{c}} = 0$. We observe that $\Delta^{(k)}$ decreases exponentially with the classical entropy $S_{\text{c}}$, in excellent agreement with our analytical formula~\eqref{eq:rms_uniformbasis} derived for Haar random dynamics. This suggests that the near-optimal conversion from classical to quantum randomness is a more generic feature of quantum chaos.

\textit{Classical to quantum randomness using Hamiltonian disorder.---}
Another way to inject classical randomness is by fixing the initial state and applying a disorder Hamiltonian $H_d(\vec{\xi}\,)$ with $2^{S_c}$ independent disorder realizations $\vec{\xi}$, on top of a fixed chaotic Hamiltonian $H_0$~\cite{perrin2024dynamic}; see Fig.~\ref{fig:disorder}(a). Such a setup is motivated both by experimental considerations of analog quantum simulators which may more readily apply time-constant disordered potentials rather than localized bit-flip operations, and also for its theoretical connections to many-body localization, as we shall explore below. 

As an example, we consider the Hamiltonian 
\begin{equation}
H = H_0 + H_d(\vec{\xi}\,)=H_0+\sum_{i=1}^{N_A+N_B}\xi_i X_i
\label{eq:mfim_disorder}
\end{equation}
where $H_0$ is given by Eq.~\eqref{eq:H0_mfim} and disorder $\xi_i \overset{\text{i.i.d.}}{\sim} \text{Uniform}[-W,W]$ with disorder strength $W/J$. Figure~\ref{fig:disorder}(b) shows the behavior of $\Delta^{(2)}$ against disorder strength $W/J$ (similar behavior is observed for other $k > 2$), with $2^{N_A + N_B}$ disorder realizations. The analytical values of $\Delta_{\text{rms}}^{(k)}$ in Eq.~\eqref{eq:rms_uniformbasis} for $S_{\text{c}} = 0$ and $S_c = N$ are indicated by dashed lines as a benchmark. At very weak disorder strengths $W/J \to 0$, $\Delta^{(k)}$ converges for large $JT$ near the benchmark value with $S_{\text{c}} = 0$, consistent with previous results~\cite{Cotler2023emergent}. As the disorder strength increases, classical randomness gets converted into quantum randomness, causing $\Delta^{(k)}$ to decrease. At sufficiently long evolution times, $\Delta^{(k)}$ can become close to the benchmark value with $S_{\text{c}} = N$. This signifies a near-optimal conversion from classical to quantum randomness. Our analytical formula~\eqref{eq:rms_uniformbasis} works well here even though classical randomness is injected via the dynamics instead of the initial state, which demonstrates the generality of our protocol. 

At strong disorder strengths $W/J \gg 1$, $\Delta^{(k)}$ increases and saturates. We attribute this behavior to the fact that for strong disorder, the projected ensemble behaves like an ensemble of random product states (at best a low-randomness $1$-design), due to many-body localization effects which become relevant when $W/J \gtrsim 1$. We note that a more experimentally accessible scheme of adding a random global detuning $H_d = \xi \sum_i X_i$ where $\xi \sim \text{Uniform}[-W,W]$ (instead of spatially inhomogeneous disorder) also yields qualitatively similar results (not shown). 

The crossover between the benchmark values at $S_{\text{c}} = 0$ and $S_{\text{c}} = N$ as disorder strength increases can be roughly estimated via a simple argument. Firstly, we need a quench time $JT \gtrsim N_A$ to get volume-law entanglement between the system and bath qubits. To contribute appreciably to the randomness of the projected ensemble, the effects of disorder must be integrated over a time $T$ such that $WT\gtrsim 1$. On the other hand, for efficient conversion of classical to quantum randomness, we must avoid the many-body localized regime $W/J\gtrsim 1$~\cite{supp}. Therefore, for $JT\gg N_A$ we expect nearly maximal conversion of classical to quantum randomness for $1/JT \lesssim W/J\lesssim 1$. The behavior of $\Delta^{(2)}$ shown in Fig.~\ref{fig:disorder}(b) is consistent with this expectation. An interesting future direction would be to characterize the many-body localization transition via the projected ensemble.


\textit{Application: Classical shadow tomography.---}
A practical application of our protocol is classical shadow tomography~\cite{huang2020predicting} for learning expectation values of observables in unknown states. The state is scrambled with a unitary, followed by measurements in the computational basis. 
From the outcomes and the inverted scrambling dynamics, one can construct a classical representation of the unknown state, which can be used to accurately estimate the expectation values of many observables. In Refs.~\cite{tran2023measuring,mcginley2023shadow}, it was proposed to use projected ensembles to generate the scrambling dynamics for shadow tomography.
However, the estimation accuracy  
crucially depends on the quantum randomness of the projected ensemble~\cite{mcginley2023shadow}.
Now we show that one can gain an exponential increase in accuracy by adding classical randomness, without incurring extra cost on the quantum computer. 

The initial state is $\rho_A \otimes \ket{x}\bra{x}_B$, where $\rho_A$ is the unknown state to be learned. Classical randomness is injected by randomly initializing the $B$ subsystem in computational basis states $\ket{x}$, similar to the setup in Fig.~\ref{fig:reinit_scheme}(a), up to a maximum of $S_{\text{c}} = 2^{N_B}$ bits. A unitary $U$ is then applied on the full system. The projected ensemble $\mathcal{E}$ is constructed by measuring subsystem $B$ in the computational basis $\{z_B\}$. The states in $\mathcal{E}$ are then measured in the computational basis $\{z_A\}$. The measurement outcomes $(z_A,z_B)$ occur with probability $p_x(z_A,z_B) =\bra{z_A,z_B} U (\rho_A \otimes \ket{x}\bra{x})U^\dag \ket{z_A,z_B}$. In the classical post-processing, we construct the shadow estimator as
\begin{equation}
    \hat{\rho}_{A,x} = \frac{\left(2^{N_A} + 1\right)\bra{x}U^\dag  \ket{z_A,z_B}\bra{z_A,z_B}U \ket{x}}{\text{Tr}_A(\bra{x}U^\dag  \ket{z_A,z_B}\bra{z_A,z_B}U \ket{x})} - I_A
\end{equation}
which satisfies the normalization $\text{Tr} (\hat{\rho}_{A,x}) = 1$ \change{(see SM~\cite{supp} for a detailed explanation of the protocol). Intuitively, $\hat{\rho}_{A,x}$ is constructed to `undo' the scrambling of $\rho_A$, in order to estimate $\rho_A$}. This gives the estimator \change{$\hat{O} = \Tr (O\hat{\rho}_{A,x}) \approx \Tr(O\rho_A)$} for the observable $O$, averaged over $L$ measurement shots. The shadow estimator can be analogously defined if classical randomness is instead injected by adding random disorder such as in Eq.~\eqref{eq:mfim_disorder} to the dynamics for a fixed initial state. The estimation error is $\delta O = |\hat{O} - \Tr(O\rho_A)|$, with the bias (i.e., systematic) error given by $\delta O$ as $L \to \infty$.
$\hat{\rho}_{A}$ is an unbiased estimator of $\rho_A$ if the projected ensemble forms an exact $2$-design~\cite{mcginley2023shadow}. For approximate $2$-designs, the estimation incurs a bias error that grows with the distance $\Delta^{(2)}$ from a $2$-design. 
Thus, from Theorem~\ref{thm:rms_dist}, we expect the bias error to be exponentially reduced by increasing $S_c$. 

\begin{figure}
\centering
\includegraphics[width=0.47\textwidth]{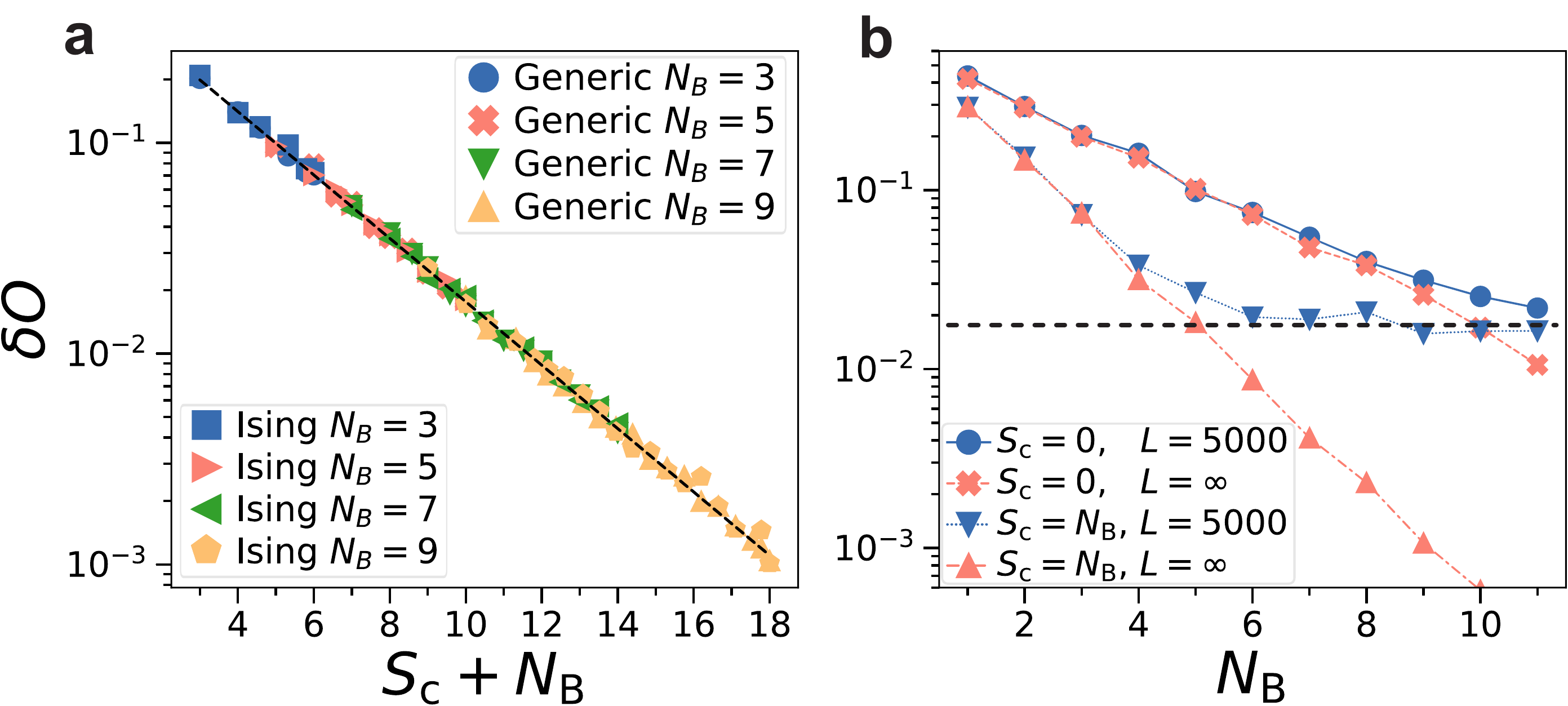}
\caption{(a) Average bias error $\delta O$ (with $L \to \infty$ measurement shots) against classical entropy $S_{\text{c}}$ plus number of bath qubits $N_\text{B}$. For `Generic', we randomize the initial state of the $N_\text{B}$ bath qubits and evolve with a fixed Haar random unitary. 
For `Ising', we evolve with~\eqref{eq:mfim_disorder} for $JT=100$,  where the initial state is fixed and the disorder of the Hamiltonian is randomized for $2^{S_c}$ realizations with $W/J=0.5$. Dashed line depicts $\delta O \propto 2^{-(S_c+N_B)/2}$.
Error is nearly independent of whether randomness originates from measuring bath qubits, adding random initialization, or disorder in the time-evolution.
(b) Error $\delta O$ against $N_\text{B}$ for different $L$ and $S_{\text{c}}$. Black dashed line shows the error for $L=5000$ of an unbiased classical shadow protocol~\cite{huang2020predicting}.
In both plots, we have $N_\text{A}=1$ and average over 100 randomly chosen instances of unitaries to suppress statistical fluctuations (see SM~\cite{supp} for details).}
 \label{fig:shadow}
\end{figure}

This is demonstrated in Fig~\ref{fig:shadow}(a), which shows the average bias error against the classical entropy $S_{\text{c}}$ injected into the projected ensemble, with randomly chosen Pauli operators $O$. 
We numerically simulate two different settings of classical randomness: (I) Bath qubits randomly initialized from the uniform distribution over $S_{\text{c}}$ classical bits, with $U$ being a fixed Haar random unitary. (II) Bath qubits fixed in the $\ket{0}^{\otimes N_B}$ state and evolved under the disordered Hamiltonian~\eqref{eq:mfim_disorder} sampled from a set of $2^{S_{\text{c}}}$ randomly chosen disorder realizations~\cite{mcginley2023shadow}.

Similar to our analytical prediction for $\Delta_{\text{rms}}^{(k)}$ in Theorem~\ref{thm:rms_dist}, we observe that the bias error decreases exponentially as $\sim 2^{-(N_B + S_\text{c})/2}$ in both settings of classical randomness. Injecting $S_{\text{c}}$ bits of entropy yields nearly the same reduction in bias error as having additional $S_{\text{c}}$ bath qubits without classical randomness, arising from the near-optimal conversion from classical to quantum randomness. 
In Fig.~\ref{fig:shadow}(b), we plot $\delta O$ against $N_B$, with classical randomness in the setting (I). We find that for a finite $L$, the error converges at large $N_\text{B}$ to that expected of an unbiased shadow estimator (black dashed line)~\cite{huang2020predicting}. Increasing $S_c$ causes the error to decrease significantly. Our protocol remains computationally efficient and scalable at large $N_A$, using the `patched quench' setup~\cite{tran2023measuring}, see SM~\cite{supp} for details. 

\textit{Discussion.---} 
We have shown how quantum chaos can convert classical randomness into quantum randomness, by injecting classical entropy into the deep thermalization framework. For generic chaotic dynamics, each bit of classical entropy generates nearly as much quantum randomness as an additional bath qubit. From a practical viewpoint, injecting classical entropy allows one to improve the generation of approximate $k$-designs, with the maximum achievable $k$ almost doubled, at only a small additional cost to the quantum hardware. This enhancement in quantum randomness directly translates to better performance for applications which utilize random quantum dynamics as a resource, as we have demonstrated through the example of shadow tomography. Our scheme is easier to implement experimentally on many analog quantum simulators, as compared to well-established schemes using RUCs, by leveraging the inherent complexity of many-body quantum dynamics. From a theoretical perspective, our work raises interesting questions about whether injecting classical randomness improves the convergence of other related protocols such as finite-temperature projected ensembles and temporal ensembles to their respective maximum entropy ensembles~\cite{mark2024maximum}, about the cross-over between ergodic and localized behavior, and about the universal features of quantum chaos.

\begin{acknowledgements}
\textit{Acknowledgements.---}
We thank Abhishek Anand, Jielun Chen, Wen Wei Ho, Daniel Mark, and Richard Tsai for helpful discussions. We acknowledge support from the DARPA ONISQ program (W911NF2010021), the DOE (DE-SC0021951), the Army Research Office MURI program (W911NF2010136), the NSF CAREER award (1753386), the Institute for Quantum Information and Matter, an NSF Physics Frontiers Center (NSF Grant PHY-1733907), and the Technology Innovation Institute (TII).
J.P. also acknowledges support from the U.S. Department of Energy Office of Science, Office of Advanced Scientific Computing Research (DE-NA0003525, DE-SC0020290), and the U.S. Department of Energy, Office of Science, National Quantum Information Science Research Centers, Quantum Systems Accelerator.
%

\end{acknowledgements}
 \let\oldaddcontentsline\addcontentsline
\renewcommand{\addcontentsline}[3]{}

\bibliography{bib}

\begin{thebibliography}{59}%
\makeatletter
\providecommand \@ifxundefined [1]{%
 \@ifx{#1\undefined}
}%
\providecommand \@ifnum [1]{%
 \ifnum #1\expandafter \@firstoftwo
 \else \expandafter \@secondoftwo
 \fi
}%
\providecommand \@ifx [1]{%
 \ifx #1\expandafter \@firstoftwo
 \else \expandafter \@secondoftwo
 \fi
}%
\providecommand \natexlab [1]{#1}%
\providecommand \enquote  [1]{``#1''}%
\providecommand \bibnamefont  [1]{#1}%
\providecommand \bibfnamefont [1]{#1}%
\providecommand \citenamefont [1]{#1}%
\providecommand \href@noop [0]{\@secondoftwo}%
\providecommand \href [0]{\begingroup \@sanitize@url \@href}%
\providecommand \@href[1]{\@@startlink{#1}\@@href}%
\providecommand \@@href[1]{\endgroup#1\@@endlink}%
\providecommand \@sanitize@url [0]{\catcode `\\12\catcode `\$12\catcode
  `\&12\catcode `\#12\catcode `\^12\catcode `\_12\catcode `\%12\relax}%
\providecommand \@@startlink[1]{}%
\providecommand \@@endlink[0]{}%
\providecommand \url  [0]{\begingroup\@sanitize@url \@url }%
\providecommand \@url [1]{\endgroup\@href {#1}{\urlprefix }}%
\providecommand \urlprefix  [0]{URL }%
\providecommand \Eprint [0]{\href }%
\providecommand \doibase [0]{https://doi.org/}%
\providecommand \selectlanguage [0]{\@gobble}%
\providecommand \bibinfo  [0]{\@secondoftwo}%
\providecommand \bibfield  [0]{\@secondoftwo}%
\providecommand \translation [1]{[#1]}%
\providecommand \BibitemOpen [0]{}%
\providecommand \bibitemStop [0]{}%
\providecommand \bibitemNoStop [0]{.\EOS\space}%
\providecommand \EOS [0]{\spacefactor3000\relax}%
\providecommand \BibitemShut  [1]{\csname bibitem#1\endcsname}%
\let\auto@bib@innerbib\@empty
\bibitem [{\citenamefont {Cotler}\ \emph {et~al.}(2017)\citenamefont {Cotler},
  \citenamefont {Hunter-Jones}, \citenamefont {Liu},\ and\ \citenamefont
  {Yoshida}}]{cotler2017chaos}%
  \BibitemOpen
  \bibfield  {author} {\bibinfo {author} {\bibfnamefont {J.}~\bibnamefont
  {Cotler}}, \bibinfo {author} {\bibfnamefont {N.}~\bibnamefont
  {Hunter-Jones}}, \bibinfo {author} {\bibfnamefont {J.}~\bibnamefont {Liu}},\
  and\ \bibinfo {author} {\bibfnamefont {B.}~\bibnamefont {Yoshida}},\
  }\bibfield  {title} {\bibinfo {title} {Chaos, complexity, and random
  matrices},\ }\href {https://doi.org/10.1007/JHEP11(2017)048} {\bibfield
  {journal} {\bibinfo  {journal} {J. High Energy Phys.}\ }\textbf {\bibinfo
  {volume} {2017}}\bibinfo  {number} { (11)},\ \bibinfo {pages}
  {48}}\BibitemShut {NoStop}%
\bibitem [{\citenamefont {Hosur}\ \emph {et~al.}(2016)\citenamefont {Hosur},
  \citenamefont {Qi}, \citenamefont {Roberts},\ and\ \citenamefont
  {Yoshida}}]{hosur2016chaos}%
  \BibitemOpen
\bibfield  {number} {  }\bibfield  {author} {\bibinfo {author} {\bibfnamefont
  {P.}~\bibnamefont {Hosur}}, \bibinfo {author} {\bibfnamefont {X.-L.}\
  \bibnamefont {Qi}}, \bibinfo {author} {\bibfnamefont {D.~A.}\ \bibnamefont
  {Roberts}},\ and\ \bibinfo {author} {\bibfnamefont {B.}~\bibnamefont
  {Yoshida}},\ }\bibfield  {title} {\bibinfo {title} {Chaos in quantum
  channels},\ }\href {https://doi.org/10.1007/JHEP02(2016)004} {\bibfield
  {journal} {\bibinfo  {journal} {J. High Energy Phys.}\ }\textbf {\bibinfo
  {volume} {2016}}\bibinfo  {number} { (2)},\ \bibinfo {pages} {4}}\BibitemShut
  {NoStop}%
\bibitem [{\citenamefont {Roberts}\ and\ \citenamefont
  {Yoshida}(2017)}]{roberts2017chaos}%
  \BibitemOpen
\bibfield  {number} {  }\bibfield  {author} {\bibinfo {author} {\bibfnamefont
  {D.~A.}\ \bibnamefont {Roberts}}\ and\ \bibinfo {author} {\bibfnamefont
  {B.}~\bibnamefont {Yoshida}},\ }\bibfield  {title} {\bibinfo {title} {Chaos
  and complexity by design},\ }\href {https://doi.org/10.1007/JHEP04(2017)121}
  {\bibfield  {journal} {\bibinfo  {journal} {J. High Energy Phys.}\ }\textbf
  {\bibinfo {volume} {2017}}\bibinfo  {number} { (4)},\ \bibinfo {pages}
  {121}}\BibitemShut {NoStop}%
\bibitem [{\citenamefont {Page}(1993)}]{page1993information}%
  \BibitemOpen
\bibfield  {number} {  }\bibfield  {author} {\bibinfo {author} {\bibfnamefont
  {D.~N.}\ \bibnamefont {Page}},\ }\bibfield  {title} {\bibinfo {title}
  {Information in black hole radiation},\ }\href
  {https://doi.org/10.1103/PhysRevLett.71.3743} {\bibfield  {journal} {\bibinfo
   {journal} {Phys. Rev. Lett.}\ }\textbf {\bibinfo {volume} {71}},\ \bibinfo
  {pages} {3743} (\bibinfo {year} {1993})}\BibitemShut {NoStop}%
\bibitem [{\citenamefont {Page}(2013)}]{page2013time}%
  \BibitemOpen
  \bibfield  {author} {\bibinfo {author} {\bibfnamefont {D.~N.}\ \bibnamefont
  {Page}},\ }\bibfield  {title} {\bibinfo {title} {Time dependence of hawking
  radiation entropy},\ }\href {https://doi.org/10.1088/1475-7516/2013/09/028}
  {\bibfield  {journal} {\bibinfo  {journal} {J. Cosmol. Astropart. Phys.}\
  }\textbf {\bibinfo {volume} {2013}}\bibinfo  {number} { (09)},\ \bibinfo
  {pages} {028}}\BibitemShut {NoStop}%
\bibitem [{\citenamefont {Hayden}\ and\ \citenamefont
  {Preskill}(2007)}]{hayden2007black}%
  \BibitemOpen
\bibfield  {number} {  }\bibfield  {author} {\bibinfo {author} {\bibfnamefont
  {P.}~\bibnamefont {Hayden}}\ and\ \bibinfo {author} {\bibfnamefont
  {J.}~\bibnamefont {Preskill}},\ }\bibfield  {title} {\bibinfo {title} {Black
  holes as mirrors: quantum information in random subsystems},\ }\href
  {https://doi.org/10.1088/1126-6708/2007/09/120} {\bibfield  {journal}
  {\bibinfo  {journal} {J. High Energy Phys.}\ }\textbf {\bibinfo {volume}
  {2007}}\bibinfo  {number} { (09)},\ \bibinfo {pages} {120}}\BibitemShut
  {NoStop}%
\bibitem [{\citenamefont {Knill}\ \emph {et~al.}(2008)\citenamefont {Knill},
  \citenamefont {Leibfried}, \citenamefont {Reichle}, \citenamefont {Britton},
  \citenamefont {Blakestad}, \citenamefont {Jost}, \citenamefont {Langer},
  \citenamefont {Ozeri}, \citenamefont {Seidelin},\ and\ \citenamefont
  {Wineland}}]{knill2008randomized}%
  \BibitemOpen
\bibfield  {number} {  }\bibfield  {author} {\bibinfo {author} {\bibfnamefont
  {E.}~\bibnamefont {Knill}}, \bibinfo {author} {\bibfnamefont
  {D.}~\bibnamefont {Leibfried}}, \bibinfo {author} {\bibfnamefont
  {R.}~\bibnamefont {Reichle}}, \bibinfo {author} {\bibfnamefont
  {J.}~\bibnamefont {Britton}}, \bibinfo {author} {\bibfnamefont {R.~B.}\
  \bibnamefont {Blakestad}}, \bibinfo {author} {\bibfnamefont {J.~D.}\
  \bibnamefont {Jost}}, \bibinfo {author} {\bibfnamefont {C.}~\bibnamefont
  {Langer}}, \bibinfo {author} {\bibfnamefont {R.}~\bibnamefont {Ozeri}},
  \bibinfo {author} {\bibfnamefont {S.}~\bibnamefont {Seidelin}},\ and\
  \bibinfo {author} {\bibfnamefont {D.~J.}\ \bibnamefont {Wineland}},\
  }\bibfield  {title} {\bibinfo {title} {Randomized benchmarking of quantum
  gates},\ }\href {https://doi.org/10.1103/PhysRevA.77.012307} {\bibfield
  {journal} {\bibinfo  {journal} {Phys. Rev. A}\ }\textbf {\bibinfo {volume}
  {77}},\ \bibinfo {pages} {012307} (\bibinfo {year} {2008})}\BibitemShut
  {NoStop}%
\bibitem [{\citenamefont {Dankert}\ \emph {et~al.}(2009)\citenamefont
  {Dankert}, \citenamefont {Cleve}, \citenamefont {Emerson},\ and\
  \citenamefont {Livine}}]{dankert2009exact}%
  \BibitemOpen
  \bibfield  {author} {\bibinfo {author} {\bibfnamefont {C.}~\bibnamefont
  {Dankert}}, \bibinfo {author} {\bibfnamefont {R.}~\bibnamefont {Cleve}},
  \bibinfo {author} {\bibfnamefont {J.}~\bibnamefont {Emerson}},\ and\ \bibinfo
  {author} {\bibfnamefont {E.}~\bibnamefont {Livine}},\ }\bibfield  {title}
  {\bibinfo {title} {Exact and approximate unitary 2-designs and their
  application to fidelity estimation},\ }\href
  {https://doi.org/10.1103/PhysRevA.80.012304} {\bibfield  {journal} {\bibinfo
  {journal} {Phys. Rev. A}\ }\textbf {\bibinfo {volume} {80}},\ \bibinfo
  {pages} {012304} (\bibinfo {year} {2009})}\BibitemShut {NoStop}%
\bibitem [{\citenamefont {Hayden}\ \emph {et~al.}(2008)\citenamefont {Hayden},
  \citenamefont {Horodecki}, \citenamefont {Winter},\ and\ \citenamefont
  {Yard}}]{hayden2008decoupling}%
  \BibitemOpen
  \bibfield  {author} {\bibinfo {author} {\bibfnamefont {P.}~\bibnamefont
  {Hayden}}, \bibinfo {author} {\bibfnamefont {M.}~\bibnamefont {Horodecki}},
  \bibinfo {author} {\bibfnamefont {A.}~\bibnamefont {Winter}},\ and\ \bibinfo
  {author} {\bibfnamefont {J.}~\bibnamefont {Yard}},\ }\bibfield  {title}
  {\bibinfo {title} {A decoupling approach to the quantum capacity},\ }\href
  {https://doi.org/10.1142/S1230161208000043} {\bibfield  {journal} {\bibinfo
  {journal} {Open Syst. Inf. Dyn.}\ }\textbf {\bibinfo {volume} {15}},\
  \bibinfo {pages} {7} (\bibinfo {year} {2008})}\BibitemShut {NoStop}%
\bibitem [{\citenamefont {Kimmel}\ and\ \citenamefont
  {Liu}(2017)}]{shelby2017phase}%
  \BibitemOpen
  \bibfield  {author} {\bibinfo {author} {\bibfnamefont {S.}~\bibnamefont
  {Kimmel}}\ and\ \bibinfo {author} {\bibfnamefont {Y.-K.}\ \bibnamefont
  {Liu}},\ }\bibfield  {title} {\bibinfo {title} {Phase retrieval using unitary
  2-designs},\ }in\ \href {https://doi.org/10.1109/SAMPTA.2017.8024414} {\emph
  {\bibinfo {booktitle} {2017 International Conference on Sampling Theory and
  Applications (SampTA)}}}\ (\bibinfo {year} {2017})\ pp.\ \bibinfo {pages}
  {345--349}\BibitemShut {NoStop}%
\bibitem [{\citenamefont {Huang}\ \emph {et~al.}(2020)\citenamefont {Huang},
  \citenamefont {Kueng},\ and\ \citenamefont {Preskill}}]{huang2020predicting}%
  \BibitemOpen
  \bibfield  {author} {\bibinfo {author} {\bibfnamefont {H.-Y.}\ \bibnamefont
  {Huang}}, \bibinfo {author} {\bibfnamefont {R.}~\bibnamefont {Kueng}},\ and\
  \bibinfo {author} {\bibfnamefont {J.}~\bibnamefont {Preskill}},\ }\bibfield
  {title} {\bibinfo {title} {Predicting many properties of a quantum system
  from very few measurements},\ }\href
  {https://doi.org/10.1038/s41567-020-0932-7} {\bibfield  {journal} {\bibinfo
  {journal} {Nat. Phys.}\ }\textbf {\bibinfo {volume} {16}},\ \bibinfo {pages}
  {1050} (\bibinfo {year} {2020})}\BibitemShut {NoStop}%
\bibitem [{\citenamefont {McGinley}\ and\ \citenamefont
  {Fava}(2023)}]{mcginley2023shadow}%
  \BibitemOpen
  \bibfield  {author} {\bibinfo {author} {\bibfnamefont {M.}~\bibnamefont
  {McGinley}}\ and\ \bibinfo {author} {\bibfnamefont {M.}~\bibnamefont
  {Fava}},\ }\bibfield  {title} {\bibinfo {title} {Shadow tomography from
  emergent state designs in analog quantum simulators},\ }\href
  {https://doi.org/10.1103/PhysRevLett.131.160601} {\bibfield  {journal}
  {\bibinfo  {journal} {Phys. Rev. Lett.}\ }\textbf {\bibinfo {volume} {131}},\
  \bibinfo {pages} {160601} (\bibinfo {year} {2023})}\BibitemShut {NoStop}%
\bibitem [{\citenamefont {DiVincenzo}\ \emph {et~al.}(2002)\citenamefont
  {DiVincenzo}, \citenamefont {Leung},\ and\ \citenamefont
  {Terhal}}]{divincenzo2002quantum}%
  \BibitemOpen
  \bibfield  {author} {\bibinfo {author} {\bibfnamefont {D.}~\bibnamefont
  {DiVincenzo}}, \bibinfo {author} {\bibfnamefont {D.}~\bibnamefont {Leung}},\
  and\ \bibinfo {author} {\bibfnamefont {B.}~\bibnamefont {Terhal}},\
  }\bibfield  {title} {\bibinfo {title} {Quantum data hiding},\ }\href
  {https://doi.org/10.1109/18.985948} {\bibfield  {journal} {\bibinfo
  {journal} {IEEE Trans. Inf. Theory}\ }\textbf {\bibinfo {volume} {48}},\
  \bibinfo {pages} {580} (\bibinfo {year} {2002})}\BibitemShut {NoStop}%
\bibitem [{\citenamefont {Ananth}\ \emph {et~al.}(2024)\citenamefont {Ananth},
  \citenamefont {Kaleoglu},\ and\ \citenamefont
  {Yuen}}]{ananth2024simultaneous}%
  \BibitemOpen
  \bibfield  {author} {\bibinfo {author} {\bibfnamefont {P.}~\bibnamefont
  {Ananth}}, \bibinfo {author} {\bibfnamefont {F.}~\bibnamefont {Kaleoglu}},\
  and\ \bibinfo {author} {\bibfnamefont {H.}~\bibnamefont {Yuen}},\ }\href
  {https://arxiv.org/abs/2405.10274} {\bibinfo {title} {Simultaneous haar
  indistinguishability with applications to unclonable cryptography}} (\bibinfo
  {year} {2024}),\ \Eprint {https://arxiv.org/abs/2405.10274} {arXiv:2405.10274
  [quant-ph]} \BibitemShut {NoStop}%
\bibitem [{\citenamefont {Arute}\ \emph {et~al.}(2019)\citenamefont {Arute},
  \citenamefont {Arya}, \citenamefont {Babbush}, \citenamefont {Bacon},
  \citenamefont {Bardin}, \citenamefont {Barends}, \citenamefont {Biswas},
  \citenamefont {Boixo}, \citenamefont {Brandao}, \citenamefont {Buell} \emph
  {et~al.}}]{arute2019quantum}%
  \BibitemOpen
  \bibfield  {author} {\bibinfo {author} {\bibfnamefont {F.}~\bibnamefont
  {Arute}}, \bibinfo {author} {\bibfnamefont {K.}~\bibnamefont {Arya}},
  \bibinfo {author} {\bibfnamefont {R.}~\bibnamefont {Babbush}}, \bibinfo
  {author} {\bibfnamefont {D.}~\bibnamefont {Bacon}}, \bibinfo {author}
  {\bibfnamefont {J.~C.}\ \bibnamefont {Bardin}}, \bibinfo {author}
  {\bibfnamefont {R.}~\bibnamefont {Barends}}, \bibinfo {author} {\bibfnamefont
  {R.}~\bibnamefont {Biswas}}, \bibinfo {author} {\bibfnamefont
  {S.}~\bibnamefont {Boixo}}, \bibinfo {author} {\bibfnamefont {F.~G.}\
  \bibnamefont {Brandao}}, \bibinfo {author} {\bibfnamefont {D.~A.}\
  \bibnamefont {Buell}}, \emph {et~al.},\ }\bibfield  {title} {\bibinfo {title}
  {Quantum supremacy using a programmable superconducting processor},\ }\href
  {https://doi.org/10.1038/s41586-019-1666-5} {\bibfield  {journal} {\bibinfo
  {journal} {Nature}\ }\textbf {\bibinfo {volume} {574}},\ \bibinfo {pages}
  {505} (\bibinfo {year} {2019})}\BibitemShut {NoStop}%
\bibitem [{\citenamefont {Morvan}\ \emph {et~al.}(2023)\citenamefont {Morvan},
  \citenamefont {Villalonga}, \citenamefont {Mi}, \citenamefont {Mandrà},
  \citenamefont {Bengtsson}, \citenamefont {Klimov}, \citenamefont {Chen},
  \citenamefont {Hong}, \citenamefont {Erickson}, \citenamefont {Drozdov} \emph
  {et~al.}}]{morvan2023phase}%
  \BibitemOpen
  \bibfield  {author} {\bibinfo {author} {\bibfnamefont {A.}~\bibnamefont
  {Morvan}}, \bibinfo {author} {\bibfnamefont {B.}~\bibnamefont {Villalonga}},
  \bibinfo {author} {\bibfnamefont {X.}~\bibnamefont {Mi}}, \bibinfo {author}
  {\bibfnamefont {S.}~\bibnamefont {Mandrà}}, \bibinfo {author} {\bibfnamefont
  {A.}~\bibnamefont {Bengtsson}}, \bibinfo {author} {\bibfnamefont {P.~V.}\
  \bibnamefont {Klimov}}, \bibinfo {author} {\bibfnamefont {Z.}~\bibnamefont
  {Chen}}, \bibinfo {author} {\bibfnamefont {S.}~\bibnamefont {Hong}}, \bibinfo
  {author} {\bibfnamefont {C.}~\bibnamefont {Erickson}}, \bibinfo {author}
  {\bibfnamefont {I.~K.}\ \bibnamefont {Drozdov}}, \emph {et~al.},\ }\href
  {https://arxiv.org/abs/2304.11119} {\bibinfo {title} {Phase transition in
  random circuit sampling}} (\bibinfo {year} {2023}),\ \Eprint
  {https://arxiv.org/abs/2304.11119} {arXiv:2304.11119 [quant-ph]} \BibitemShut
  {NoStop}%
\bibitem [{\citenamefont {Zhu}\ \emph {et~al.}(2022)\citenamefont {Zhu},
  \citenamefont {Cao}, \citenamefont {Chen}, \citenamefont {Chen},
  \citenamefont {Chen}, \citenamefont {Chung}, \citenamefont {Deng},
  \citenamefont {Du}, \citenamefont {Fan}, \citenamefont {Gong} \emph
  {et~al.}}]{zhu2022quantum}%
  \BibitemOpen
  \bibfield  {author} {\bibinfo {author} {\bibfnamefont {Q.}~\bibnamefont
  {Zhu}}, \bibinfo {author} {\bibfnamefont {S.}~\bibnamefont {Cao}}, \bibinfo
  {author} {\bibfnamefont {F.}~\bibnamefont {Chen}}, \bibinfo {author}
  {\bibfnamefont {M.-C.}\ \bibnamefont {Chen}}, \bibinfo {author}
  {\bibfnamefont {X.}~\bibnamefont {Chen}}, \bibinfo {author} {\bibfnamefont
  {T.-H.}\ \bibnamefont {Chung}}, \bibinfo {author} {\bibfnamefont
  {H.}~\bibnamefont {Deng}}, \bibinfo {author} {\bibfnamefont {Y.}~\bibnamefont
  {Du}}, \bibinfo {author} {\bibfnamefont {D.}~\bibnamefont {Fan}}, \bibinfo
  {author} {\bibfnamefont {M.}~\bibnamefont {Gong}}, \emph {et~al.},\
  }\bibfield  {title} {\bibinfo {title} {Quantum computational advantage via
  60-qubit 24-cycle random circuit sampling},\ }\href
  {https://doi.org/https://doi.org/10.1016/j.scib.2021.10.017} {\bibfield
  {journal} {\bibinfo  {journal} {Sci. Bull.}\ }\textbf {\bibinfo {volume}
  {67}},\ \bibinfo {pages} {240} (\bibinfo {year} {2022})}\BibitemShut
  {NoStop}%
\bibitem [{\citenamefont {Shaw}\ \emph {et~al.}(2024)\citenamefont {Shaw},
  \citenamefont {Chen}, \citenamefont {Choi}, \citenamefont {Mark},
  \citenamefont {Scholl}, \citenamefont {Finkelstein}, \citenamefont {Elben},
  \citenamefont {Choi},\ and\ \citenamefont {Endres}}]{shaw2024benchmarking}%
  \BibitemOpen
  \bibfield  {author} {\bibinfo {author} {\bibfnamefont {A.~L.}\ \bibnamefont
  {Shaw}}, \bibinfo {author} {\bibfnamefont {Z.}~\bibnamefont {Chen}}, \bibinfo
  {author} {\bibfnamefont {J.}~\bibnamefont {Choi}}, \bibinfo {author}
  {\bibfnamefont {D.~K.}\ \bibnamefont {Mark}}, \bibinfo {author}
  {\bibfnamefont {P.}~\bibnamefont {Scholl}}, \bibinfo {author} {\bibfnamefont
  {R.}~\bibnamefont {Finkelstein}}, \bibinfo {author} {\bibfnamefont
  {A.}~\bibnamefont {Elben}}, \bibinfo {author} {\bibfnamefont
  {S.}~\bibnamefont {Choi}},\ and\ \bibinfo {author} {\bibfnamefont
  {M.}~\bibnamefont {Endres}},\ }\bibfield  {title} {\bibinfo {title}
  {Benchmarking highly entangled states on a 60-atom analogue quantum
  simulator},\ }\href {https://doi.org/10.1038/s41586-024-07173-x} {\bibfield
  {journal} {\bibinfo  {journal} {Nature}\ }\textbf {\bibinfo {volume} {628}},\
  \bibinfo {pages} {71} (\bibinfo {year} {2024})}\BibitemShut {NoStop}%
\bibitem [{\citenamefont {Gross}\ \emph {et~al.}(2007)\citenamefont {Gross},
  \citenamefont {Audenaert},\ and\ \citenamefont {Eisert}}]{gross2007evenly}%
  \BibitemOpen
  \bibfield  {author} {\bibinfo {author} {\bibfnamefont {D.}~\bibnamefont
  {Gross}}, \bibinfo {author} {\bibfnamefont {K.}~\bibnamefont {Audenaert}},\
  and\ \bibinfo {author} {\bibfnamefont {J.}~\bibnamefont {Eisert}},\
  }\bibfield  {title} {\bibinfo {title} {{Evenly distributed unitaries: On the
  structure of unitary designs}},\ }\href {https://doi.org/10.1063/1.2716992}
  {\bibfield  {journal} {\bibinfo  {journal} {J. Math. Phys.}\ }\textbf
  {\bibinfo {volume} {48}},\ \bibinfo {pages} {052104} (\bibinfo {year}
  {2007})}\BibitemShut {NoStop}%
\bibitem [{\citenamefont {Brand{\~a}o}\ \emph {et~al.}(2016)\citenamefont
  {Brand{\~a}o}, \citenamefont {Harrow},\ and\ \citenamefont
  {Horodecki}}]{brandao2016local}%
  \BibitemOpen
  \bibfield  {author} {\bibinfo {author} {\bibfnamefont {F.~G. S.~L.}\
  \bibnamefont {Brand{\~a}o}}, \bibinfo {author} {\bibfnamefont {A.~W.}\
  \bibnamefont {Harrow}},\ and\ \bibinfo {author} {\bibfnamefont
  {M.}~\bibnamefont {Horodecki}},\ }\bibfield  {title} {\bibinfo {title} {Local
  random quantum circuits are approximate polynomial-designs},\ }\href
  {https://doi.org/10.1007/s00220-016-2706-8} {\bibfield  {journal} {\bibinfo
  {journal} {Commun. Math. Phys.}\ }\textbf {\bibinfo {volume} {346}},\
  \bibinfo {pages} {397} (\bibinfo {year} {2016})}\BibitemShut {NoStop}%
\bibitem [{\citenamefont {Haferkamp}(2022)}]{haferkamp2022random}%
  \BibitemOpen
  \bibfield  {author} {\bibinfo {author} {\bibfnamefont {J.}~\bibnamefont
  {Haferkamp}},\ }\bibfield  {title} {\bibinfo {title} {Random quantum circuits
  are approximate unitary {$t$}-designs in depth
  {$O\left(nt^{5+o(1)}\right)$}},\ }\href
  {https://doi.org/10.22331/q-2022-09-08-795} {\bibfield  {journal} {\bibinfo
  {journal} {{Quantum}}\ }\textbf {\bibinfo {volume} {6}},\ \bibinfo {pages}
  {795} (\bibinfo {year} {2022})}\BibitemShut {NoStop}%
\bibitem [{\citenamefont {Harrow}\ and\ \citenamefont
  {Mehraban}(2023)}]{harrow2023approximate}%
  \BibitemOpen
  \bibfield  {author} {\bibinfo {author} {\bibfnamefont {A.~W.}\ \bibnamefont
  {Harrow}}\ and\ \bibinfo {author} {\bibfnamefont {S.}~\bibnamefont
  {Mehraban}},\ }\bibfield  {title} {\bibinfo {title} {Approximate unitary
  t-designs by short random quantum circuits using nearest-neighbor and
  long-range gates},\ }\href {https://doi.org/10.1007/s00220-023-04675-z}
  {\bibfield  {journal} {\bibinfo  {journal} {Commun. Math. Phys.}\ }\textbf
  {\bibinfo {volume} {401}},\ \bibinfo {pages} {1531} (\bibinfo {year}
  {2023})}\BibitemShut {NoStop}%
\bibitem [{\citenamefont {Schuster}\ \emph {et~al.}(2024)\citenamefont
  {Schuster}, \citenamefont {Haferkamp},\ and\ \citenamefont
  {Huang}}]{schuster2024random}%
  \BibitemOpen
  \bibfield  {author} {\bibinfo {author} {\bibfnamefont {T.}~\bibnamefont
  {Schuster}}, \bibinfo {author} {\bibfnamefont {J.}~\bibnamefont
  {Haferkamp}},\ and\ \bibinfo {author} {\bibfnamefont {H.-Y.}\ \bibnamefont
  {Huang}},\ }\href {https://arxiv.org/abs/2407.07754} {\bibinfo {title}
  {Random unitaries in extremely low depth}} (\bibinfo {year} {2024}),\ \Eprint
  {https://arxiv.org/abs/2407.07754} {arXiv:2407.07754 [quant-ph]} \BibitemShut
  {NoStop}%
\bibitem [{\citenamefont {Cotler}\ \emph {et~al.}(2023)\citenamefont {Cotler},
  \citenamefont {Mark}, \citenamefont {Huang}, \citenamefont {Hern\'andez},
  \citenamefont {Choi}, \citenamefont {Shaw}, \citenamefont {Endres},\ and\
  \citenamefont {Choi}}]{Cotler2023emergent}%
  \BibitemOpen
  \bibfield  {author} {\bibinfo {author} {\bibfnamefont {J.~S.}\ \bibnamefont
  {Cotler}}, \bibinfo {author} {\bibfnamefont {D.~K.}\ \bibnamefont {Mark}},
  \bibinfo {author} {\bibfnamefont {H.-Y.}\ \bibnamefont {Huang}}, \bibinfo
  {author} {\bibfnamefont {F.}~\bibnamefont {Hern\'andez}}, \bibinfo {author}
  {\bibfnamefont {J.}~\bibnamefont {Choi}}, \bibinfo {author} {\bibfnamefont
  {A.~L.}\ \bibnamefont {Shaw}}, \bibinfo {author} {\bibfnamefont
  {M.}~\bibnamefont {Endres}},\ and\ \bibinfo {author} {\bibfnamefont
  {S.}~\bibnamefont {Choi}},\ }\bibfield  {title} {\bibinfo {title} {Emergent
  quantum state designs from individual many-body wave functions},\ }\href
  {https://doi.org/10.1103/PRXQuantum.4.010311} {\bibfield  {journal} {\bibinfo
   {journal} {PRX Quantum}\ }\textbf {\bibinfo {volume} {4}},\ \bibinfo {pages}
  {010311} (\bibinfo {year} {2023})}\BibitemShut {NoStop}%
\bibitem [{\citenamefont {Choi}\ \emph {et~al.}(2023)\citenamefont {Choi},
  \citenamefont {Shaw}, \citenamefont {Madjarov}, \citenamefont {Xie},
  \citenamefont {Finkelstein}, \citenamefont {Covey}, \citenamefont {Cotler},
  \citenamefont {Mark}, \citenamefont {Huang}, \citenamefont {Kale},
  \citenamefont {Pichler}, \citenamefont {Brand{\~a}o}, \citenamefont {Choi},\
  and\ \citenamefont {Endres}}]{choi2023preparing}%
  \BibitemOpen
  \bibfield  {author} {\bibinfo {author} {\bibfnamefont {J.}~\bibnamefont
  {Choi}}, \bibinfo {author} {\bibfnamefont {A.~L.}\ \bibnamefont {Shaw}},
  \bibinfo {author} {\bibfnamefont {I.~S.}\ \bibnamefont {Madjarov}}, \bibinfo
  {author} {\bibfnamefont {X.}~\bibnamefont {Xie}}, \bibinfo {author}
  {\bibfnamefont {R.}~\bibnamefont {Finkelstein}}, \bibinfo {author}
  {\bibfnamefont {J.~P.}\ \bibnamefont {Covey}}, \bibinfo {author}
  {\bibfnamefont {J.~S.}\ \bibnamefont {Cotler}}, \bibinfo {author}
  {\bibfnamefont {D.~K.}\ \bibnamefont {Mark}}, \bibinfo {author}
  {\bibfnamefont {H.-Y.}\ \bibnamefont {Huang}}, \bibinfo {author}
  {\bibfnamefont {A.}~\bibnamefont {Kale}}, \bibinfo {author} {\bibfnamefont
  {H.}~\bibnamefont {Pichler}}, \bibinfo {author} {\bibfnamefont {F.~G. S.~L.}\
  \bibnamefont {Brand{\~a}o}}, \bibinfo {author} {\bibfnamefont
  {S.}~\bibnamefont {Choi}},\ and\ \bibinfo {author} {\bibfnamefont
  {M.}~\bibnamefont {Endres}},\ }\bibfield  {title} {\bibinfo {title}
  {Preparing random states and benchmarking with many-body quantum chaos},\
  }\href {https://doi.org/10.1038/s41586-022-05442-1} {\bibfield  {journal}
  {\bibinfo  {journal} {Nature}\ }\textbf {\bibinfo {volume} {613}},\ \bibinfo
  {pages} {468} (\bibinfo {year} {2023})}\BibitemShut {NoStop}%
\bibitem [{\citenamefont {Ippoliti}\ and\ \citenamefont
  {Ho}(2022)}]{Ippoliti2022solvable}%
  \BibitemOpen
  \bibfield  {author} {\bibinfo {author} {\bibfnamefont {M.}~\bibnamefont
  {Ippoliti}}\ and\ \bibinfo {author} {\bibfnamefont {W.~W.}\ \bibnamefont
  {Ho}},\ }\bibfield  {title} {\bibinfo {title} {Solvable model of deep
  thermalization with distinct design times},\ }\href
  {https://doi.org/10.22331/q-2022-12-29-886} {\bibfield  {journal} {\bibinfo
  {journal} {{Quantum}}\ }\textbf {\bibinfo {volume} {6}},\ \bibinfo {pages}
  {886} (\bibinfo {year} {2022})}\BibitemShut {NoStop}%
\bibitem [{\citenamefont {Ho}\ and\ \citenamefont {Choi}(2022)}]{ho2022exact}%
  \BibitemOpen
  \bibfield  {author} {\bibinfo {author} {\bibfnamefont {W.~W.}\ \bibnamefont
  {Ho}}\ and\ \bibinfo {author} {\bibfnamefont {S.}~\bibnamefont {Choi}},\
  }\bibfield  {title} {\bibinfo {title} {Exact emergent quantum state designs
  from quantum chaotic dynamics},\ }\href
  {https://doi.org/10.1103/PhysRevLett.128.060601} {\bibfield  {journal}
  {\bibinfo  {journal} {Phys. Rev. Lett.}\ }\textbf {\bibinfo {volume} {128}},\
  \bibinfo {pages} {060601} (\bibinfo {year} {2022})}\BibitemShut {NoStop}%
\bibitem [{\citenamefont {Wilming}\ and\ \citenamefont
  {Roth}(2022)}]{wilming2022high}%
  \BibitemOpen
  \bibfield  {author} {\bibinfo {author} {\bibfnamefont {H.}~\bibnamefont
  {Wilming}}\ and\ \bibinfo {author} {\bibfnamefont {I.}~\bibnamefont {Roth}},\
  }\href {https://arxiv.org/abs/2202.01669} {\bibinfo {title} {High-temperature
  thermalization implies the emergence of quantum state designs}} (\bibinfo
  {year} {2022}),\ \Eprint {https://arxiv.org/abs/2202.01669} {arXiv:2202.01669
  [quant-ph]} \BibitemShut {NoStop}%
\bibitem [{\citenamefont {Claeys}\ and\ \citenamefont
  {Lamacraft}(2022)}]{claeys2022emergent}%
  \BibitemOpen
  \bibfield  {author} {\bibinfo {author} {\bibfnamefont {P.~W.}\ \bibnamefont
  {Claeys}}\ and\ \bibinfo {author} {\bibfnamefont {A.}~\bibnamefont
  {Lamacraft}},\ }\bibfield  {title} {\bibinfo {title} {Emergent quantum state
  designs and biunitarity in dual-unitary circuit dynamics},\ }\href
  {https://doi.org/10.22331/q-2022-06-15-738} {\bibfield  {journal} {\bibinfo
  {journal} {{Quantum}}\ }\textbf {\bibinfo {volume} {6}},\ \bibinfo {pages}
  {738} (\bibinfo {year} {2022})}\BibitemShut {NoStop}%
\bibitem [{\citenamefont {Shrotriya}\ and\ \citenamefont
  {Ho}(2023)}]{shrotriya2023nonlocality}%
  \BibitemOpen
  \bibfield  {author} {\bibinfo {author} {\bibfnamefont {H.}~\bibnamefont
  {Shrotriya}}\ and\ \bibinfo {author} {\bibfnamefont {W.~W.}\ \bibnamefont
  {Ho}},\ }\href {https://arxiv.org/abs/2305.08437} {\bibinfo {title}
  {Nonlocality of deep thermalization}} (\bibinfo {year} {2023}),\ \Eprint
  {https://arxiv.org/abs/2305.08437} {arXiv:2305.08437 [quant-ph]} \BibitemShut
  {NoStop}%
\bibitem [{\citenamefont {Liu}\ \emph {et~al.}(2024)\citenamefont {Liu},
  \citenamefont {Huang},\ and\ \citenamefont {Ho}}]{liu2024deep}%
  \BibitemOpen
  \bibfield  {author} {\bibinfo {author} {\bibfnamefont {C.}~\bibnamefont
  {Liu}}, \bibinfo {author} {\bibfnamefont {Q.~C.}\ \bibnamefont {Huang}},\
  and\ \bibinfo {author} {\bibfnamefont {W.~W.}\ \bibnamefont {Ho}},\ }\href
  {https://arxiv.org/abs/2405.05470} {\bibinfo {title} {Deep thermalization in
  continuous-variable quantum systems}} (\bibinfo {year} {2024}),\ \Eprint
  {https://arxiv.org/abs/2405.05470} {arXiv:2405.05470 [quant-ph]} \BibitemShut
  {NoStop}%
\bibitem [{\citenamefont {Bhore}\ \emph {et~al.}(2023)\citenamefont {Bhore},
  \citenamefont {Desaules},\ and\ \citenamefont {Papi\ifmmode~\acute{c}\else
  \'{c}\fi{}}}]{bhore2023deep}%
  \BibitemOpen
  \bibfield  {author} {\bibinfo {author} {\bibfnamefont {T.}~\bibnamefont
  {Bhore}}, \bibinfo {author} {\bibfnamefont {J.-Y.}\ \bibnamefont
  {Desaules}},\ and\ \bibinfo {author} {\bibfnamefont {Z.}~\bibnamefont
  {Papi\ifmmode~\acute{c}\else \'{c}\fi{}}},\ }\bibfield  {title} {\bibinfo
  {title} {Deep thermalization in constrained quantum systems},\ }\href
  {https://doi.org/10.1103/PhysRevB.108.104317} {\bibfield  {journal} {\bibinfo
   {journal} {Phys. Rev. B}\ }\textbf {\bibinfo {volume} {108}},\ \bibinfo
  {pages} {104317} (\bibinfo {year} {2023})}\BibitemShut {NoStop}%
\bibitem [{\citenamefont {Ippoliti}\ and\ \citenamefont
  {Ho}(2023)}]{ippoliti2023dynamical}%
  \BibitemOpen
  \bibfield  {author} {\bibinfo {author} {\bibfnamefont {M.}~\bibnamefont
  {Ippoliti}}\ and\ \bibinfo {author} {\bibfnamefont {W.~W.}\ \bibnamefont
  {Ho}},\ }\bibfield  {title} {\bibinfo {title} {Dynamical purification and the
  emergence of quantum state designs from the projected ensemble},\ }\href
  {https://doi.org/10.1103/PRXQuantum.4.030322} {\bibfield  {journal} {\bibinfo
   {journal} {PRX Quantum}\ }\textbf {\bibinfo {volume} {4}},\ \bibinfo {pages}
  {030322} (\bibinfo {year} {2023})}\BibitemShut {NoStop}%
\bibitem [{\citenamefont {Lucas}\ \emph {et~al.}(2023)\citenamefont {Lucas},
  \citenamefont {Piroli}, \citenamefont {De~Nardis},\ and\ \citenamefont
  {De~Luca}}]{lucas2023generalized}%
  \BibitemOpen
  \bibfield  {author} {\bibinfo {author} {\bibfnamefont {M.}~\bibnamefont
  {Lucas}}, \bibinfo {author} {\bibfnamefont {L.}~\bibnamefont {Piroli}},
  \bibinfo {author} {\bibfnamefont {J.}~\bibnamefont {De~Nardis}},\ and\
  \bibinfo {author} {\bibfnamefont {A.}~\bibnamefont {De~Luca}},\ }\bibfield
  {title} {\bibinfo {title} {Generalized deep thermalization for free
  fermions},\ }\href {https://doi.org/10.1103/PhysRevA.107.032215} {\bibfield
  {journal} {\bibinfo  {journal} {Phys. Rev. A}\ }\textbf {\bibinfo {volume}
  {107}},\ \bibinfo {pages} {032215} (\bibinfo {year} {2023})}\BibitemShut
  {NoStop}%
\bibitem [{\citenamefont {Mark}\ \emph {et~al.}(2024)\citenamefont {Mark},
  \citenamefont {Surace}, \citenamefont {Elben}, \citenamefont {Shaw},
  \citenamefont {Choi}, \citenamefont {Refael}, \citenamefont {Endres},\ and\
  \citenamefont {Choi}}]{mark2024maximum}%
  \BibitemOpen
  \bibfield  {author} {\bibinfo {author} {\bibfnamefont {D.~K.}\ \bibnamefont
  {Mark}}, \bibinfo {author} {\bibfnamefont {F.}~\bibnamefont {Surace}},
  \bibinfo {author} {\bibfnamefont {A.}~\bibnamefont {Elben}}, \bibinfo
  {author} {\bibfnamefont {A.~L.}\ \bibnamefont {Shaw}}, \bibinfo {author}
  {\bibfnamefont {J.}~\bibnamefont {Choi}}, \bibinfo {author} {\bibfnamefont
  {G.}~\bibnamefont {Refael}}, \bibinfo {author} {\bibfnamefont
  {M.}~\bibnamefont {Endres}},\ and\ \bibinfo {author} {\bibfnamefont
  {S.}~\bibnamefont {Choi}},\ }\href {https://arxiv.org/abs/2403.11970}
  {\bibinfo {title} {A maximum entropy principle in deep thermalization and in
  hilbert-space ergodicity}} (\bibinfo {year} {2024}),\ \Eprint
  {https://arxiv.org/abs/2403.11970} {arXiv:2403.11970 [quant-ph]} \BibitemShut
  {NoStop}%
\bibitem [{\citenamefont {Chan}\ and\ \citenamefont
  {Luca}(2024)}]{chan2024projected}%
  \BibitemOpen
  \bibfield  {author} {\bibinfo {author} {\bibfnamefont {A.}~\bibnamefont
  {Chan}}\ and\ \bibinfo {author} {\bibfnamefont {A.~D.}\ \bibnamefont
  {Luca}},\ }\href {https://arxiv.org/abs/2402.16939} {\bibinfo {title}
  {Projected state ensemble of a generic model of many-body quantum chaos}}
  (\bibinfo {year} {2024}),\ \Eprint {https://arxiv.org/abs/2402.16939}
  {arXiv:2402.16939 [quant-ph]} \BibitemShut {NoStop}%
\bibitem [{\citenamefont {Chang}\ \emph {et~al.}(2024)\citenamefont {Chang},
  \citenamefont {Shrotriya}, \citenamefont {Ho},\ and\ \citenamefont
  {Ippoliti}}]{chang2024deep}%
  \BibitemOpen
  \bibfield  {author} {\bibinfo {author} {\bibfnamefont {R.-A.}\ \bibnamefont
  {Chang}}, \bibinfo {author} {\bibfnamefont {H.}~\bibnamefont {Shrotriya}},
  \bibinfo {author} {\bibfnamefont {W.~W.}\ \bibnamefont {Ho}},\ and\ \bibinfo
  {author} {\bibfnamefont {M.}~\bibnamefont {Ippoliti}},\ }\href
  {https://arxiv.org/abs/2408.15325} {\bibinfo {title} {Deep thermalization
  under charge-conserving quantum dynamics}} (\bibinfo {year} {2024}),\ \Eprint
  {https://arxiv.org/abs/2408.15325} {arXiv:2408.15325 [quant-ph]} \BibitemShut
  {NoStop}%
\bibitem [{\citenamefont {Varikuti}\ and\ \citenamefont
  {Bandyopadhyay}(2024)}]{varikuti2024unraveling}%
  \BibitemOpen
  \bibfield  {author} {\bibinfo {author} {\bibfnamefont {N.~D.}\ \bibnamefont
  {Varikuti}}\ and\ \bibinfo {author} {\bibfnamefont {S.}~\bibnamefont
  {Bandyopadhyay}},\ }\bibfield  {title} {\bibinfo {title} {Unraveling the
  emergence of quantum state designs in systems with symmetry},\ }\href
  {https://doi.org/10.22331/q-2024-08-29-1456} {\bibfield  {journal} {\bibinfo
  {journal} {{Quantum}}\ }\textbf {\bibinfo {volume} {8}},\ \bibinfo {pages}
  {1456} (\bibinfo {year} {2024})}\BibitemShut {NoStop}%
\bibitem [{\citenamefont {Goldstein}\ \emph {et~al.}(2006)\citenamefont
  {Goldstein}, \citenamefont {Lebowitz}, \citenamefont {Tumulka},\ and\
  \citenamefont {Zangh{\`i}}}]{goldstein2006distribution}%
  \BibitemOpen
  \bibfield  {author} {\bibinfo {author} {\bibfnamefont {S.}~\bibnamefont
  {Goldstein}}, \bibinfo {author} {\bibfnamefont {J.~L.}\ \bibnamefont
  {Lebowitz}}, \bibinfo {author} {\bibfnamefont {R.}~\bibnamefont {Tumulka}},\
  and\ \bibinfo {author} {\bibfnamefont {N.}~\bibnamefont {Zangh{\`i}}},\
  }\bibfield  {title} {\bibinfo {title} {On the distribution of the wave
  function for systems in thermal equilibrium},\ }\href
  {https://doi.org/10.1007/s10955-006-9210-z} {\bibfield  {journal} {\bibinfo
  {journal} {J. Stat. Phys.}\ }\textbf {\bibinfo {volume} {125}},\ \bibinfo
  {pages} {1193} (\bibinfo {year} {2006})}\BibitemShut {NoStop}%
\bibitem [{\citenamefont {Goldstein}\ \emph {et~al.}(2016)\citenamefont
  {Goldstein}, \citenamefont {Lebowitz}, \citenamefont {Mastrodonato},
  \citenamefont {Tumulka},\ and\ \citenamefont
  {Zangh{\`i}}}]{goldstein2016universal}%
  \BibitemOpen
  \bibfield  {author} {\bibinfo {author} {\bibfnamefont {S.}~\bibnamefont
  {Goldstein}}, \bibinfo {author} {\bibfnamefont {J.~L.}\ \bibnamefont
  {Lebowitz}}, \bibinfo {author} {\bibfnamefont {C.}~\bibnamefont
  {Mastrodonato}}, \bibinfo {author} {\bibfnamefont {R.}~\bibnamefont
  {Tumulka}},\ and\ \bibinfo {author} {\bibfnamefont {N.}~\bibnamefont
  {Zangh{\`i}}},\ }\bibfield  {title} {\bibinfo {title} {Universal probability
  distribution for the wave function of a quantum system entangled with its
  environment},\ }\href {https://doi.org/10.1007/s00220-015-2536-0} {\bibfield
  {journal} {\bibinfo  {journal} {Commun. Math. Phys.}\ }\textbf {\bibinfo
  {volume} {342}},\ \bibinfo {pages} {965} (\bibinfo {year}
  {2016})}\BibitemShut {NoStop}%
\bibitem [{sup()}]{supp}%
  \BibitemOpen
  \href@noop {} {}\bibinfo {note} {See Supplementary Material for additional
  calculation details, which includes
  Refs.~\cite{harrow2013church,holevo1973statistical,helstrom1969quantum,aharonov2022quantum,collins2017weingarten,kostenberger2021weingarten,cramer2012thermalization,audenaert2007sharp,oganesyan2007localization,atas2013distribution,collins2022weingarten,gorin2006dynamics,dAlessio2016quantum,schatzki2024random}}\BibitemShut
  {NoStop}%
\bibitem [{\citenamefont {Kim}\ and\ \citenamefont
  {Huse}(2013)}]{kim2013ballistic}%
  \BibitemOpen
  \bibfield  {author} {\bibinfo {author} {\bibfnamefont {H.}~\bibnamefont
  {Kim}}\ and\ \bibinfo {author} {\bibfnamefont {D.~A.}\ \bibnamefont {Huse}},\
  }\bibfield  {title} {\bibinfo {title} {Ballistic spreading of entanglement in
  a diffusive nonintegrable system},\ }\href
  {https://doi.org/10.1103/PhysRevLett.111.127205} {\bibfield  {journal}
  {\bibinfo  {journal} {Phys. Rev. Lett.}\ }\textbf {\bibinfo {volume} {111}},\
  \bibinfo {pages} {127205} (\bibinfo {year} {2013})}\BibitemShut {NoStop}%
\bibitem [{\citenamefont {Kim}\ \emph {et~al.}(2014)\citenamefont {Kim},
  \citenamefont {Ikeda},\ and\ \citenamefont {Huse}}]{kim2014testing}%
  \BibitemOpen
  \bibfield  {author} {\bibinfo {author} {\bibfnamefont {H.}~\bibnamefont
  {Kim}}, \bibinfo {author} {\bibfnamefont {T.~N.}\ \bibnamefont {Ikeda}},\
  and\ \bibinfo {author} {\bibfnamefont {D.~A.}\ \bibnamefont {Huse}},\
  }\bibfield  {title} {\bibinfo {title} {Testing whether all eigenstates obey
  the eigenstate thermalization hypothesis},\ }\href
  {https://doi.org/10.1103/PhysRevE.90.052105} {\bibfield  {journal} {\bibinfo
  {journal} {Phys. Rev. E}\ }\textbf {\bibinfo {volume} {90}},\ \bibinfo
  {pages} {052105} (\bibinfo {year} {2014})}\BibitemShut {NoStop}%
\bibitem [{\citenamefont {Perrin}\ \emph {et~al.}(2024)\citenamefont {Perrin},
  \citenamefont {Scoquart}, \citenamefont {Pavlov},\ and\ \citenamefont
  {Gnezdilov}}]{perrin2024dynamic}%
  \BibitemOpen
  \bibfield  {author} {\bibinfo {author} {\bibfnamefont {H.}~\bibnamefont
  {Perrin}}, \bibinfo {author} {\bibfnamefont {T.}~\bibnamefont {Scoquart}},
  \bibinfo {author} {\bibfnamefont {A.~I.}\ \bibnamefont {Pavlov}},\ and\
  \bibinfo {author} {\bibfnamefont {N.~V.}\ \bibnamefont {Gnezdilov}},\ }\href
  {https://arxiv.org/abs/2407.04770} {\bibinfo {title} {Dynamic thermalization
  on noisy quantum hardware}} (\bibinfo {year} {2024}),\ \Eprint
  {https://arxiv.org/abs/2407.04770} {arXiv:2407.04770 [quant-ph]} \BibitemShut
  {NoStop}%
\bibitem [{\citenamefont {Tran}\ \emph {et~al.}(2023)\citenamefont {Tran},
  \citenamefont {Mark}, \citenamefont {Ho},\ and\ \citenamefont
  {Choi}}]{tran2023measuring}%
  \BibitemOpen
  \bibfield  {author} {\bibinfo {author} {\bibfnamefont {M.~C.}\ \bibnamefont
  {Tran}}, \bibinfo {author} {\bibfnamefont {D.~K.}\ \bibnamefont {Mark}},
  \bibinfo {author} {\bibfnamefont {W.~W.}\ \bibnamefont {Ho}},\ and\ \bibinfo
  {author} {\bibfnamefont {S.}~\bibnamefont {Choi}},\ }\bibfield  {title}
  {\bibinfo {title} {Measuring arbitrary physical properties in analog quantum
  simulation},\ }\href {https://doi.org/10.1103/PhysRevX.13.011049} {\bibfield
  {journal} {\bibinfo  {journal} {Phys. Rev. X}\ }\textbf {\bibinfo {volume}
  {13}},\ \bibinfo {pages} {011049} (\bibinfo {year} {2023})}\BibitemShut
  {NoStop}%
\bibitem [{\citenamefont {Harrow}(2013)}]{harrow2013church}%
  \BibitemOpen
  \bibfield  {author} {\bibinfo {author} {\bibfnamefont {A.~W.}\ \bibnamefont
  {Harrow}},\ }\href {https://arxiv.org/abs/1308.6595} {\bibinfo {title} {The
  church of the symmetric subspace}} (\bibinfo {year} {2013}),\ \Eprint
  {https://arxiv.org/abs/1308.6595} {arXiv:1308.6595 [quant-ph]} \BibitemShut
  {NoStop}%
\bibitem [{\citenamefont {Holevo}(1973)}]{holevo1973statistical}%
  \BibitemOpen
  \bibfield  {author} {\bibinfo {author} {\bibfnamefont {A.}~\bibnamefont
  {Holevo}},\ }\bibfield  {title} {\bibinfo {title} {Statistical decision
  theory for quantum systems},\ }\href
  {https://doi.org/https://doi.org/10.1016/0047-259X(73)90028-6} {\bibfield
  {journal} {\bibinfo  {journal} {J. Multivar. Anal.}\ }\textbf {\bibinfo
  {volume} {3}},\ \bibinfo {pages} {337} (\bibinfo {year} {1973})}\BibitemShut
  {NoStop}%
\bibitem [{\citenamefont {Helstrom}(1969)}]{helstrom1969quantum}%
  \BibitemOpen
  \bibfield  {author} {\bibinfo {author} {\bibfnamefont {C.~W.}\ \bibnamefont
  {Helstrom}},\ }\bibfield  {title} {\bibinfo {title} {Quantum detection and
  estimation theory},\ }\href {https://doi.org/10.1007/BF01007479} {\bibfield
  {journal} {\bibinfo  {journal} {J. Stat. Phys.}\ }\textbf {\bibinfo {volume}
  {1}},\ \bibinfo {pages} {231} (\bibinfo {year} {1969})}\BibitemShut {NoStop}%
\bibitem [{\citenamefont {Aharonov}\ \emph {et~al.}(2022)\citenamefont
  {Aharonov}, \citenamefont {Cotler},\ and\ \citenamefont
  {Qi}}]{aharonov2022quantum}%
  \BibitemOpen
  \bibfield  {author} {\bibinfo {author} {\bibfnamefont {D.}~\bibnamefont
  {Aharonov}}, \bibinfo {author} {\bibfnamefont {J.}~\bibnamefont {Cotler}},\
  and\ \bibinfo {author} {\bibfnamefont {X.-L.}\ \bibnamefont {Qi}},\
  }\bibfield  {title} {\bibinfo {title} {Quantum algorithmic measurement},\
  }\href {https://doi.org/10.1038/s41467-021-27922-0} {\bibfield  {journal}
  {\bibinfo  {journal} {Nat. Commun.}\ }\textbf {\bibinfo {volume} {13}},\
  \bibinfo {pages} {887} (\bibinfo {year} {2022})}\BibitemShut {NoStop}%
\bibitem [{\citenamefont {Collins}\ and\ \citenamefont
  {Matsumoto}(2017)}]{collins2017weingarten}%
  \BibitemOpen
  \bibfield  {author} {\bibinfo {author} {\bibfnamefont {B.}~\bibnamefont
  {Collins}}\ and\ \bibinfo {author} {\bibfnamefont {S.}~\bibnamefont
  {Matsumoto}},\ }\bibfield  {title} {\bibinfo {title} {Weingarten calculus via
  orthogonality relations: new applications},\ }\href
  {https://doi.org/10.30757/alea.v14-31} {\bibfield  {journal} {\bibinfo
  {journal} {Lat. Am. J. Probab. Math. Stat.}\ }\textbf {\bibinfo {volume}
  {14}},\ \bibinfo {pages} {631} (\bibinfo {year} {2017})}\BibitemShut
  {NoStop}%
\bibitem [{\citenamefont {Köstenberger}(2021)}]{kostenberger2021weingarten}%
  \BibitemOpen
  \bibfield  {author} {\bibinfo {author} {\bibfnamefont {G.}~\bibnamefont
  {Köstenberger}},\ }\href {https://arxiv.org/abs/2101.00921} {\bibinfo
  {title} {Weingarten calculus}} (\bibinfo {year} {2021}),\ \Eprint
  {https://arxiv.org/abs/2101.00921} {arXiv:2101.00921 [math.PR]} \BibitemShut
  {NoStop}%
\bibitem [{\citenamefont {Cramer}(2012)}]{cramer2012thermalization}%
  \BibitemOpen
  \bibfield  {author} {\bibinfo {author} {\bibfnamefont {M.}~\bibnamefont
  {Cramer}},\ }\bibfield  {title} {\bibinfo {title} {Thermalization under
  randomized local hamiltonians},\ }\href
  {https://doi.org/10.1088/1367-2630/14/5/053051} {\bibfield  {journal}
  {\bibinfo  {journal} {New J. Phys.}\ }\textbf {\bibinfo {volume} {14}},\
  \bibinfo {pages} {053051} (\bibinfo {year} {2012})}\BibitemShut {NoStop}%
\bibitem [{\citenamefont {Audenaert}(2007)}]{audenaert2007sharp}%
  \BibitemOpen
  \bibfield  {author} {\bibinfo {author} {\bibfnamefont {K.~M.}\ \bibnamefont
  {Audenaert}},\ }\bibfield  {title} {\bibinfo {title} {A sharp continuity
  estimate for the von neumann entropy},\ }\href
  {https://doi.org/10.1088/1751-8113/40/28/S18} {\bibfield  {journal} {\bibinfo
   {journal} {J. Phys. A}\ }\textbf {\bibinfo {volume} {40}},\ \bibinfo {pages}
  {8127} (\bibinfo {year} {2007})}\BibitemShut {NoStop}%
\bibitem [{\citenamefont {Oganesyan}\ and\ \citenamefont
  {Huse}(2007)}]{oganesyan2007localization}%
  \BibitemOpen
  \bibfield  {author} {\bibinfo {author} {\bibfnamefont {V.}~\bibnamefont
  {Oganesyan}}\ and\ \bibinfo {author} {\bibfnamefont {D.~A.}\ \bibnamefont
  {Huse}},\ }\bibfield  {title} {\bibinfo {title} {Localization of interacting
  fermions at high temperature},\ }\href
  {https://doi.org/10.1103/PhysRevB.75.155111} {\bibfield  {journal} {\bibinfo
  {journal} {Phys. Rev. B}\ }\textbf {\bibinfo {volume} {75}},\ \bibinfo
  {pages} {155111} (\bibinfo {year} {2007})}\BibitemShut {NoStop}%
\bibitem [{\citenamefont {Atas}\ \emph {et~al.}(2013)\citenamefont {Atas},
  \citenamefont {Bogomolny}, \citenamefont {Giraud},\ and\ \citenamefont
  {Roux}}]{atas2013distribution}%
  \BibitemOpen
  \bibfield  {author} {\bibinfo {author} {\bibfnamefont {Y.~Y.}\ \bibnamefont
  {Atas}}, \bibinfo {author} {\bibfnamefont {E.}~\bibnamefont {Bogomolny}},
  \bibinfo {author} {\bibfnamefont {O.}~\bibnamefont {Giraud}},\ and\ \bibinfo
  {author} {\bibfnamefont {G.}~\bibnamefont {Roux}},\ }\bibfield  {title}
  {\bibinfo {title} {Distribution of the ratio of consecutive level spacings in
  random matrix ensembles},\ }\href
  {https://doi.org/10.1103/PhysRevLett.110.084101} {\bibfield  {journal}
  {\bibinfo  {journal} {Phys. Rev. Lett.}\ }\textbf {\bibinfo {volume} {110}},\
  \bibinfo {pages} {084101} (\bibinfo {year} {2013})}\BibitemShut {NoStop}%
\bibitem [{\citenamefont {Collins}\ \emph {et~al.}(2022)\citenamefont
  {Collins}, \citenamefont {Matsumoto},\ and\ \citenamefont
  {Novak}}]{collins2022weingarten}%
  \BibitemOpen
  \bibfield  {author} {\bibinfo {author} {\bibfnamefont {B.}~\bibnamefont
  {Collins}}, \bibinfo {author} {\bibfnamefont {S.}~\bibnamefont {Matsumoto}},\
  and\ \bibinfo {author} {\bibfnamefont {J.}~\bibnamefont {Novak}},\ }\bibfield
   {title} {\bibinfo {title} {The weingarten calculus},\ }\href
  {https://doi.org/10.1090/noti2474} {\bibfield  {journal} {\bibinfo  {journal}
  {Not. Am. Math. Soc.}\ }\textbf {\bibinfo {volume} {69}},\ \bibinfo {pages}
  {1} (\bibinfo {year} {2022})}\BibitemShut {NoStop}%
\bibitem [{\citenamefont {Gorin}\ \emph {et~al.}(2006)\citenamefont {Gorin},
  \citenamefont {Prosen}, \citenamefont {Seligman},\ and\ \citenamefont
  {Žnidarič}}]{gorin2006dynamics}%
  \BibitemOpen
  \bibfield  {author} {\bibinfo {author} {\bibfnamefont {T.}~\bibnamefont
  {Gorin}}, \bibinfo {author} {\bibfnamefont {T.}~\bibnamefont {Prosen}},
  \bibinfo {author} {\bibfnamefont {T.~H.}\ \bibnamefont {Seligman}},\ and\
  \bibinfo {author} {\bibfnamefont {M.}~\bibnamefont {Žnidarič}},\ }\bibfield
   {title} {\bibinfo {title} {Dynamics of loschmidt echoes and fidelity
  decay},\ }\href
  {https://doi.org/https://doi.org/10.1016/j.physrep.2006.09.003} {\bibfield
  {journal} {\bibinfo  {journal} {Phys. Rep.}\ }\textbf {\bibinfo {volume}
  {435}},\ \bibinfo {pages} {33} (\bibinfo {year} {2006})}\BibitemShut
  {NoStop}%
\bibitem [{\citenamefont {D’Alessio}\ \emph {et~al.}(2016)\citenamefont
  {D’Alessio}, \citenamefont {Kafri}, \citenamefont {Polkovnikov},\ and\
  \citenamefont {Rigol}}]{dAlessio2016quantum}%
  \BibitemOpen
  \bibfield  {author} {\bibinfo {author} {\bibfnamefont {L.}~\bibnamefont
  {D’Alessio}}, \bibinfo {author} {\bibfnamefont {Y.}~\bibnamefont {Kafri}},
  \bibinfo {author} {\bibfnamefont {A.}~\bibnamefont {Polkovnikov}},\ and\
  \bibinfo {author} {\bibfnamefont {M.}~\bibnamefont {Rigol}},\ }\bibfield
  {title} {\bibinfo {title} {From quantum chaos and eigenstate thermalization
  to statistical mechanics and thermodynamics},\ }\href
  {https://doi.org/10.1080/00018732.2016.1198134} {\bibfield  {journal}
  {\bibinfo  {journal} {Adv. Phys.}\ }\textbf {\bibinfo {volume} {65}},\
  \bibinfo {pages} {239–362} (\bibinfo {year} {2016})}\BibitemShut {NoStop}%
\bibitem [{\citenamefont {Schatzki}(2024)}]{schatzki2024random}%
  \BibitemOpen
  \bibfield  {author} {\bibinfo {author} {\bibfnamefont {L.}~\bibnamefont
  {Schatzki}},\ }\bibfield  {title} {\bibinfo {title} {Random real valued and
  complex valued states cannot be efficiently distinguished},\ }\href
  {https://arxiv.org/abs/2410.17213} {\bibfield  {journal} {\bibinfo  {journal}
  {arXiv:2410.17213}\ } (\bibinfo {year} {2024})}\BibitemShut {NoStop}%
\end{thebibliography}%


\begin{thebibliography}{28}%
\makeatletter
\providecommand \@ifxundefined [1]{%
 \@ifx{#1\undefined}
}%
\providecommand \@ifnum [1]{%
 \ifnum #1\expandafter \@firstoftwo
 \else \expandafter \@secondoftwo
 \fi
}%
\providecommand \@ifx [1]{%
 \ifx #1\expandafter \@firstoftwo
 \else \expandafter \@secondoftwo
 \fi
}%
\providecommand \natexlab [1]{#1}%
\providecommand \enquote  [1]{``#1''}%
\providecommand \bibnamefont  [1]{#1}%
\providecommand \bibfnamefont [1]{#1}%
\providecommand \citenamefont [1]{#1}%
\providecommand \href@noop [0]{\@secondoftwo}%
\providecommand \href [0]{\begingroup \@sanitize@url \@href}%
\providecommand \@href[1]{\@@startlink{#1}\@@href}%
\providecommand \@@href[1]{\endgroup#1\@@endlink}%
\providecommand \@sanitize@url [0]{\catcode `\\12\catcode `\$12\catcode
  `\&12\catcode `\#12\catcode `\^12\catcode `\_12\catcode `\%12\relax}%
\providecommand \@@startlink[1]{}%
\providecommand \@@endlink[0]{}%
\providecommand \url  [0]{\begingroup\@sanitize@url \@url }%
\providecommand \@url [1]{\endgroup\@href {#1}{\urlprefix }}%
\providecommand \urlprefix  [0]{URL }%
\providecommand \Eprint [0]{\href }%
\providecommand \doibase [0]{https://doi.org/}%
\providecommand \selectlanguage [0]{\@gobble}%
\providecommand \bibinfo  [0]{\@secondoftwo}%
\providecommand \bibfield  [0]{\@secondoftwo}%
\providecommand \translation [1]{[#1]}%
\providecommand \BibitemOpen [0]{}%
\providecommand \bibitemStop [0]{}%
\providecommand \bibitemNoStop [0]{.\EOS\space}%
\providecommand \EOS [0]{\spacefactor3000\relax}%
\providecommand \BibitemShut  [1]{\csname bibitem#1\endcsname}%
\let\auto@bib@innerbib\@empty
\bibitem [{\citenamefont {Harrow}(2013)}]{harrow2013church}%
  \BibitemOpen
  \bibfield  {author} {\bibinfo {author} {\bibfnamefont {A.~W.}\ \bibnamefont
  {Harrow}},\ }\href {https://arxiv.org/abs/1308.6595} {\bibinfo {title} {The
  church of the symmetric subspace}} (\bibinfo {year} {2013}),\ \Eprint
  {https://arxiv.org/abs/1308.6595} {arXiv:1308.6595 [quant-ph]} \BibitemShut
  {NoStop}%
\bibitem [{\citenamefont {Ippoliti}\ and\ \citenamefont
  {Ho}(2023)}]{ippoliti2023dynamical}%
  \BibitemOpen
  \bibfield  {author} {\bibinfo {author} {\bibfnamefont {M.}~\bibnamefont
  {Ippoliti}}\ and\ \bibinfo {author} {\bibfnamefont {W.~W.}\ \bibnamefont
  {Ho}},\ }\bibfield  {title} {\bibinfo {title} {Dynamical purification and the
  emergence of quantum state designs from the projected ensemble},\ }\href
  {https://doi.org/10.1103/PRXQuantum.4.030322} {\bibfield  {journal} {\bibinfo
   {journal} {PRX Quantum}\ }\textbf {\bibinfo {volume} {4}},\ \bibinfo {pages}
  {030322} (\bibinfo {year} {2023})}\BibitemShut {NoStop}%
\bibitem [{\citenamefont {Holevo}(1973)}]{holevo1973statistical}%
  \BibitemOpen
  \bibfield  {author} {\bibinfo {author} {\bibfnamefont {A.}~\bibnamefont
  {Holevo}},\ }\bibfield  {title} {\bibinfo {title} {Statistical decision
  theory for quantum systems},\ }\href
  {https://doi.org/https://doi.org/10.1016/0047-259X(73)90028-6} {\bibfield
  {journal} {\bibinfo  {journal} {J. Multivar. Anal.}\ }\textbf {\bibinfo
  {volume} {3}},\ \bibinfo {pages} {337} (\bibinfo {year} {1973})}\BibitemShut
  {NoStop}%
\bibitem [{\citenamefont {Helstrom}(1969)}]{helstrom1969quantum}%
  \BibitemOpen
  \bibfield  {author} {\bibinfo {author} {\bibfnamefont {C.~W.}\ \bibnamefont
  {Helstrom}},\ }\bibfield  {title} {\bibinfo {title} {Quantum detection and
  estimation theory},\ }\href {https://doi.org/10.1007/BF01007479} {\bibfield
  {journal} {\bibinfo  {journal} {J. Stat. Phys.}\ }\textbf {\bibinfo {volume}
  {1}},\ \bibinfo {pages} {231} (\bibinfo {year} {1969})}\BibitemShut {NoStop}%
\bibitem [{\citenamefont {Cotler}\ \emph {et~al.}(2023)\citenamefont {Cotler},
  \citenamefont {Mark}, \citenamefont {Huang}, \citenamefont {Hern\'andez},
  \citenamefont {Choi}, \citenamefont {Shaw}, \citenamefont {Endres},\ and\
  \citenamefont {Choi}}]{Cotler2023emergent}%
  \BibitemOpen
  \bibfield  {author} {\bibinfo {author} {\bibfnamefont {J.~S.}\ \bibnamefont
  {Cotler}}, \bibinfo {author} {\bibfnamefont {D.~K.}\ \bibnamefont {Mark}},
  \bibinfo {author} {\bibfnamefont {H.-Y.}\ \bibnamefont {Huang}}, \bibinfo
  {author} {\bibfnamefont {F.}~\bibnamefont {Hern\'andez}}, \bibinfo {author}
  {\bibfnamefont {J.}~\bibnamefont {Choi}}, \bibinfo {author} {\bibfnamefont
  {A.~L.}\ \bibnamefont {Shaw}}, \bibinfo {author} {\bibfnamefont
  {M.}~\bibnamefont {Endres}},\ and\ \bibinfo {author} {\bibfnamefont
  {S.}~\bibnamefont {Choi}},\ }\bibfield  {title} {\bibinfo {title} {Emergent
  quantum state designs from individual many-body wave functions},\ }\href
  {https://doi.org/10.1103/PRXQuantum.4.010311} {\bibfield  {journal} {\bibinfo
   {journal} {PRX Quantum}\ }\textbf {\bibinfo {volume} {4}},\ \bibinfo {pages}
  {010311} (\bibinfo {year} {2023})}\BibitemShut {NoStop}%
\bibitem [{\citenamefont {Aharonov}\ \emph {et~al.}(2022)\citenamefont
  {Aharonov}, \citenamefont {Cotler},\ and\ \citenamefont
  {Qi}}]{aharonov2022quantum}%
  \BibitemOpen
  \bibfield  {author} {\bibinfo {author} {\bibfnamefont {D.}~\bibnamefont
  {Aharonov}}, \bibinfo {author} {\bibfnamefont {J.}~\bibnamefont {Cotler}},\
  and\ \bibinfo {author} {\bibfnamefont {X.-L.}\ \bibnamefont {Qi}},\
  }\bibfield  {title} {\bibinfo {title} {Quantum algorithmic measurement},\
  }\href {https://doi.org/10.1038/s41467-021-27922-0} {\bibfield  {journal}
  {\bibinfo  {journal} {Nat. Commun.}\ }\textbf {\bibinfo {volume} {13}},\
  \bibinfo {pages} {887} (\bibinfo {year} {2022})}\BibitemShut {NoStop}%
\bibitem [{\citenamefont {Collins}\ and\ \citenamefont
  {Matsumoto}(2017)}]{collins2017weingarten}%
  \BibitemOpen
  \bibfield  {author} {\bibinfo {author} {\bibfnamefont {B.}~\bibnamefont
  {Collins}}\ and\ \bibinfo {author} {\bibfnamefont {S.}~\bibnamefont
  {Matsumoto}},\ }\bibfield  {title} {\bibinfo {title} {Weingarten calculus via
  orthogonality relations: new applications},\ }\href
  {https://doi.org/10.30757/alea.v14-31} {\bibfield  {journal} {\bibinfo
  {journal} {Lat. Am. J. Probab. Math. Stat.}\ }\textbf {\bibinfo {volume}
  {14}},\ \bibinfo {pages} {631} (\bibinfo {year} {2017})}\BibitemShut
  {NoStop}%
\bibitem [{\citenamefont {Collins}\ \emph {et~al.}(2022)\citenamefont
  {Collins}, \citenamefont {Matsumoto},\ and\ \citenamefont
  {Novak}}]{collins2022weingarten}%
  \BibitemOpen
  \bibfield  {author} {\bibinfo {author} {\bibfnamefont {B.}~\bibnamefont
  {Collins}}, \bibinfo {author} {\bibfnamefont {S.}~\bibnamefont {Matsumoto}},\
  and\ \bibinfo {author} {\bibfnamefont {J.}~\bibnamefont {Novak}},\ }\bibfield
   {title} {\bibinfo {title} {The weingarten calculus},\ }\href
  {https://doi.org/10.1090/noti2474} {\bibfield  {journal} {\bibinfo  {journal}
  {Not. Am. Math. Soc.}\ }\textbf {\bibinfo {volume} {69}},\ \bibinfo {pages}
  {1} (\bibinfo {year} {2022})}\BibitemShut {NoStop}%
\bibitem [{\citenamefont {Mark}\ \emph {et~al.}(2024)\citenamefont {Mark},
  \citenamefont {Surace}, \citenamefont {Elben}, \citenamefont {Shaw},
  \citenamefont {Choi}, \citenamefont {Refael}, \citenamefont {Endres},\ and\
  \citenamefont {Choi}}]{mark2024maximum}%
  \BibitemOpen
  \bibfield  {author} {\bibinfo {author} {\bibfnamefont {D.~K.}\ \bibnamefont
  {Mark}}, \bibinfo {author} {\bibfnamefont {F.}~\bibnamefont {Surace}},
  \bibinfo {author} {\bibfnamefont {A.}~\bibnamefont {Elben}}, \bibinfo
  {author} {\bibfnamefont {A.~L.}\ \bibnamefont {Shaw}}, \bibinfo {author}
  {\bibfnamefont {J.}~\bibnamefont {Choi}}, \bibinfo {author} {\bibfnamefont
  {G.}~\bibnamefont {Refael}}, \bibinfo {author} {\bibfnamefont
  {M.}~\bibnamefont {Endres}},\ and\ \bibinfo {author} {\bibfnamefont
  {S.}~\bibnamefont {Choi}},\ }\href {https://arxiv.org/abs/2403.11970}
  {\bibinfo {title} {A maximum entropy principle in deep thermalization and in
  hilbert-space ergodicity}} (\bibinfo {year} {2024}),\ \Eprint
  {https://arxiv.org/abs/2403.11970} {arXiv:2403.11970 [quant-ph]} \BibitemShut
  {NoStop}%
\bibitem [{\citenamefont {Choi}\ \emph {et~al.}(2023)\citenamefont {Choi},
  \citenamefont {Shaw}, \citenamefont {Madjarov}, \citenamefont {Xie},
  \citenamefont {Finkelstein}, \citenamefont {Covey}, \citenamefont {Cotler},
  \citenamefont {Mark}, \citenamefont {Huang}, \citenamefont {Kale},
  \citenamefont {Pichler}, \citenamefont {Brand{\~a}o}, \citenamefont {Choi},\
  and\ \citenamefont {Endres}}]{choi2023preparing}%
  \BibitemOpen
  \bibfield  {author} {\bibinfo {author} {\bibfnamefont {J.}~\bibnamefont
  {Choi}}, \bibinfo {author} {\bibfnamefont {A.~L.}\ \bibnamefont {Shaw}},
  \bibinfo {author} {\bibfnamefont {I.~S.}\ \bibnamefont {Madjarov}}, \bibinfo
  {author} {\bibfnamefont {X.}~\bibnamefont {Xie}}, \bibinfo {author}
  {\bibfnamefont {R.}~\bibnamefont {Finkelstein}}, \bibinfo {author}
  {\bibfnamefont {J.~P.}\ \bibnamefont {Covey}}, \bibinfo {author}
  {\bibfnamefont {J.~S.}\ \bibnamefont {Cotler}}, \bibinfo {author}
  {\bibfnamefont {D.~K.}\ \bibnamefont {Mark}}, \bibinfo {author}
  {\bibfnamefont {H.-Y.}\ \bibnamefont {Huang}}, \bibinfo {author}
  {\bibfnamefont {A.}~\bibnamefont {Kale}}, \bibinfo {author} {\bibfnamefont
  {H.}~\bibnamefont {Pichler}}, \bibinfo {author} {\bibfnamefont {F.~G. S.~L.}\
  \bibnamefont {Brand{\~a}o}}, \bibinfo {author} {\bibfnamefont
  {S.}~\bibnamefont {Choi}},\ and\ \bibinfo {author} {\bibfnamefont
  {M.}~\bibnamefont {Endres}},\ }\bibfield  {title} {\bibinfo {title}
  {Preparing random states and benchmarking with many-body quantum chaos},\
  }\href {https://doi.org/10.1038/s41586-022-05442-1} {\bibfield  {journal}
  {\bibinfo  {journal} {Nature}\ }\textbf {\bibinfo {volume} {613}},\ \bibinfo
  {pages} {468} (\bibinfo {year} {2023})}\BibitemShut {NoStop}%
\bibitem [{\citenamefont {Ippoliti}\ and\ \citenamefont
  {Ho}(2022)}]{Ippoliti2022solvable}%
  \BibitemOpen
  \bibfield  {author} {\bibinfo {author} {\bibfnamefont {M.}~\bibnamefont
  {Ippoliti}}\ and\ \bibinfo {author} {\bibfnamefont {W.~W.}\ \bibnamefont
  {Ho}},\ }\bibfield  {title} {\bibinfo {title} {Solvable model of deep
  thermalization with distinct design times},\ }\href
  {https://doi.org/10.22331/q-2022-12-29-886} {\bibfield  {journal} {\bibinfo
  {journal} {{Quantum}}\ }\textbf {\bibinfo {volume} {6}},\ \bibinfo {pages}
  {886} (\bibinfo {year} {2022})}\BibitemShut {NoStop}%
\bibitem [{\citenamefont {Claeys}\ and\ \citenamefont
  {Lamacraft}(2022)}]{claeys2022emergent}%
  \BibitemOpen
  \bibfield  {author} {\bibinfo {author} {\bibfnamefont {P.~W.}\ \bibnamefont
  {Claeys}}\ and\ \bibinfo {author} {\bibfnamefont {A.}~\bibnamefont
  {Lamacraft}},\ }\bibfield  {title} {\bibinfo {title} {Emergent quantum state
  designs and biunitarity in dual-unitary circuit dynamics},\ }\href
  {https://doi.org/10.22331/q-2022-06-15-738} {\bibfield  {journal} {\bibinfo
  {journal} {{Quantum}}\ }\textbf {\bibinfo {volume} {6}},\ \bibinfo {pages}
  {738} (\bibinfo {year} {2022})}\BibitemShut {NoStop}%
\bibitem [{\citenamefont {Köstenberger}(2021)}]{kostenberger2021weingarten}%
  \BibitemOpen
  \bibfield  {author} {\bibinfo {author} {\bibfnamefont {G.}~\bibnamefont
  {Köstenberger}},\ }\href {https://arxiv.org/abs/2101.00921} {\bibinfo
  {title} {Weingarten calculus}} (\bibinfo {year} {2021}),\ \Eprint
  {https://arxiv.org/abs/2101.00921} {arXiv:2101.00921 [math.PR]} \BibitemShut
  {NoStop}%
\bibitem [{\citenamefont {Cramer}(2012)}]{cramer2012thermalization}%
  \BibitemOpen
  \bibfield  {author} {\bibinfo {author} {\bibfnamefont {M.}~\bibnamefont
  {Cramer}},\ }\bibfield  {title} {\bibinfo {title} {Thermalization under
  randomized local hamiltonians},\ }\href
  {https://doi.org/10.1088/1367-2630/14/5/053051} {\bibfield  {journal}
  {\bibinfo  {journal} {New J. Phys.}\ }\textbf {\bibinfo {volume} {14}},\
  \bibinfo {pages} {053051} (\bibinfo {year} {2012})}\BibitemShut {NoStop}%
\bibitem [{\citenamefont {Roberts}\ and\ \citenamefont
  {Yoshida}(2017)}]{roberts2017chaos}%
  \BibitemOpen
  \bibfield  {author} {\bibinfo {author} {\bibfnamefont {D.~A.}\ \bibnamefont
  {Roberts}}\ and\ \bibinfo {author} {\bibfnamefont {B.}~\bibnamefont
  {Yoshida}},\ }\bibfield  {title} {\bibinfo {title} {Chaos and complexity by
  design},\ }\href {https://doi.org/10.1007/JHEP04(2017)121} {\bibfield
  {journal} {\bibinfo  {journal} {J. High Energy Phys.}\ }\textbf {\bibinfo
  {volume} {2017}}\bibinfo  {number} { (4)},\ \bibinfo {pages}
  {121}}\BibitemShut {NoStop}%
\bibitem [{\citenamefont {Audenaert}(2007)}]{audenaert2007sharp}%
  \BibitemOpen
\bibfield  {number} {  }\bibfield  {author} {\bibinfo {author} {\bibfnamefont
  {K.~M.}\ \bibnamefont {Audenaert}},\ }\bibfield  {title} {\bibinfo {title} {A
  sharp continuity estimate for the von neumann entropy},\ }\href
  {https://doi.org/10.1088/1751-8113/40/28/S18} {\bibfield  {journal} {\bibinfo
   {journal} {J. Phys. A}\ }\textbf {\bibinfo {volume} {40}},\ \bibinfo {pages}
  {8127} (\bibinfo {year} {2007})}\BibitemShut {NoStop}%
\bibitem [{\citenamefont {Kim}\ and\ \citenamefont
  {Huse}(2013)}]{kim2013ballistic}%
  \BibitemOpen
  \bibfield  {author} {\bibinfo {author} {\bibfnamefont {H.}~\bibnamefont
  {Kim}}\ and\ \bibinfo {author} {\bibfnamefont {D.~A.}\ \bibnamefont {Huse}},\
  }\bibfield  {title} {\bibinfo {title} {Ballistic spreading of entanglement in
  a diffusive nonintegrable system},\ }\href
  {https://doi.org/10.1103/PhysRevLett.111.127205} {\bibfield  {journal}
  {\bibinfo  {journal} {Phys. Rev. Lett.}\ }\textbf {\bibinfo {volume} {111}},\
  \bibinfo {pages} {127205} (\bibinfo {year} {2013})}\BibitemShut {NoStop}%
\bibitem [{\citenamefont {Chan}\ and\ \citenamefont
  {Luca}(2024)}]{chan2024projected}%
  \BibitemOpen
  \bibfield  {author} {\bibinfo {author} {\bibfnamefont {A.}~\bibnamefont
  {Chan}}\ and\ \bibinfo {author} {\bibfnamefont {A.~D.}\ \bibnamefont
  {Luca}},\ }\href {https://arxiv.org/abs/2402.16939} {\bibinfo {title}
  {Projected state ensemble of a generic model of many-body quantum chaos}}
  (\bibinfo {year} {2024}),\ \Eprint {https://arxiv.org/abs/2402.16939}
  {arXiv:2402.16939 [quant-ph]} \BibitemShut {NoStop}%
\bibitem [{\citenamefont {Gorin}\ \emph {et~al.}(2006)\citenamefont {Gorin},
  \citenamefont {Prosen}, \citenamefont {Seligman},\ and\ \citenamefont
  {Žnidarič}}]{gorin2006dynamics}%
  \BibitemOpen
  \bibfield  {author} {\bibinfo {author} {\bibfnamefont {T.}~\bibnamefont
  {Gorin}}, \bibinfo {author} {\bibfnamefont {T.}~\bibnamefont {Prosen}},
  \bibinfo {author} {\bibfnamefont {T.~H.}\ \bibnamefont {Seligman}},\ and\
  \bibinfo {author} {\bibfnamefont {M.}~\bibnamefont {Žnidarič}},\ }\bibfield
   {title} {\bibinfo {title} {Dynamics of loschmidt echoes and fidelity
  decay},\ }\href
  {https://doi.org/https://doi.org/10.1016/j.physrep.2006.09.003} {\bibfield
  {journal} {\bibinfo  {journal} {Phys. Rep.}\ }\textbf {\bibinfo {volume}
  {435}},\ \bibinfo {pages} {33} (\bibinfo {year} {2006})}\BibitemShut
  {NoStop}%
\bibitem [{\citenamefont {Kim}\ \emph {et~al.}(2014)\citenamefont {Kim},
  \citenamefont {Ikeda},\ and\ \citenamefont {Huse}}]{kim2014testing}%
  \BibitemOpen
  \bibfield  {author} {\bibinfo {author} {\bibfnamefont {H.}~\bibnamefont
  {Kim}}, \bibinfo {author} {\bibfnamefont {T.~N.}\ \bibnamefont {Ikeda}},\
  and\ \bibinfo {author} {\bibfnamefont {D.~A.}\ \bibnamefont {Huse}},\
  }\bibfield  {title} {\bibinfo {title} {Testing whether all eigenstates obey
  the eigenstate thermalization hypothesis},\ }\href
  {https://doi.org/10.1103/PhysRevE.90.052105} {\bibfield  {journal} {\bibinfo
  {journal} {Phys. Rev. E}\ }\textbf {\bibinfo {volume} {90}},\ \bibinfo
  {pages} {052105} (\bibinfo {year} {2014})}\BibitemShut {NoStop}%
\bibitem [{\citenamefont {Oganesyan}\ and\ \citenamefont
  {Huse}(2007)}]{oganesyan2007localization}%
  \BibitemOpen
  \bibfield  {author} {\bibinfo {author} {\bibfnamefont {V.}~\bibnamefont
  {Oganesyan}}\ and\ \bibinfo {author} {\bibfnamefont {D.~A.}\ \bibnamefont
  {Huse}},\ }\bibfield  {title} {\bibinfo {title} {Localization of interacting
  fermions at high temperature},\ }\href
  {https://doi.org/10.1103/PhysRevB.75.155111} {\bibfield  {journal} {\bibinfo
  {journal} {Phys. Rev. B}\ }\textbf {\bibinfo {volume} {75}},\ \bibinfo
  {pages} {155111} (\bibinfo {year} {2007})}\BibitemShut {NoStop}%
\bibitem [{\citenamefont {Atas}\ \emph {et~al.}(2013)\citenamefont {Atas},
  \citenamefont {Bogomolny}, \citenamefont {Giraud},\ and\ \citenamefont
  {Roux}}]{atas2013distribution}%
  \BibitemOpen
  \bibfield  {author} {\bibinfo {author} {\bibfnamefont {Y.~Y.}\ \bibnamefont
  {Atas}}, \bibinfo {author} {\bibfnamefont {E.}~\bibnamefont {Bogomolny}},
  \bibinfo {author} {\bibfnamefont {O.}~\bibnamefont {Giraud}},\ and\ \bibinfo
  {author} {\bibfnamefont {G.}~\bibnamefont {Roux}},\ }\bibfield  {title}
  {\bibinfo {title} {Distribution of the ratio of consecutive level spacings in
  random matrix ensembles},\ }\href
  {https://doi.org/10.1103/PhysRevLett.110.084101} {\bibfield  {journal}
  {\bibinfo  {journal} {Phys. Rev. Lett.}\ }\textbf {\bibinfo {volume} {110}},\
  \bibinfo {pages} {084101} (\bibinfo {year} {2013})}\BibitemShut {NoStop}%
\bibitem [{\citenamefont {Huang}\ \emph {et~al.}(2020)\citenamefont {Huang},
  \citenamefont {Kueng},\ and\ \citenamefont {Preskill}}]{huang2020predicting}%
  \BibitemOpen
  \bibfield  {author} {\bibinfo {author} {\bibfnamefont {H.-Y.}\ \bibnamefont
  {Huang}}, \bibinfo {author} {\bibfnamefont {R.}~\bibnamefont {Kueng}},\ and\
  \bibinfo {author} {\bibfnamefont {J.}~\bibnamefont {Preskill}},\ }\bibfield
  {title} {\bibinfo {title} {Predicting many properties of a quantum system
  from very few measurements},\ }\href
  {https://doi.org/10.1038/s41567-020-0932-7} {\bibfield  {journal} {\bibinfo
  {journal} {Nat. Phys.}\ }\textbf {\bibinfo {volume} {16}},\ \bibinfo {pages}
  {1050} (\bibinfo {year} {2020})}\BibitemShut {NoStop}%
\bibitem [{\citenamefont {McGinley}\ and\ \citenamefont
  {Fava}(2023)}]{mcginley2023shadow}%
  \BibitemOpen
  \bibfield  {author} {\bibinfo {author} {\bibfnamefont {M.}~\bibnamefont
  {McGinley}}\ and\ \bibinfo {author} {\bibfnamefont {M.}~\bibnamefont
  {Fava}},\ }\bibfield  {title} {\bibinfo {title} {Shadow tomography from
  emergent state designs in analog quantum simulators},\ }\href
  {https://doi.org/10.1103/PhysRevLett.131.160601} {\bibfield  {journal}
  {\bibinfo  {journal} {Phys. Rev. Lett.}\ }\textbf {\bibinfo {volume} {131}},\
  \bibinfo {pages} {160601} (\bibinfo {year} {2023})}\BibitemShut {NoStop}%
\bibitem [{\citenamefont {Tran}\ \emph {et~al.}(2023)\citenamefont {Tran},
  \citenamefont {Mark}, \citenamefont {Ho},\ and\ \citenamefont
  {Choi}}]{tran2023measuring}%
  \BibitemOpen
  \bibfield  {author} {\bibinfo {author} {\bibfnamefont {M.~C.}\ \bibnamefont
  {Tran}}, \bibinfo {author} {\bibfnamefont {D.~K.}\ \bibnamefont {Mark}},
  \bibinfo {author} {\bibfnamefont {W.~W.}\ \bibnamefont {Ho}},\ and\ \bibinfo
  {author} {\bibfnamefont {S.}~\bibnamefont {Choi}},\ }\bibfield  {title}
  {\bibinfo {title} {Measuring arbitrary physical properties in analog quantum
  simulation},\ }\href {https://doi.org/10.1103/PhysRevX.13.011049} {\bibfield
  {journal} {\bibinfo  {journal} {Phys. Rev. X}\ }\textbf {\bibinfo {volume}
  {13}},\ \bibinfo {pages} {011049} (\bibinfo {year} {2023})}\BibitemShut
  {NoStop}%
\bibitem [{\citenamefont {Cotler}\ \emph {et~al.}(2017)\citenamefont {Cotler},
  \citenamefont {Hunter-Jones}, \citenamefont {Liu},\ and\ \citenamefont
  {Yoshida}}]{cotler2017chaos}%
  \BibitemOpen
  \bibfield  {author} {\bibinfo {author} {\bibfnamefont {J.}~\bibnamefont
  {Cotler}}, \bibinfo {author} {\bibfnamefont {N.}~\bibnamefont
  {Hunter-Jones}}, \bibinfo {author} {\bibfnamefont {J.}~\bibnamefont {Liu}},\
  and\ \bibinfo {author} {\bibfnamefont {B.}~\bibnamefont {Yoshida}},\
  }\bibfield  {title} {\bibinfo {title} {Chaos, complexity, and random
  matrices},\ }\href {https://doi.org/10.1007/JHEP11(2017)048} {\bibfield
  {journal} {\bibinfo  {journal} {J. High Energy Phys.}\ }\textbf {\bibinfo
  {volume} {2017}}\bibinfo  {number} { (11)},\ \bibinfo {pages}
  {48}}\BibitemShut {NoStop}%
\bibitem [{\citenamefont {D’Alessio}\ \emph {et~al.}(2016)\citenamefont
  {D’Alessio}, \citenamefont {Kafri}, \citenamefont {Polkovnikov},\ and\
  \citenamefont {Rigol}}]{dAlessio2016quantum}%
  \BibitemOpen
\bibfield  {number} {  }\bibfield  {author} {\bibinfo {author} {\bibfnamefont
  {L.}~\bibnamefont {D’Alessio}}, \bibinfo {author} {\bibfnamefont
  {Y.}~\bibnamefont {Kafri}}, \bibinfo {author} {\bibfnamefont
  {A.}~\bibnamefont {Polkovnikov}},\ and\ \bibinfo {author} {\bibfnamefont
  {M.}~\bibnamefont {Rigol}},\ }\bibfield  {title} {\bibinfo {title} {From
  quantum chaos and eigenstate thermalization to statistical mechanics and
  thermodynamics},\ }\href {https://doi.org/10.1080/00018732.2016.1198134}
  {\bibfield  {journal} {\bibinfo  {journal} {Adv. Phys.}\ }\textbf {\bibinfo
  {volume} {65}},\ \bibinfo {pages} {239–362} (\bibinfo {year}
  {2016})}\BibitemShut {NoStop}%
\bibitem [{\citenamefont {Schatzki}(2024)}]{schatzki2024random}%
  \BibitemOpen
  \bibfield  {author} {\bibinfo {author} {\bibfnamefont {L.}~\bibnamefont
  {Schatzki}},\ }\bibfield  {title} {\bibinfo {title} {Random real valued and
  complex valued states cannot be efficiently distinguished},\ }\href
  {https://arxiv.org/abs/2410.17213} {\bibfield  {journal} {\bibinfo  {journal}
  {arXiv:2410.17213}\ } (\bibinfo {year} {2024})}\BibitemShut {NoStop}%
\end{thebibliography}%

\let\addcontentsline\oldaddcontentsline

\appendix

\onecolumngrid
\newpage

\setcounter{secnumdepth}{2}
\setcounter{figure}{0}
\renewcommand{\thefigure}{S\arabic{figure}}


\end{document}


\author{Wai-Keong Mok}
\affiliation{Institute for Quantum Information and Matter, California Institute of Technology, Pasadena, CA 91125, USA}
\author{Tobias Haug}
\affiliation{Quantum Research Centre, Technology Innovation Institute, Abu Dhabi, UAE}
\author{Adam L. Shaw}
\thanks{Present address: Department of Physics, Stanford University, Stanford, CA}
\affiliation{Institute for Quantum Information and Matter, California Institute of Technology, Pasadena, CA 91125, USA}

\author{Manuel Endres}
\affiliation{Institute for Quantum Information and Matter, California Institute of Technology, Pasadena, CA 91125, USA}
\author{John Preskill}
\affiliation{Institute for Quantum Information and Matter, California Institute of Technology, Pasadena, CA 91125, USA}
\affiliation{AWS Center for Quantum Computing, Pasadena CA 91125}

\title{Supplemental Material for ``Optimal Conversion from Classical to Quantum Randomness via Quantum Chaos"}

\maketitle

\setcounter{theorem}{1}
\makeatletter
\renewcommand{\thefigure}{S\arabic{figure}}
We provide additional technical details and data supporting the claims in the main text. 

\tableofcontents

\section{Quantifying quantum randomness}
\label{app:quantifying_qrandom}
Here, we briefly introduce the various measures of quantum randomness and the relationship between them. Given an ensemble of states $\mathcal{E} = \{p_i,\ket{\psi_i}\}_i$ with probabilities $p_i$, we can construct the density operator
\begin{equation}
    \rho^{(k)} \equiv \Eset{\ket{\psi}_i \sim \mathcal{E}}\left[(\ket{\psi_i}\bra{\psi_i})^{\otimes k}\right] 
= \sum_i p_i (\ket{\psi_i}\bra{\psi_i})^{\otimes k}
\end{equation} 
for the $k$-th moment of the ensemble $\mathcal{E}$. Analogously, the $k$-th moment of the Haar ensemble $\mathcal{E}_{\text{Haar}}$ is denoted as 
\begin{equation}
    \rho^{(k)}_{\text{Haar}} = \int_{\text{Haar}(d)} d\psi (\ket{\psi}\bra{\psi})^{\otimes k}.
\end{equation}
The Haar measure on the space of pure states (also known as the Fubini-Study metric) is induced by the Haar measure on the unitary group $\mathcal{U}(d)$, where $d$ is the dimension of the Hilbert space. For any ensemble $\mathcal{E}$, $\rho^{(k)}$ lies in the symmetric subspace of the $k$-fold Hilbert space $\mathcal{H}^{\otimes k}$, with dimension $D_k$ given by~\cite{harrow2013church}
\begin{equation}
    D_k = {{d+k-1}\choose{k}}.
\label{eq:Dk}
\end{equation} 
In particular,
\begin{equation}
    \rho_{\text{Haar}}^{(k)} = \frac{1}{D_k} \hat{P}_\text{sym} = \frac{(d-1)!}{(d+k-1)!} \sum_{\pi \in S_k} \hat{\pi} 
\label{eq:rho_haar_defn}
\end{equation}
is proportional to the orthogonal projector $\hat{P}_{\text{sym}}$ onto the symmetric subspace, which can be expressed as a uniform sum over the (unitary) permutation operators $\hat{\pi}$ which permute the tensor factors between the $k$ copies, corresponding to each element $\pi$ of the symmetric group $S_k$. Equation~\eqref{eq:rho_haar_defn} can be derived using representation theory and the Schur-Weyl duality~\cite{harrow2013church}.

To measure the quantum randomness of the ensemble $\mathcal{E}$ (i.e., closeness to the Haar ensemble), we use the normalized distance between $\rho^{(k)}$ and $\rho^{(k)}_\text{Haar}$, defined as
\begin{equation}
    \Delta_{\alpha}^{(k)} = \frac{\norm{\rho^{(k)} - \rho_{\text{Haar}}^{(k)}}_\alpha}{\norm{\rho_{\text{Haar}}^{(k)}}_\alpha},
\label{eq:delta_alphas}
\end{equation}
with $\norm{A}_\alpha = (\text{Tr} |A|^{\alpha})^{1/\alpha}$ the Schatten $\alpha$-norm, for $\alpha \geq 1$. This gives a family of distance measures parameterized by the Schatten index $\alpha$. The special cases of $\alpha = 1, 2, \infty$ correspond to the trace, Hilbert-Schmidt and operator norms respectively. In~\cite{ippoliti2023dynamical}, it was shown that $\Delta_\alpha^{(k)}$ satisfies monotonicity, i.e., $\Delta_{\alpha}^{(k)} \leq \Delta_\alpha^{(k+1)}$. This means that if we define an ensemble $\mathcal{E}$ as an $\epsilon$-approximate $k$-design when $\Delta_\alpha^{(k)} \leq \epsilon$, it is automatically an $\epsilon$-approximate $k^\prime$-design for $k^\prime < k$ with the same Schatten index $\alpha$. Thus, $\Delta_\alpha^{(k)}$ provides a sensible definition of approximate state designs for any $\alpha \geq 1$: 
\begin{definition}[$\epsilon$-approximate state $k$-design]
\label{defn:approxdesign}
   A state ensemble $\mathcal{E} = \{p_i, \ket{\psi_i}\}$ is an $\epsilon$-approximate state $k$-design under the Schatten $\alpha$-norm, for $\alpha \geq 1$, if
   \begin{equation}\normalfont
       \Delta_\alpha^{(k)} \equiv     \frac{\norm{\rho_{\mathcal{E}}^{(k)} - \rho_{\text{Haar}}^{(k)}}_\alpha}{\norm{\rho_{\text{Haar}}^{(k)}}_\alpha} \leq \epsilon.
   \end{equation}
\end{definition}
$\rho^{(k)}$ is said to be an \textit{exact state $k$-design} if and only if $\Delta_\alpha^{(k)} = 0$. We now derive a relationship between the different distances labelled by $\alpha$, showing that they are equivalent, up to factors of dimension $D_k$ of the symmetric subspace.
\begin{proposition}
\label{prop:dist_inequality}
    The normalized distances $\Delta_\alpha^{(k)}$ obeys the inequality
\begin{equation}
    \Delta_\alpha^{(k)} \leq \Delta_{\alpha+1}^{(k)} \leq D_k^{\frac{1}{\alpha(\alpha+1)}}  \Delta_\alpha^{(k)},
\label{eq:relation_bt_alphanorm}
\end{equation}
where $D_k = {d+k-1 \choose k}$ is the dimension of the symmetric subspace of the $k$-fold Hilbert space $\mathcal{H}^{\otimes k}$.
\end{proposition}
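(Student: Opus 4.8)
The plan is to reduce the normalized-distance inequality to a statement about Schatten norms of a single operator, namely $X \defeq \rho^{(k)} - \rho_{\text{Haar}}^{(k)}$, together with the normalization constant $\norm{\rho_{\text{Haar}}^{(k)}}_\alpha$. Since $\rho_{\text{Haar}}^{(k)} = D_k^{-1}\hat P_{\text{sym}}$ with $\hat P_{\text{sym}}$ a rank-$D_k$ projector, its Schatten $\alpha$-norm is $D_k^{-1}\cdot D_k^{1/\alpha} = D_k^{(1-\alpha)/\alpha}$, so the ratio of normalizations between index $\alpha+1$ and index $\alpha$ is a clean power of $D_k$. The whole proposition then follows by combining (i) the standard monotonicity of Schatten norms in the index, $\norm{X}_{\alpha+1}\le\norm{X}_\alpha$, and (ii) the standard reverse inequality on a space of dimension $D_k$, $\norm{X}_\alpha \le D_k^{1/\alpha - 1/(\alpha+1)}\norm{X}_{\alpha+1}$, which is just Hölder applied to $\langle |X|^\alpha, \mathbbm{1}\rangle$ with the conjugate exponents $\tfrac{\alpha+1}{\alpha}$ and $\alpha+1$. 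The key observation making this work is that $X$ is supported entirely on the symmetric subspace of dimension $D_k$ (both $\rho^{(k)}$ and $\rho_{\text{Haar}}^{(k)}$ are), so $D_k$ — not the full $d^k$ — is the relevant dimension in the reverse bound.

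First I would record the norm-interpolation facts: for any Hermitian operator $X$ on a $D$-dimensional space, $\norm{X}_{\alpha+1}\le\norm{X}_\alpha$ and $\norm{X}_\alpha\le D^{1/(\alpha(\alpha+1))}\norm{X}_{\alpha+1}$, the exponent coming from $\tfrac1\alpha-\tfrac1{\alpha+1}=\tfrac1{\alpha(\alpha+1)}$. Next I would apply these with $D=D_k$ and $X=\rho^{(k)}-\rho_{\text{Haar}}^{(k)}$. Then I would divide through by the appropriate normalization and track how the normalization changes with the index: writing $c_\alpha \defeq \norm{\rho_{\text{Haar}}^{(k)}}_\alpha = D_k^{(1-\alpha)/\alpha}$, one checks $c_\alpha/c_{\alpha+1} = D_k^{(1-\alpha)/\alpha - (-\alpha)/(\alpha+1)} = D_k^{1/(\alpha(\alpha+1))}$. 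Feeding this in:
\begin{equation}
\Delta_{\alpha+1}^{(k)} = \frac{\norm{X}_{\alpha+1}}{c_{\alpha+1}} \le \frac{\norm{X}_\alpha}{c_{\alpha+1}} = D_k^{\frac{1}{\alpha(\alpha+1)}}\,\frac{\norm{X}_\alpha}{c_\alpha} = D_k^{\frac{1}{\alpha(\alpha+1)}}\Delta_\alpha^{(k)},
\end{equation}
which is the right-hand inequality, and similarly
\begin{equation}
\Delta_\alpha^{(k)} = \frac{\norm{X}_\alpha}{c_\alpha} \le D_k^{\frac{1}{\alpha(\alpha+1)}}\,\frac{\norm{X}_{\alpha+1}}{c_\alpha} = \frac{\norm{X}_{\alpha+1}}{c_{\alpha+1}} = \Delta_{\alpha+1}^{(k)},
\end{equation}
the left-hand inequality.

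There is essentially no hard step here — the content is bookkeeping of a $D_k^{1/(\alpha(\alpha+1))}$ factor — but the one point that needs care, and which I would state explicitly, is \emph{why} $D_k$ rather than $d^k$ appears in the reverse Schatten bound: it is because $\rho^{(k)}$ lies in the symmetric subspace (as noted after Eq.~\eqref{eq:Dk}), hence so does the difference $X$, so when applying Hölder's inequality $\norm{X}_\alpha^\alpha = \Tr(|X|^\alpha \cdot \hat P_{\text{sym}}) \le \norm{|X|^\alpha}_{(\alpha+1)/\alpha}\norm{\hat P_{\text{sym}}}_{\alpha+1} = \norm{X}_{\alpha+1}^\alpha D_k^{1/(\alpha+1)}$ the trace of the identity is effectively $D_k$. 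A minor sanity check worth including: at $\alpha=1$ this recovers the familiar statement that trace distance and Hilbert–Schmidt distance to a $k$-design are equivalent up to $\sqrt{D_k}$, matching the claim in the main text that Definition~\ref{defn:approxdesign} is consistent with the usual trace-distance definition.
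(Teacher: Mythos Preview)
Your proposal is correct and follows essentially the same approach as the paper: compute $\norm{\rho_{\text{Haar}}^{(k)}}_\alpha = D_k^{(1-\alpha)/\alpha}$, invoke the standard Schatten-norm equivalences $\norm{X}_{\alpha+1}\le\norm{X}_\alpha\le D_k^{1/(\alpha(\alpha+1))}\norm{X}_{\alpha+1}$ for $X=\rho^{(k)}-\rho_{\text{Haar}}^{(k)}$ supported on the $D_k$-dimensional symmetric subspace, and divide through by the normalization. The only cosmetic difference is that the paper phrases the dimension restriction as $\mathrm{rank}(X)\le D_k$ and cites ``equivalence of Schatten norms'' directly, whereas you spell out the H\"older step explicitly; both arrive at the same two inequalities by the same mechanism.
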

\begin{proof}
    Observe that
\begin{equation}
    \norm{\rho_{\text{Haar}}^{(k)}}_\alpha = (\Tr (\rho_{\text{Haar}}^{(k) \alpha}) )^{1/\alpha} = D_k^{\frac{1-\alpha}{\alpha}},
\end{equation}
using the fact that $\rho_{\text{Haar}}^{(k)}$ is proportional to the orthogonal projector onto the symmetric subspace; see Eq.~\eqref{eq:rho_haar_defn}. Thus,
\begin{equation}
\norm{\rho_{\text{Haar}}^{(k)}}_{\alpha+1} = D_k^{-\frac{\alpha}{\alpha+1}} = \norm{\rho_{\text{Haar}}^{(k)}}_{\alpha} D_k^{-\frac{1}{\alpha(\alpha+1)}}.
\label{eq:relation_bt_alphanorm_haar}
\end{equation}
Now, using the equivalence of Schatten norms, and $\text{rank}(\rho^{(k)} - \rho_{\text{Haar}}^{(k)}) \leq D_k$, we have
\begin{equation}
    \norm{\rho^{(k)} - \rho_{\text{Haar}}^{(k)}}_{\alpha+1} \leq \norm{\rho^{(k)} - \rho_{\text{Haar}}^{(k)}}_{\alpha} \leq D_k^{\frac{1}{\alpha(\alpha+1)}} \norm{\rho^{(k)} - \rho_{\text{Haar}}^{(k)}}_{\alpha+1}.
\end{equation}
which can be inverted to give
\begin{equation}
    D_k^{-\frac{1}{\alpha(\alpha+1)}}\norm{\rho^{(k)} - \rho_{\text{Haar}}^{(k)}}_{\alpha} \leq \norm{\rho^{(k)} - \rho_{\text{Haar}}^{(k)}}_{\alpha+1} \leq \norm{\rho^{(k)} - \rho_{\text{Haar}}^{(k)}}_{\alpha}.
\end{equation}
Dividing through by $\norm{ \rho_{\text{Haar}}^{(k)}}_{\alpha+1}$ and using Eq.~\eqref{eq:relation_bt_alphanorm_haar} gives the desired result Eq.~\eqref{eq:relation_bt_alphanorm}.
\end{proof}
Proposition~\ref{prop:dist_inequality} implies that if the ensemble $\mathcal{E}$ forms an $\epsilon$-approximate $k$-design under the Schatten-$\alpha$ norm, it is also an $\epsilon$-approximate $k$-design under the Schatten-$\alpha^\prime$ norm, for any  $\alpha^\prime \leq \alpha$. In particular, substituting $\alpha = 1$ in Proposition~\ref{prop:dist_inequality}, we have the following bound on the trace distance,
\begin{equation}
    D_k^{-1/2} \Delta_2^{(k)} \leq \Delta_1^{(k)} \leq \Delta_2^{(k)}.
\label{eq:bounds_td}
\end{equation}

\subsection{Operational meaning of distance measures}
For $\alpha = 1$, $\Delta_1^{(k)}$ is exactly twice the trace distance between $\rho^{(k)}$ and $\rho_{\text{Haar}}^{(k)}$. This has a nice operational definition using the Holevo-Helstrom theorem~\cite{holevo1973statistical,helstrom1969quantum}
\begin{equation}
    P_\text{succ,k} = \frac{1}{2} + \frac{\Delta_1^{(k)}}{4},
\end{equation}
where $P_\text{succ,k}$ is the optimal single-shot success probability of distinguishing $\mathcal{E}$ from the Haar ensemble $\mathcal{E}_\text{Haar}$, given $k$ copies of the quantum states. Thus, $\Delta_1^{(k)}$ is directly related to the indistinguishability between $\mathcal{E}$ and the Haar ensemble. For general $\alpha \geq 1$, we can assign an operational meaning to $\Delta_\alpha^{(k)}$ in terms of the deviation of expectation values:
\begin{proposition}
    Let $A \in \mathcal{L}(\mathcal{H}^{\otimes k})$ satisfying $\lVert{A\rVert}_\beta = 1$ be arbitrary, and $\alpha^{-1} + \beta^{-1} = 1$. If $\mathcal{E}$ is an $\epsilon$-approximate state $k$-design under the Schatten $\alpha$-norm, then
    \begin{equation}\normalfont
        \left|\braket{A}_{\mathcal{E}} - \braket{A}_{\mathcal{E}_\text{Haar}}\right| \leq \Delta_\alpha^{(k)}D_k^{\frac{1-\alpha}{\alpha}} \leq \epsilon D_k^{\frac{1-\alpha}{\alpha}},
    \end{equation}
where $\braket{A}_\mathcal{E}$ is the ensemble average of $\braket{A}$ over $\mathcal{E}$.
\end{proposition}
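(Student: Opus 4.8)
The plan is to recognize that the ensemble average of $\braket{A}$ is simply a Hilbert--Schmidt pairing with the $k$-th moment operator. First I would write
\begin{equation}
\braket{A}_{\mathcal{E}} = \sum_i p_i \bra{\psi_i}^{\otimes k} A \ket{\psi_i}^{\otimes k} = \Tr\!\big(A\,\rho^{(k)}\big), \qquad \braket{A}_{\mathcal{E}_\text{Haar}} = \Tr\!\big(A\,\rho_{\text{Haar}}^{(k)}\big),
\end{equation}
so that the quantity to bound becomes $\big|\Tr\!\big(A(\rho^{(k)} - \rho_{\text{Haar}}^{(k)})\big)\big|$. This reduces the statement to a single norm estimate on the difference of moment operators.

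The core step is H\"older's inequality for Schatten norms: for conjugate exponents $\alpha^{-1}+\beta^{-1}=1$ one has $|\Tr(XY)| \le \norm{X}_\beta \norm{Y}_\alpha$. Applying this with $X = A$ and $Y = \rho^{(k)} - \rho_{\text{Haar}}^{(k)}$, together with the normalization $\norm{A}_\beta = 1$, immediately yields
\begin{equation}
\big|\braket{A}_{\mathcal{E}} - \braket{A}_{\mathcal{E}_\text{Haar}}\big| \le \norm{\rho^{(k)} - \rho_{\text{Haar}}^{(k)}}_\alpha .
\end{equation}
I would note in passing that the boundary cases $(\alpha,\beta)=(1,\infty)$ and $(\alpha,\beta)=(\infty,1)$ are covered by the same inequality, corresponding to trace-norm/operator-norm duality.

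To finish, I substitute the definition $\norm{\rho^{(k)} - \rho_{\text{Haar}}^{(k)}}_\alpha = \Delta_\alpha^{(k)}\,\norm{\rho_{\text{Haar}}^{(k)}}_\alpha$ and invoke the identity $\norm{\rho_{\text{Haar}}^{(k)}}_\alpha = D_k^{(1-\alpha)/\alpha}$, which was already established in the proof of Proposition~\ref{prop:dist_inequality} using the fact that $\rho_{\text{Haar}}^{(k)} = D_k^{-1}\hat{P}_{\text{sym}}$ is $D_k^{-1}$ times a projector of rank $D_k$. This gives $\big|\braket{A}_{\mathcal{E}} - \braket{A}_{\mathcal{E}_\text{Haar}}\big| \le \Delta_\alpha^{(k)} D_k^{(1-\alpha)/\alpha}$, and the second inequality follows at once from the hypothesis $\Delta_\alpha^{(k)} \le \epsilon$ in Definition~\ref{defn:approxdesign}.

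There is essentially no serious obstacle here; the proof is a two-line consequence of H\"older's inequality once the moment-operator reformulation is made. The only points requiring mild care are stating H\"older with the correct conjugate exponents and keeping the normalization conventions for $\norm{\rho_{\text{Haar}}^{(k)}}_\alpha$ consistent with the earlier computation.
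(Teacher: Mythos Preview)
Your proof is correct and essentially identical to the paper's: the paper invokes the duality of Schatten norms, writing $\norm{\rho^{(k)}-\rho_{\text{Haar}}^{(k)}}_\alpha = \sup_{\norm{A}_\beta \le 1}\big|\Tr\big(A(\rho^{(k)}-\rho_{\text{Haar}}^{(k)})\big)\big|$, which is just the sharp form of the H\"older inequality you apply directly. The remaining bookkeeping (identifying $\braket{A}_{\mathcal{E}} = \Tr(A\rho^{(k)})$ and computing $\norm{\rho_{\text{Haar}}^{(k)}}_\alpha = D_k^{(1-\alpha)/\alpha}$) is the same in both.
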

\begin{proof}
    By the duality of the Schatten norm,
    \begin{equation}
        \Delta_\alpha^{(k)} D_k^{\frac{1-\alpha}{\alpha}} = \norm{\rho^{(k)}-\rho_{\text{Haar}}^{(k)}}_\alpha = \sup_{\norm{A}_\beta \leq 1} \left| \text{Tr}\left(A \left(\rho^{(k)}-\rho_\text{Haar}^{(k)}\right)\right)\right| = \sup_{\norm{A}_\beta \leq 1} \left| \braket{A}_{\mathcal{E}} - \braket{A}_{\mathcal{E}_\text{Haar}}\right|.
    \end{equation}
\end{proof}
As an example, if $R$ is a Pauli string acting on $nk$ qubits (here $d = 2^n$), $\lVert R \rVert_\beta = 2^{nk/\beta} = 2^{nk (\alpha-1)/\alpha}$, so
\begin{equation}
    \left| \braket{R}_{\mathcal{E}} - \braket{R}_{\mathcal{E}_\text{Haar}}\right| \leq \epsilon \left(\frac{2^{n k}}{D_k}\right)^{\frac{\alpha-1}{\alpha}} \sim \epsilon (k!)^{\frac{\alpha-1}{\alpha}} \quad (k \ll 2^n).
\end{equation}

\subsection{Frame potential}
\label{app:framepot}
For $\alpha = 2$ in Eq.~\eqref{eq:delta_alphas}, the normalized Hilbert-Schmidt distance $\Delta_2^{(k)}$ is often simpler to compute than $\alpha = 1$. This can be seen by rewriting the $\Delta_2^{(k)}$ in terms of the frame potential $F^{(k)} = \text{Tr}\left({\rho^{(k)}}^2\right)$,
\begin{equation}
    \Delta_2^{(k)} = \left( \frac{F^{(k)}}{F_{\text{Haar}}^{(k)}} - 1 \right)^{1/2}.
\label{eq:delta2}
\end{equation}
From Eq.~\eqref{eq:rho_haar_defn}, the frame potential of the Haar ensemble is given by
\begin{equation}
    F_{\text{Haar}}^{(k)}(d) = {d+k-1\choose{k}}^{-1} = \frac{1}{D_k}.
\end{equation}
For the ensemble $\mathcal{E} = \{p_i,\ket{\psi_i}\}_i$, the frame potential is
\begin{equation}
    F^{(k)} = \E_{i,j \sim \mathcal{E}} \sparens{|\braket{\psi_i|\psi_j}|^{2k}} = \sum_{i,j = 1}^{|\mathcal{E}|} p_i p_j |\braket{\psi_i|\psi_j}|^{2k} 
\end{equation}
which is often more convenient to compute numerically (as well as analytically) without having to construct $\rho^{(k)}$ explicitly. For numerical calculations where the cardinality $|\mathcal{E}|$ of the ensemble is large, $F^{(k)}$ can be estimated using Monte Carlo integration, by sampling from the ensemble.

The frame potential $F^{(k)}$ also has a nice physical interpretation as it can be related to the quantum R\'{e}nyi 2-entropy of $\rho^{(k)}$:
\begin{equation}
    F^{(k)} = 2^{-S_2(\rho^{(k)})}.
\label{eq:framepot_renyi}
\end{equation}
Intuitively, a lower frame potential implies a higher quantum R\'{e}nyi 2-entropy $S_2(\rho^{(k)})$ and thus more quantum randomness. In the main text, we use $\Delta^{(k)} \equiv \Delta_2^{(k)}$ by default and omit the subscript.

\subsection{Root-mean-square distance}

In the main text, we quantify the randomness of the projected ensemble via the root-mean-square distance
\begin{equation}
    \Delta_{\text{rms}}^{(k)} = \sqrt{\E\parens{\Delta_2^{(k)}}^2},
\end{equation}
where the average is taken over the unitary $U$, drawn from some distribution such as the Haar measure on the unitary group. This can be related to conventional measures of quantum randomness based on the trace distance $\Delta_1^{(k)}$ (we omit the factor of $1/2$ for convenience). In particular, we have
\begin{equation}
    \E \Delta_1^{(k)} \leq \E \Delta_2^{(k)} \leq \Delta_{\text{rms}}^{(k)},
\label{eq:TD_rms_bound}
\end{equation}
where the first inequality comes from Eq.~\eqref{eq:bounds_td}. Thus, the root-mean-square distance sets an upper bound for the average trace distance. Moreover, we can use $\Delta_{\text{rms}}^{(k)}$ to obtain a probabilistic bound on $\Delta_1^{(k)}$, for a single instance of $U$.
\begin{proposition}
\label{prop:markovbound_TD}
    Let $U$ be sampled from a distribution $\mathcal{D}$ of unitaries, and $\mathcal{E}$ be any state ensemble. The root-mean-square (normalized) Hilbert-Schmidt distance between the $k$-th moment $\rho^{(k)}$ of the ensemble $\mathcal{E}$ and the $k$-th moment $\rho_{\text{Haar}}^{(k)}$ of the Haar ensemble is denoted as $\Delta_{\text{rms}}^{(k)}$. With probability at least $1 - \delta$, the trace distance $\Delta_1^{(k)}$ is upper bounded by
    \begin{equation}
        \Delta_1^{(k)} \leq \frac{\Delta_{\text{rms}}^{(k)}}{\delta}.
    \end{equation}
Equivalently, $\mathcal{E}$ forms an $(\Delta_{\text{rms}}^{(k)}/\delta)$-approximate state $k$-design, under the Schatten-$1$ norm, by Definition~\ref{defn:approxdesign}.
\end{proposition}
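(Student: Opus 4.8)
The plan is to prove Proposition~\ref{prop:markovbound_TD} via a straightforward application of Markov's inequality to the nonnegative random variable $\left(\Delta_2^{(k)}\right)^2$, combined with the norm comparison $\Delta_1^{(k)} \leq \Delta_2^{(k)}$ established in Eq.~\eqref{eq:bounds_td}. The key observation is that $\Delta_{\text{rms}}^{(k)}$ is by definition $\sqrt{\E\left(\Delta_2^{(k)}\right)^2}$, so $\left(\Delta_{\text{rms}}^{(k)}\right)^2$ is precisely the expectation of $\left(\Delta_2^{(k)}\right)^2$ over the distribution $\mathcal{D}$ of unitaries.

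First I would invoke Markov's inequality on $X = \left(\Delta_2^{(k)}\right)^2 \geq 0$: for any threshold $t > 0$, $\Pr[X \geq t] \leq \E[X]/t = \left(\Delta_{\text{rms}}^{(k)}\right)^2/t$. Choosing $t = \left(\Delta_{\text{rms}}^{(k)}\right)^2/\delta^2$ gives $\Pr\!\left[\left(\Delta_2^{(k)}\right)^2 \geq \left(\Delta_{\text{rms}}^{(k)}/\delta\right)^2\right] \leq \delta^2 \leq \delta$ (assuming $\delta \leq 1$, which is the relevant regime). Taking the complement, with probability at least $1-\delta$ we have $\Delta_2^{(k)} < \Delta_{\text{rms}}^{(k)}/\delta$. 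Second, I would apply the left inequality of Eq.~\eqref{eq:bounds_td}, namely $\Delta_1^{(k)} \leq \Delta_2^{(k)}$, to conclude that on this same event $\Delta_1^{(k)} \leq \Delta_{\text{rms}}^{(k)}/\delta$. Finally, the ``equivalently'' clause follows immediately from Definition~\ref{defn:approxdesign} with $\alpha = 1$ and $\epsilon = \Delta_{\text{rms}}^{(k)}/\delta$.

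Honestly, there is no real obstacle here: the statement is a packaging of Markov's inequality, and the only things to be careful about are (i) applying Markov to the \emph{squared} distance rather than the distance itself, so that the variance-free first-moment bound yields the $1/\delta$ scaling after taking square roots, and (ii) making sure the norm comparison goes in the direction that upgrades a Hilbert--Schmidt statement to a trace-distance statement, which it does. One could alternatively state a slightly sharper bound using $t = \left(\Delta_{\text{rms}}^{(k)}\right)^2/\delta$ to get $\Delta_1^{(k)} \leq \Delta_{\text{rms}}^{(k)}/\sqrt{\delta}$ with probability $1-\delta$, but the cleaner $1/\delta$ form as stated is what follows from the choice above, so I would present exactly that.
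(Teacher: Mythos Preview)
Your proof is correct and follows essentially the same approach as the paper: Markov's inequality combined with the norm comparison $\Delta_1^{(k)} \leq \Delta_2^{(k)}$. The only cosmetic difference is that the paper applies Markov directly to $\Delta_1^{(k)}$ (using $\E\,\Delta_1^{(k)} \leq \Delta_{\text{rms}}^{(k)}$ from Eq.~\eqref{eq:TD_rms_bound}) rather than to $\bigl(\Delta_2^{(k)}\bigr)^2$, which gives the stated bound in one step without the intermediate $\delta^2 \leq \delta$; as you correctly observe, your squaring route actually yields the slightly sharper $\Delta_{\text{rms}}^{(k)}/\sqrt{\delta}$ if one does not throw away that factor.
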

\begin{proof}
By Markov's inequality,
\begin{equation}
    \Pr_{U \sim \mathcal{D}}\sparens{\Delta_1^{(k)} \geq \epsilon} \leq \frac{\Eset{U \sim \mathcal{D}}\sparens{\Delta_1^{(k)}}}{\epsilon} \leq \frac{\Delta_{\text{rms}}^{(k)}}{\epsilon},
\end{equation}
where we used Eq.~\eqref{eq:TD_rms_bound}. Setting the right hand side as $\delta$ gives the desired result.
\end{proof}
Proposition~\ref{prop:markovbound_TD} is valid for any ensemble $\mathcal{E}$. For the particular case of the projected ensemble, we can combine Proposition~\ref{prop:markovbound_TD} and Theorem 1 (in the main text) to obtain an explicit bound for the trace distance.
\begin{theorem}[Projected ensembles form approximate $k$-designs]
    Let $U$ be sampled from the Haar measure on the unitary group $\mathcal{U}(2^{N_A + N_B})$. With probability at least $1 - \delta$, the projected ensemble $\mathcal{E}$ on $N_A$ qubits with classical entropy $S_c$ (as defined in the main text) forms an $\epsilon$-approximate state $k$-design under the Schatten-$1$ norm, where $k < 2^{(N_A + N_B)/4}$, and
\begin{equation}
    \epsilon^2 \leq \frac{1}{\delta^2 k! 2^{S_c + N_B - k N_A}},
\label{eq:rms_tdbound}
\end{equation}
as $N_A, N_B \to \infty$.
\end{theorem}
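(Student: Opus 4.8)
The plan is to obtain this bound essentially for free by chaining together Proposition~\ref{prop:markovbound_TD} and Theorem~\ref{thm:rms_dist}, which between them already do all the work. First I would instantiate Proposition~\ref{prop:markovbound_TD} with the distribution $\mathcal{D}$ taken to be the Haar measure on $\mathcal{U}(2^{N_A+N_B})$ and $\mathcal{E}$ the classically-enhanced projected ensemble built from $U$ and $\mathcal{E}_{\text{init}}$. This immediately gives that, with probability at least $1-\delta$ over the draw of $U$, the Schatten-$1$ distance obeys $\Delta_1^{(k)} \leq \Delta_{\text{rms}}^{(k)}/\delta$; equivalently, by Definition~\ref{defn:approxdesign}, $\mathcal{E}$ is a $(\Delta_{\text{rms}}^{(k)}/\delta)$-approximate state $k$-design in the trace norm. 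The monotonicity $\Delta_1^{(k)}\le\Delta_2^{(k)}$ from Eq.~\eqref{eq:bounds_td}, which is used inside that proposition, is what makes the Hilbert-Schmidt quantity control the operationally meaningful trace-norm quantity.

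Next I would substitute the explicit asymptotic value of $\Delta_{\text{rms}}^{(k)}$ furnished by Theorem~\ref{thm:rms_dist}. Since $U$ is Haar-distributed and $k < 2^{(N_A+N_B)/4}$, that theorem gives $\bigl(\Delta_{\text{rms}}^{(k)}\bigr)^2 \to 1/\bigl(k!\,2^{S_c+N_B-kN_A}\bigr)$ as $N_A,N_B\to\infty$. Setting $\epsilon = \Delta_{\text{rms}}^{(k)}/\delta$ and squaring then yields
\begin{equation}
\epsilon^2 \leq \frac{1}{\delta^2\, k!\, 2^{S_c + N_B - kN_A}},
\end{equation}
which is exactly the claimed bound, with the hypothesis $k < 2^{(N_A+N_B)/4}$ inherited verbatim from Theorem~\ref{thm:rms_dist}.

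The only genuine subtlety — and the reason the statement is phrased as a limit — is that Theorem~\ref{thm:rms_dist} pins down $\bigl(\Delta_{\text{rms}}^{(k)}\bigr)^2$ only asymptotically; at finite $N_A,N_B$ there are subleading corrections (further suppressed by powers of $2^{-N_A}$ and $2^{-N_B}$) arising from the Weingarten expansion in its proof. I would handle this either by carrying those corrections along, replacing the equality with $\bigl(\Delta_{\text{rms}}^{(k)}\bigr)^2 \leq (1+o(1))/\bigl(k!\,2^{S_c+N_B-kN_A}\bigr)$ and absorbing the $o(1)$ into the asymptotic claim, or simply by invoking the limiting form directly. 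Apart from this bookkeeping of the finite-size error term, there is no substantive obstacle: the argument is just a Markov-inequality step followed by the substitution, and the rest is algebra.
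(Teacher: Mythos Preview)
Your proposal is correct and matches the paper's approach exactly: the paper states the theorem as an immediate consequence of combining Proposition~\ref{prop:markovbound_TD} with Theorem~\ref{thm:rms_dist}, precisely the Markov-inequality step followed by substitution that you outline. Your remark about the asymptotic nature of the bound and the $o(1)$ finite-size corrections is also on point and consistent with how the paper phrases the result.
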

Comparing Eq.~\eqref{eq:rms_tdbound} (at $S_c = 0$) with the bound from Ref.~\cite{Cotler2023emergent},
\begin{equation}
    \epsilon^2 = \bigO{\frac{2k-1}{2^{N_B - 4 k N_A}} (2 k N_A + \log(1/\delta)},
\label{eq:cotler_TDbound}
\end{equation}
we can see that our bound on $\epsilon$ has an improved scaling that is exponentially smaller in $N_A$ and $k$, if $\delta = \Omega(2^{-k N_A})$. 

For a constant $k$, Eq.~\eqref{eq:cotler_TDbound} from Ref.~\cite{Cotler2023emergent} requires $N_B - 4k N_A \gg 1$ to guarantee closeness to a $k$-design. On the other hand, our result~\eqref{eq:rms_tdbound} suggests that $N_B - k N_A \gg 1$ is sufficient. Our result implies that the number of bath qubits needed for the projected ensemble to form an approximate $k$-design (with high probability) is roughly a factor of $4$ smaller than previously expected.

\section{Large-$d$ asymptotics of Weingarten sums}
In the regime of large Hilbert space dimension $d$ (typically exponential in the number of degrees of freedom), the sums involving Weingarten functions can be well approximated by keeping only the leading order contribution. Intuitively, the Weingarten functions $\text{Wg}(\sigma^{-1}\pi,d)$ for the unitary group are dominated by the terms where $\sigma = \pi$, with the other Weingarten functions suppressed by a factor of $\bigO{1/d}$. The approximation error incurred by keeping only the dominant Weingarten term $\text{Wg}(1_k,d)$ can be bounded as follows:
\begin{lemma}[Large-$d$ approximation of Weingarten sums]
\label{lemma:largedapprox}
For any function $a_{\sigma \pi}$ of the permutation elements $\sigma,\pi$ of the symmetric group $S_k$ with $k^2 < d$, the error of the leading order approximation is
    \begin{equation}
    \normalfont
    \left| \sum_{\pi \in S_k} \text{Wg}(\sigma^{-1}\pi,d) a_{\sigma \pi} - \text{Wg}(1_k,d) a_{\sigma\sigma}\right| = \bigO{ \frac{k^2}{d^{k+1}} \max_{\sigma \neq \pi} |a_{\sigma \pi}|}.
\label{eq:wg_approx_error}
\end{equation}
\end{lemma}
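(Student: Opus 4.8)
The plan is to isolate the diagonal term and reduce everything to a single combinatorial sum over permutations. When $\pi=\sigma$ we have $\sigma^{-1}\pi=1_k$ and $a_{\sigma\pi}=a_{\sigma\sigma}$, so that term is exactly $\mathrm{Wg}(1_k,d)\,a_{\sigma\sigma}$, and the left-hand side of~\eqref{eq:wg_approx_error} is precisely $\bigl|\sum_{\pi\in S_k\setminus\{\sigma\}}\mathrm{Wg}(\sigma^{-1}\pi,d)\,a_{\sigma\pi}\bigr|$. Applying the triangle inequality, pulling out $\max_{\sigma\neq\pi}|a_{\sigma\pi}|$, and reindexing the remaining sum by $\tau=\sigma^{-1}\pi$ (which ranges bijectively over $S_k\setminus\{1_k\}$ and leaves $|\mathrm{Wg}|$ unchanged), the problem collapses to the $\sigma$-independent estimate
\[
\sum_{\tau\in S_k\setminus\{1_k\}}\bigl|\mathrm{Wg}(\tau,d)\bigr|=\bigO{\frac{k^2}{d^{k+1}}},
\]
and all of the content lies in this bound.

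To prove it I would sort $S_k$ by the Cayley distance $|\tau|$ to the identity (the minimal number of transpositions whose product is $\tau$, equivalently $k$ minus the number of cycles of $\tau$) and invoke two standard facts. First, the large-$d$ behavior of the unitary Weingarten function (Collins; Collins and \'{S}niady): in the regime $k^2<d$ one has a bound of the form $|\mathrm{Wg}(\tau,d)|\le C^{|\tau|}d^{-(k+|\tau|)}$ for an absolute constant $C$, the factor $C^{|\tau|}$ absorbing the Catalan/M\"obius weights attached to the nontrivial cycles of $\tau$ together with the subleading corrections to the asymptotics. Second, the cycle-counting generating function for the symmetric group, $\sum_{\tau\in S_k}t^{|\tau|}=\prod_{i=1}^{k-1}(1+it)$, which exactly tracks how many permutations sit at each distance. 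The single transpositions ($|\tau|=1$) already contribute at order $k^2/d^{k+1}$, so this order is forced; the real task is only to show that the contributions from larger $|\tau|$ do not exceed it.

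Combining the two facts gives $\sum_{\tau\neq 1_k}|\mathrm{Wg}(\tau,d)|\le d^{-k}\bigl(\prod_{i=1}^{k-1}(1+iC/d)-1\bigr)\le d^{-k}\bigl(e^{Ck^2/(2d)}-1\bigr)$, and since $k^2<d$ the exponent is $\bigO{1}$, so $e^{Ck^2/(2d)}-1=\bigO{k^2/d}$ and the whole sum is $\bigO{k^2/d^{k+1}}$; together with the reduction of the first paragraph this yields~\eqref{eq:wg_approx_error}. I expect the main obstacle to be the first ingredient: the textbook Collins--\'{S}niady expansion is an asymptotic statement in $d$ at fixed $\tau$, whereas here $k$ --- and hence the combinatorial complexity of the relevant permutations --- grows with $d$, so one needs the effective, non-asymptotic form of the bound holding uniformly for $k^2<d$ (with the implicit constant possibly requiring a slightly stronger $k^2\lesssim d$). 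This is cleanest to extract either from the Weingarten recursion relation or from the Schur--Weyl character-sum formula $\mathrm{Wg}(\tau,d)=\frac{1}{(k!)^2}\sum_{\lambda\vdash k}\frac{\chi^\lambda(1)^2\chi^\lambda(\tau)}{s_\lambda(1^d)}$; once that uniform bound is in hand, the remaining steps are the elementary bookkeeping above.
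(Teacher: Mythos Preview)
Your reduction in the first paragraph is exactly the paper's first step, and your estimate $\sum_{\tau\neq 1_k}|\mathrm{Wg}(\tau,d)|=\bigO{k^2/d^{k+1}}$ is the right target. The route you take to that estimate, however, differs from the paper's. You stratify by Cayley distance, invoke a uniform pointwise bound $|\mathrm{Wg}(\tau,d)|\le C^{|\tau|}d^{-(k+|\tau|)}$, and then resum via the cycle generating function $\prod_{i=1}^{k-1}(1+it)$. The paper instead uses a single closed-form identity: for $k\le d$ one has $\sum_{\pi\in S_k}|\mathrm{Wg}(\pi,d)|=(d-k)!/d!$ exactly. Subtracting the known expansion $\mathrm{Wg}(1_k,d)=d^{-k}(1+\bigO{k^{7/2}/d^2})$ from $(d-k)!/d!=d^{-k}(1+k(k-1)/(2d)+\bigO{k^4/d^2})$ immediately gives $\bigO{k^2/d^{k+1}}$.

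The paper's approach is shorter and, crucially, sidesteps precisely the obstacle you flag: it never needs a non-asymptotic pointwise Weingarten bound uniform in $k$, only the exact $L^1$ identity (which follows from the sign pattern $\mathrm{sgn}\,\mathrm{Wg}(\tau,d)=(-1)^{|\tau|}$ and the Gram-inverse relation) and the asymptotics of the single value $\mathrm{Wg}(1_k,d)$. Your approach is more explicit about \emph{why} the transpositions dominate and would generalize better if one wanted finer control over individual distance strata, but it trades that for a harder input lemma.
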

\begin{proof}
    The approximation error in the LHS of Eq.~\eqref{eq:wg_approx_error} can be bounded by
    \begin{equation}
    \begin{split}
        \text{LHS} &= \left| \sum_{\substack{\pi \in S_k \\ \pi \neq \sigma}} \text{Wg}(\sigma^{-1}\pi,d) a_{\sigma \pi} \right| \\&\leq  \sum_{\substack{\pi \in S_k \\ \pi \neq \sigma}}\left| \text{Wg}(\sigma^{-1}\pi,d) \right| \max_{\sigma \neq \pi}|a_{\sigma \pi}| \\&= \left( \sum_{\pi \in S_k} \left|\text{Wg}(\sigma^{-1}\pi,d)\right| - \left|\text{Wg}(1_k,d)\right| \right)  \max_{\sigma \neq \pi}|a_{\sigma \pi}| \\&= \left| \frac{(d-k)!}{d!} - \text{Wg}(1_k,d)\right| \max_{\sigma \neq \pi}|a_{\sigma \pi}| \\&= \bigO{\frac{k^2}{d^{k+1}} \max_{\sigma \neq \pi}|a_{\sigma \pi}|}
    \end{split}
    \end{equation}
where we used~\cite{aharonov2022quantum}
\begin{equation}
    \sum_{\pi \in S_k} \left|\text{Wg}(\sigma^{-1}\pi,d)\right| = \frac{(d-k)!}{d!} = \frac{1}{d^k}\left(1 + \frac{k(k-1)}{2d} + \bigO{\frac{k^4}{d^2}}\right)
\end{equation}
and~\cite{collins2017weingarten}
\begin{equation}
    \text{Wg}(1_k,d) = \frac{1}{d^k}\left(1 + \bigO{\frac{k^{7/2}}{d^2}}\right).
\label{eq:wg_diag_asymptotics}
\end{equation}
\end{proof}
While Lemma~\ref{lemma:largedapprox} is derived for Weingarten functions for the unitary group, it can be adapted for other classical compact groups such as the orthogonal and sympletic groups~\cite{collins2022weingarten}.

\section{Quantum randomness of the projected ensemble}
Here, we provide calculation details for the quantum randomness of the classically-enhanced projected ensemble in Theorem 1 of the main text,
\begin{equation}
    \rho^{(k)} = \mathbb{E}_{x \in \mathcal{E}_{\text{init}}} \sum_{z \in \{0,1\}^{N_B}} p_x(z) \parens{\ket{\psi_x(z)}\bra{\psi_x(z)}}^{\otimes k} = \sum_{\substack{x \in \{0,1\}^{N_A + N_B} \\ z \in \{0,1\}^{N_B}}} q(x) p_x(z) \parens{\ket{\psi_x(z)}\bra{\psi_x(z)}}^{\otimes k}.
\end{equation}
The initial states $\ket{x}$ are drawn from $\mathcal{E}_{\text{init}}$ with probability distribution $q(x)$. 
For notational convenience, we denote the Hilbert space dimensions of subsystems $A$ and $B$ by $d_A = 2^{N_A}$ and $d_B = 2^{N_B}$ respectively. $d = d_A d_B$ is the Hilbert space dimension of the full system $AB$.

As introduced in Appendix~\ref{app:quantifying_qrandom}, we can quantify the quantum randomness via the frame potential
\begin{equation}
    F^{(k)} = \mathbb{E}_{x,x^\prime \in \mathcal{E}_{\text{init}}} \sum_{z,z^\prime \in \{0,1\}^{N_B}} \frac{|\braket{\Phi_x|(I_A \otimes \ket{z}\bra{z^\prime}_B)|\Phi_{x^\prime}}|^{2k}}{\braket{\Phi_x|(I_A \otimes \ket{z}\bra{z}_B)|\Phi_{x}}^{k-1} \braket{\Phi_{x^\prime}|(I_A \otimes \ket{z^\prime}\bra{z^\prime}_B)|\Phi_{x^\prime}}^{k-1} },
\end{equation}
where the so-called generator states $\ket{\Phi_x} = U\ket{x}$ are evolved from the initial states $\ket{x}$ by a common unitary $U$. We consider the average behavior of $F^{(k)}$ by averaging $U$ over the Haar measure on $\mathcal{U}(d_A d_B)$, giving $\mathbb{E}_{\text{Haar}} F^{(k)}$. Using Eq.~\eqref{eq:delta2}, we have
\begin{equation}
    \Delta_{\text{rms}}^{(k)} = \left( \frac{\mathbb{E}_{\text{Haar}} F^{(k)}}{F_{\text{Haar}}^{(k)}(d_A)} - 1 \right)^{1/2},
\label{eq:HSdist_rms}
\end{equation}
where $\Delta_{\text{rms}}^{(k)}$ is the root-mean-square value of the normalized Hilbert-Schmidt distance $\Delta_2^{(k)}$ to the Haar ensemble. By Jensen's inequality, this gives an upper bound to the average value of $\Delta_2^{(k)}$
\begin{equation}
    \mathbb{E}_{\text{Haar}} \Delta_2^{(k)} \leq \Delta_{\text{rms}}^{(k)},
\end{equation}
which is in turn an upper bound for the average trace distance $\mathbb{E}_\text{Haar} \Delta_1^{(k)}$, using Proposition~\ref{prop:dist_inequality}.
Similarly, from Eq.~\eqref{eq:delta2}, the normalized Hilbert-Schmidt distance between the ensemble of initial states $\mathcal{E}_{\text{init}}$ and the Haar ensemble is given by
\begin{equation}
    \Delta_{\text{init}}^{(k)} \equiv \left( \frac{F_{\text{init}}^{(k)}}{F_{\text{Haar}}^{(k)}(d)} - 1 \right)^{1/2} = \left(\frac{ \E_{x,x^\prime \in \mathcal{E}_\text{init}} \left[ |\braket{x|x^\prime}|^{2} \right] }{F_{\text{Haar}}^{(k)}(d)} - 1\right)^{1/2},
\label{eq:HSdist_init}
\end{equation}
\change{where the Haar ensemble is defined on the full system $AB$. $\Delta_{\text{init}}^{(1)}$ measures how close the initial states are to a $1$-design. In the setting discussed in the main text, in which the initial states are computational basis states $\ket{x}$ sampled randomly from the probability distribution $q(x)$, $\Delta_{\text{init}}^{(1)}$ can be expressed as
\begin{equation}
    \Delta_{\text{init}}^{(1)} = \sqrt{2^{N_A + N_B - S_c} - 1},
\end{equation}
where $S_c$ is the R\'enyi 2-entropy of $q(x)$. For example, if $q(x)$ is the uniform distribution over all computational basis states, $S_c = N_A + N_B$ (i.e., $q(x)$ has maximum entropy) and therefore $\Delta_{\text{init}}^{(1)} = 0$.} Our goal is to derive an analytical expression for $\Delta_{\text{rms}}^{(k)}$ in terms of $\Delta_{\text{init}}^{(k)}$. Before tackling the problem for general $k \in \mathbb{N}$, as a warm-up we consider the special case of $k=1$ which corresponds to (regular) thermalization. As we will see, the results for $k=1$ and $k>1$ are qualitatively very different.
\subsection{$k=1$}
The average frame potential is
\begin{equation}
    \E_{\text{Haar}} F^{(1)} = \E_{x,x^\prime \in \mathcal{E}_{\text{init}}} \sum_{z,z^\prime \in \{0,1\}^{N_B}} \text{Tr} \left\{ \bra{z^\prime,z}_B \E_{U \sim \text{Haar}}\left[U^{\otimes 2} \ket{\varphi_{x^\prime},\varphi_{x}}\bra{\varphi_x,\varphi_{x^\prime}} U^{\dag \otimes 2} \right] \ket{z,z^\prime}_B \right\}.
\end{equation}
Using the identity for the Haar average,
\begin{equation}
    \E_{U \sim \text{Haar}(d)} \sparens{U^{\otimes 2} A U^{\dag \otimes 2}} = \frac{1}{d^2 - 1} \left[ \left( \text{Tr}(A) - \frac{1}{d} \text{Tr}(AS) \right) I + \left( \text{Tr}(AS) - \frac{1}{d} \text{Tr}(A) \right) S \right]
\end{equation}
where $I$ is the identity operator and $S$ is the swap operator, we have
\begin{equation}
    \E_{\text{Haar}} F^{(1)} = \frac{d_A(d_B^2 - 1)}{d_A^2 d_B^2 - 1} + \frac{d_B (d_A^2 - 1)}{d_A^2 d_B^2 - 1} \E_{x,x^\prime \in \mathcal{E}_{\text{init}}} \left[ |\braket{x|x^\prime}|^{2} \right].
\end{equation}
Since the first moment $\rho^{(1)}$ of the projected ensemble is the reduced density matrix on $A$, $F^{(1)}$ is simply the subsystem purity of $A$. Using Eqs.~\eqref{eq:HSdist_rms} and~\eqref{eq:HSdist_init}, we obtain
\begin{equation}
    \Delta_{\text{rms}}^{(1)} = \Delta_{\text{init}}^{(1)} \left( \frac{d_A^2 - 1}{d_A^2 d_B^2 - 1} \right)^{1/2}.
\label{eq:delta_rms_k1}
\end{equation}
This has the physical interpretation that the infinite-temperature thermalization of the subsystem $A$, measured by $\Delta_{\text{rms}}^{(1)}$, comes from two sources: the randomness of the initial states characterized by $\Delta_{\text{init}}^{(1)}$, and the dimension of the bath $d_B$. In the language of $k$-designs, if the ensemble of initial states $\mathcal{E}_{\text{init}}$ form an exact $1$-design, the projected ensemble $\mathcal{E}$ is also an exact $1$-design. This can be easily achieved by choosing, for example, $\mathcal{E}_{\text{init}}$ to be the complete set of computational basis states.

Eq.~\eqref{eq:delta_rms_k1} hints at the possibility of converting classical to quantum randomness, since $\Delta_{\text{init}}^{(1)}$ vanishes exponentially with the R\'{e}nyi 2-entropy $S_c$ of the distribution $q(x)$ for computational basis states. However, the case of $k=1$ is qualitatively different from $k > 1$ in several important aspects:
\begin{enumerate}
    \item As mentioned above, $\rho^{(1)}$ is the reduced density matrix on subsystem $A$. Thus, $\rho^{(1)}$ does not depend on the specific measurement basis on $B$ (or require any measurement at all), as long as the basis is complete. This is not true for $k > 1$, where the behavior of higher moments depends on the choice of measurement basis~\cite{mark2024maximum}. This is especially relevant when $U$ is a Hamiltonian evolution where energy is conserved, since the measurement outcome on $B$ can potentially be highly correlated with the projected state on $A$.
    \item An \textit{exact} $1$-design can be formed trivially by initializing the qubits on $AB$ with uniformly random computational basis states, even without any bath qubits ($d_B = 1$). For $k > 1$, the bath qubits are necessary to form \textit{approximate} $k$-designs for any finite bath size, even with the injection of classical randomness. Physically, this reflects the fact that $1$-designs do not require quantum resources such as coherence or entanglement, while entanglement is necessary for $k$-designs with $k > 1$.
    Although it is, in principle, possible for the projected ensemble to form exact $k$-designs, this requires a highly fine-tuned unitary $U$ and is generically not accessible by physical Hamiltonian evolution. 
\end{enumerate}

\subsection{$k > 1$}
\label{sec:sm_derivation}
In the original projected ensembles protocol~\cite{Cotler2023emergent,choi2023preparing} with $|\mathcal{E}_{\text{init}}| = 1$ (i.e., fixed initial state), it was demonstrated that quantum randomness can be generated from a single fixed unitary $U$, by measuring a sufficiently large number of bath qubits. Here, the quantum randomness stems from the intrinsic quantum-mechanical randomness of the bath measurement outcomes. As such, the amount of quantum randomness achievable is limited by the number of bath qubits $N_B$.

By deriving an analogous formula to Eq.~\eqref{eq:delta_rms_k1} for $k > 1$, we show that injecting classical randomness via the initial state distribution $q(x)$ results in enhanced quantum randomness. In fact, Theorem 1 in the main text implies that the conversion of classical to quantum randomness in our protocol is optimal: injecting an additional bit of classical randomness is equivalent to adding an extra bath qubit in the original protocol. We now prove Theorem 1.

\begin{proof}[Proof of Theorem 1]
We employ the replica trick~\cite{Ippoliti2022solvable} by writing
\begin{equation}
    \E_{\text{Haar}}F^{(k)} = \E_{x,x^\prime \in \mathcal{E}_{\text{init}}} G^{(1-k,k)}_{x,x^\prime} = \sum_{x} q(x)^2 G_{x,x}^{(1-k,k)} + \sum_{x\neq x^\prime} q(x)q(x^\prime) G_{x,x^\prime}^{(1-k,k)},
\label{eq:avghaarfk_replica}
\end{equation}
where
\begin{equation}
\begin{split}
    G^{(m,k)}_{x,x^\prime} = \E_{U \sim \text{Haar}} \sum_{z,z^\prime \in \{0,1\}^{N_B}} &\Bigg[\parens{\braket{x|U^\dag (I_A \otimes \ket{z}\bra{z}_B) U|x} \braket{x^\prime|U^\dag (I_A \otimes \ket{z^\prime}\bra{z^\prime}_B) U|x^\prime}}^m \\&\times \left|\braket{x|U^\dag (I_A \otimes \ket{z}\bra{z^\prime}_B) U|x^\prime}\right|^{2k}\Bigg]
\end{split}
\end{equation}
can be regarded as the analytical continuation of the frame potential~\cite{claeys2022emergent}. Essentially, in the replica trick, we first assume $m \in \mathbb{N}$ and then take the limit $m \to 1-k$ at the end, after $G^{(m,k)}_{x,x^\prime}$ is evaluated analytically. Since $G^{(m,k)}_{x,x^\prime}$ contains a polynomial function of degree $2(m+k)$ in $U$ and $U^\dag$, the Haar average can be evaluated using Weingarten calculus~\cite{kostenberger2021weingarten,collins2022weingarten}, giving
\begin{equation}
    G^{(m,k)}_{x,x^\prime} = \sum_{z,z^\prime} \sum_{\sigma,\pi \in S_{2(k+m)}} \text{Wg}(\sigma^{-1}\pi,d) \braket{x^{m} x^{k} x^{\prime k} x^{\prime m}|\hat{\pi}^{\dag}|x^{ m} x^{\prime  k} x^{ k} x^{\prime  m}} d_A^{\#\text{cycles}(\sigma)} \braket{z^m z^{\prime k} z^k z^{\prime m}|\hat{\sigma}_B|z^m z^k z^{\prime k} z^{\prime m}}.
\label{eq:Gmk}
\end{equation}
In the above expression, $\sigma$ and $\pi$ are permutation elements of the symmetric group $S_{2(k+m)}$ with the corresponding permutation unitary operators denoted by hats. The permutation operators permute the $2(m+k)$ tensor factors. The subscript on $\hat{\sigma}_B$ indicates that the permutation operator $\hat{\sigma}$ acts on subsystem $B$. $\#\text{cycles}(\sigma)$ counts the number of cycles in the permutation $\sigma$. We also introduce the shorthand notation $\ket{x^m x^{\prime k} x^k x^{\prime m}} \equiv \ket{x}^{\otimes m} \ket{x^\prime}^{\otimes k} \ket{x}^{\otimes k} \ket{x^\prime}^{\otimes m}$ to denote the $2(m+k)$-fold tensor product. The case of $x = x^\prime$ was previously studied in Ref.~\cite{Ippoliti2022solvable}, with
\begin{equation}
    G_{x,x}^{(1-k,k)} = \frac{d_A + 1 + d_A(d_B-1)F_\text{Haar}^{(k)}(d_A)}{d_A d_B + 1}.
\label{eq:framepot_singleinit}
\end{equation}
Therefore, we will focus on the case of $x \neq x^\prime$ where $\braket{x^\prime|x} = 0$ since the initial states are mutually orthonormal. The general scenario of non-orthogonal initial states is discussed later in Appendix~\ref{sec:nonorthogonal}. 

Splitting the sum over $z$ and $z^\prime$ into terms where $z = z^\prime$ and $z \neq z^\prime$, we have
\begin{equation}
\begin{split}
G_{x,x^\prime}^{(m,k)} &= d_B \sum_{\substack{\sigma \in S_{2(k+m)} \\ \pi_1,\pi_2 \in S_{k+m}}} \text{Wg}(\sigma^{-1}(\tau \pi_1\pi_2),d) d_A^{\#\text{cycles}(\sigma)} \\&+ d_B(d_B-1) \sum_{\substack{\pi_1,\pi_2,\sigma_1,\sigma_2 \in S_{k+m}}} \text{Wg}((\tau\sigma_1\sigma_2)^{-1}(\tau\pi_1\pi_2),d) d_A^{\#\text{cycles}(\tau\pi_1\pi_2)},
\end{split}
\end{equation}
using the orthogonality of the initial states and the measurement basis to keep only the non-vanishing contributions. $\hat{\tau}$ is the permutation operator that swaps the two $k$-fold tensor factors and acts identically on the remaining $2m$ factors, e.g., $\hat{\tau} \ket{x^m x^{\prime k} x^k x^{\prime m}} = \ket{x^m x^k x^{\prime k} x^{\prime m}}$. $\hat{\sigma}_1$ and $\hat{\sigma}_2$ acts only on the first and second $m+k$ factors respectively, i.e., $\widehat{\sigma_1 \sigma_2} = \hat{\sigma}_1 \otimes \hat{\sigma}_2$. 

To proceed, we make use of Lemma~\ref{lemma:largedapprox} to approximate the Weingarten sums for large $d > k^2$. This gives
\begin{equation}
\begin{split}
    G_{x,x^\prime}^{(m,k)} &= d_B \sum_{\pi_1,\pi_2 \in S_{k+m}} \left[ \text{Wg}(1_{2(k+m)},d) d_A^{\#\text{cycles}(\tau\pi_1\pi_2)} + \bigO{\frac{(k+m)^2 d_A^{2(k+m)}}{d^{2(k+m)+1}} }\right] \\&+ d_B(d_B-1) \sum_{\sigma_1,\sigma_2 \in S_{k+m}} d_A^{\#\text{cycles}(\tau\sigma_1\sigma_2)} \left(\text{Wg}(1_{2(k+m)},d)+\bigO{\frac{(k+m)^2}{d^{2(k+m)+1}}}\right) \\&= d_B \left[\text{Wg}(1_{2(k+m)},d) \frac{(d_A+k+m-1)!^2}{d_A-1)!^2}F_{\text{Haar}}^{(k)}(d_A) + \bigO{\frac{(k+m)^2(k+m)!^2 d_A^{2(k+m)-1}}{d^{2(k+m)+1}}}\right] \\&+ d_B(d_B-1) \frac{(d_A+k+m-1)!^2}{(d_A-1)!^2} F_{\text{Haar}}^{(k)}(d_A) \left[\text{Wg}(1_{2(k+m)},d)+\bigO{\frac{(k+m)^2 (k+m)!^2}{d^{2(k+m)+1}}}\right]
\end{split}
\label{eq:Gxxp_mk}
\end{equation}
where in the last line we used the identity~\cite{Ippoliti2022solvable}
\begin{equation}
    \sum_{\sigma_1,\sigma_2 \in S_{k+m}} d_A^{\#\text{cycles}(\tau \sigma_1\sigma_2)} = \frac{(d_A + k+m-1)!^2}{(d_A - 1)!^2} F_{\text{Haar}}^{(k)}(d_A).
\label{eq:fhaar_identity}
\end{equation}
Note that for $m \geq 1-k$, the final expression in Eq.~\eqref{eq:fhaar_identity} can be extended to hold for real $m$, by replacing the factorials with the appropriate Gamma functions. Analyticity of $G_{x,x^\prime}^{(m,k)}$ in this extended domain then justifies setting $m = 1-k$, which yields
\begin{equation}
\begin{split}
    G_{x,x^\prime}^{(1-k,k)} &= d_B \left[\text{Wg}(1_2,d) d_A^2 F_{\text{Haar}}^{(k)}(d_A) + \bigO{\frac{d_A}{d^3}}\right] + d_B (d_B-1) d_A^2 F_{\text{Haar}}^{(k)}(d_A) \left[\text{Wg}(1_2,d)+\bigO{\frac{1}{d^3}}\right] \\&= \left[d^2 \text{Wg}(1_2,d) + \bigO{\frac{1}{d}}\right] F_{\text{Haar}}^{(k)}(d_A) + \bigO{\frac{1}{d_A^2 d_B^2}} \\&= \left[1 + \bigO{\frac{1}{d_A d_B}} \right] F_{\text{Haar}}^{(k)}(d_A) + \bigO{\frac{1}{d_A^2 d_B^2}}
\end{split}
\end{equation}
using $\text{Wg}(1_2,d) = 1/(d^2-1)$. Finally, substituting the results into Eq.~\eqref{eq:avghaarfk_replica},
 we arrive at (to leading order as $d_A,d_B \to \infty$)
\begin{equation}
    \E_{\text{Haar}}F^{(k)} = F_{\text{Haar}}^{(k)}(d_A) + \frac{1}{d_B} \sum_x q(x)^2 = F_{\text{Haar}}^{(k)}(d_A) + \frac{1+(\Delta_{\text{init}}^{(1)})^2}{d_A d_B^2}, 
\label{eq:framepot_reinit}
\end{equation}
where we used $(\Delta_\text{init}^{(1)})^2 = d_A d_B\sum_x q(x)^2 - 1$. 
Thus, the deviation between the average frame potential $\E_{\text{Haar}}F^{(k)}$ and the frame potential $F_{\text{Haar}}^{(k)}$ of the Haar ensemble depends on how close the initial state ensemble $\mathcal{E}_{\text{init}}$ forms a $1$-design. Substituting this into Eq.~\eqref{eq:HSdist_rms}, and using $\sum_x q(x)^2 = 2^{-S_c}$ yields the desired result.
\end{proof}
As a consistency check, let us consider the limit where $q(x) = \delta_{x,0}$, i.e., a fixed initial state which we set as $\ket{0}$ without loss of generality. Then, $1+(\Delta_\text{init}^{(1) })^2 = d_A d_B$ and we get
\begin{equation}
    \E_\text{Haar}F^{(k)} \approx F_{\text{Haar}}^{(k)}(d_A) + \frac{1}{d_B}
\end{equation}
which is consistent with previous result~\cite{Ippoliti2022solvable}. We also note that in the non-orthogonal case of $\braket{x^\prime|x} \neq 0$ for $x \neq x^\prime$, there will be an additive positive correction to Eq.~\eqref{eq:framepot_reinit} that depends on $|\braket{x^\prime|x}|^2$. This can be traced all the way back to Eq.~\eqref{eq:Gmk} where the matrix element contributes
\begin{equation}
    \braket{x^{m} x^{k} x^{\prime k} x^{\prime m}|\hat{\pi}^{\dag}|x^{ m} x^{\prime  k} x^{ k} x^{\prime  m}} = \text{some even power of } |\braket{x^\prime|x}|^2.
\end{equation}
Physically, this means that for a fixed distribution $q(x)$, quantum randomness is optimized for orthonormal initial states (such as computational basis states) in order to maximize entropy.

\subsection{Example: Higher-order $k$-designs with classical randomness}
As an explicit demonstration of the analytical result in Theorem 1, we show that injecting classical randomness can lead to the formation of higher-order $k$-designs which are not possible in the original protocol of~\cite{Cotler2023emergent,choi2023preparing} where $S_c = 0$.

For simplicity, let us consider $N_A = N_B$. The trace distance $\Delta_1^{(1)}/2$ between the first moment of the projected ensemble (i.e., the reduced density matrix on $A$) and the maximally mixed state $I_A/d_A$ is typically not small~\cite{cramer2012thermalization}, and the ensemble does not converge to a $1$-design. Injecting classical randomness by initializing the qubits in random computational basis states on $A$ can trivially produce a $1$-design. More interestingly, from Theorem 1, the projected ensemble can in fact form an approximate $2$-design, with the Haar average trace distance bounded by
\begin{equation}
    \E_\text{Haar} \Delta_1^{(2)} \leq \E_\text{Haar} \Delta_2^{(2)} \leq \Delta_\text{rms}^{(2)} = \frac{1}{2^{S_c - N_A + 1}}
\end{equation}
using Proposition~\ref{prop:dist_inequality}. Therefore, for $S_c - N_A \gg 1$, the projected ensemble is close to the Haar ensemble, up to the second moment. In fact, to generate an approximate $2$-design with an $o(1)$ trace distance, our results suggest that the bath can be as small as $N_B = N_A/2 + \omega(1)$, with $S_c = N_A + N_B$ bits of classical entropy. 

\begin{figure}
\centering
\subfloat{%
  \includegraphics[width=0.7\linewidth]{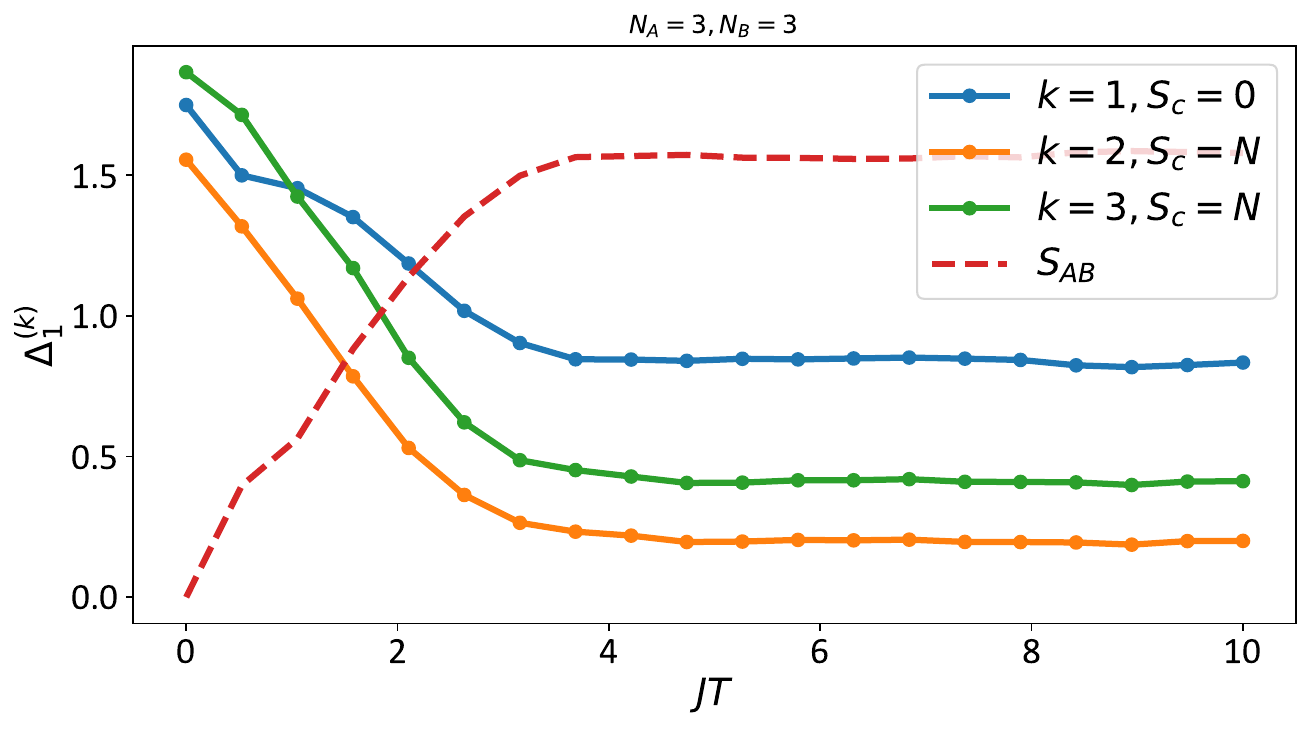}%
}
\caption{Distance $\Delta_1^{(k)}$ between the $k$-th moment of the projected ensemble and the Haar ensemble, measured by the Schatten-$1$ norm. Numerical results are obtained for the 1D mixed field Ising model (open boundary conditions) with $N_A = N_B = 3$, parameters $(h_x,h_y,J) = (0.809,0.9045,1)$, and a quench time $T$. The red dashed line denotes the entanglement entropy $S_{AB}$ between subsystems $A$ and $B$, for a fixed initial state (i.e., $S_c = 0$). The data for $k=1$ (blue) and $S_{AB}$ (red) are averaged over all $2^{N}$ initial computational basis states, where $N = N_A + N_B = 6$.}
 \label{fig:halfchain}
\end{figure}

More generally, consider $N_B = (k-1) N_A$, for $k \geq 2$. The projected ensemble with $S_c = 0$ typically forms an approximate $(k-2)$ design, but not a $k$-design (the case of $k-1$ is marginal). This can be understood from the cardinality bound of $k$-designs~\cite{roberts2017chaos}, or more explicitly, by applying the Fannes-Audenaert inequality~\cite{audenaert2007sharp} to derive
\begin{equation}
    \frac{1}{2}\Delta_1^{(k)} \geq 1 - \frac{N_B}{k N_A - \log_2 (k!)}
\end{equation}
valid for $N_B \ll k N_A - \log_2(k!)$. On the other hand, when $S_c = N_A + N_B = k N_A$, Theorem 1 implies the possibility of forming approximate $2(k-1)$ designs, since
\begin{equation}
    \E_\text{Haar} \Delta_1^{(k^\prime)} \leq \E_\text{Haar} \Delta_2^{(k^\prime)} \leq \Delta_\text{rms}^{(k^\prime)} = \frac{1}{k^\prime! 2^{(2k-1-k^\prime)N_A}},
\end{equation}
which vanishes exponentially with $N_A$ if $k^\prime = 2(k-1)$. This is roughly a factor of $2$ increase in the maximum order of designs achievable.  

In Fig.~\ref{fig:halfchain}, we show numerical results for the 1D mixed field Ising model with $N_A = N_B = 3$ and open boundary conditions. At the thermalization time where the entanglement entropy $S_{AB}$ between the system and bath saturates, the projected ensemble is close to a 2-design. We leave it as future work to investigate if the formation of higher-order designs with $k > 1$ has a longer time scale than usual thermalization.

\subsection{Non-orthogonal initial states}
\label{sec:nonorthogonal}
 Here, we derive the correction terms to Eq.~\eqref{eq:framepot_reinit} when the initial states are not mutually orthonormal. Equation~\eqref{eq:Gxxp_mk} becomes (when $x \neq x^\prime$)
\begin{equation}
    G_{x,x^\prime}^{(m,k)} =~\eqref{eq:Gxxp_mk} + d_B r_1 + d_B(d_B-1) r_2,
\end{equation}
where the first term corresponds to the case in which $\braket{x^\prime|x} = \delta_{x,x^\prime}$, while the corrections are
\begin{equation}
    r_1 \equiv \sum_{\substack{\sigma,\pi \in S_{2(m+k)} \\ \pi \neq \tau \pi_1 \pi_2}} \text{Wg}(\sigma^{-1}\pi,d) d_A^{\#\text{cycles}(\sigma)} \braket{x^{m} x^{k} x^{\prime k} x^{\prime m}|\hat{\pi}^{\dag}|x^{ m} x^{\prime  k} x^{ k} x^{\prime  m}}
\end{equation}
and
\begin{equation}
    r_2 \equiv \sum_{\substack{\pi \in S_{2(m+k)} \\ \sigma_1,\sigma_2 \in S_{m+k} \\ \pi \neq \tau \pi_1 \pi_2}} \text{Wg}((\tau \sigma_1 \sigma_2)^{-1}\pi,d) d_A^{\# \text{cycles}(\tau \sigma_1 \sigma_2)} \braket{x^{m} x^{k} x^{\prime k} x^{\prime m}|\hat{\pi}^{\dag}|x^{ m} x^{\prime  k} x^{ k} x^{\prime  m}}.
\end{equation}
The notation $\pi \neq \tau \pi_1 \pi_2$ indicates that the sum over permutations $\pi$ only includes permutations which are not of the form $\tau \pi_1 \pi_2$. We now establish bounds on the correction terms. First,
\begin{equation}
\begin{split}
    r_1 &\leq |\braket{x^\prime|x}|^2 \sum_{\substack{\sigma,\pi \in S_{2(m+k)} \\ \pi \neq \tau \pi_1 \pi_2}} \text{Wg}(\sigma^{-1}\pi,d) d_A^{\#\text{cycles}(\sigma)} \\&= |\braket{x^\prime|x}|^2 \sum_{\substack{\pi \in S_{2(m+k)} \\ \pi \neq \tau \pi_1 \pi_2}} \sparens{\text{Wg}(1_{2(m+k)},d) d_A^{\#\text{cycles}(\pi)} + \bigO{\frac{(m+k)^2 d_A^{2(m+k)}}{d^{2(m+k)+1}}}} \\&\leq |\braket{x^\prime|x}|^2 \sparens{\text{Wg}(1_{2(m+k)},d) \frac{(d_A + 2(m+k)-1)!}{(d_A-1)!} + \bigO{\frac{[2(m+k)]!^2(m+k)^2 d_A^{2(m+k)}}{d^{2(m+k)+1}}}} \\&\to \bigO{\frac{|\braket{x^\prime|x}|^2}{d_B^2}}, 
\end{split}
\end{equation}
setting $m = 1-k$ in the final step. Next,
\begin{equation}
\begin{split}
    r_2 &\leq |\braket{x^\prime|x}|^2 \sum_{\substack{\pi \in S_{2(m+k)} \\ \sigma_1,\sigma_2 \in S_{m+k} \\ \pi \neq \tau \pi_1 \pi_2}} \text{Wg}((\tau \sigma_1 \sigma_2)^{-1}\pi,d) d_A^{\# \text{cycles}(\tau \sigma_1 \sigma_2)} \\&= |\braket{x^\prime|x}|^2 \sum_{\sigma_1,\sigma_2 \in S_{m+k}} d_A^{\# \text{cycles}(\tau \sigma_1 \sigma_2)} \sum_{\substack{\pi \in S_{2(m+k)} \\ \pi \neq \tau \pi_1 \pi_2}} \text{Wg}((\tau \sigma_1 \sigma_2)^{-1}\pi,d) \\&\leq |\braket{x^\prime|x}|^2 \sum_{\sigma_1,\sigma_2 \in S_{m+k}} d_A^{\# \text{cycles}(\tau \sigma_1 \sigma_2)} \sum_{\pi \in S_{2(m+k)}} |\text{Wg}((\tau \sigma_1 \sigma_2)^{-1}\pi,d)| \\&= |\braket{x^\prime|x}|^2 \sum_{\sigma_1,\sigma_2 \in S_{m+k}} d_A^{\# \text{cycles}(\tau \sigma_1 \sigma_2)} \frac{(d-2(m+k))!}{d!} \\&= |\braket{x^\prime|x}|^2 \sparens{\frac{(d_A+m+k-1)!}{d_A-1)!}}^2 F_{\text{Haar}}^{(k)}(d_A) \frac{(d-2(m+k))!}{d!} \\&\to \bigO{\frac{|\braket{x^\prime|x}|^2}{d_B^2} F_{\text{Haar}}^{(k)}(d_A)},
\end{split}
\end{equation}
setting $m = 1-k$ in the final step. Using Eq.~\eqref{eq:avghaarfk_replica}, we have the correction to the average frame potential in Eq.~\eqref{eq:framepot_reinit} as
\begin{equation}
    \E_{\text{Haar}}F^{(k)} = \sparens{1+\bigO{\overline{|\braket{x^\prime|x}|^2}}} F_{\text{Haar}}^{(k)}(d_A) + \frac{1}{d_B} \sparens{\sum_x q(x)^2 + \bigO{\overline{|\braket{x^\prime|x}|^2}}},
\end{equation}
where
\begin{equation}
    \overline{|\braket{x^\prime|x}|^2} = \E_{x,x^\prime} \sparens{|\braket{x^\prime|x}|^2 \bigg| x\neq x^\prime} = \sum_{x \neq x^\prime} q(x) q(x^\prime) |\braket{x^\prime|x}|^2
\end{equation}
is the expectation value of the squared overlaps $|\braket{x^\prime|x}|^2$, conditioned on $x \neq x^\prime$. Thus, Theorem 1 is valid even for non-orthogonal initial states, if
\begin{equation}
    \overline{|\braket{x^\prime|x}|^2} \ll \max\left\{\frac{1}{2^{S_c}},\frac{1}{k! 2^{S_c + N_B - k N_A}} \right\}.
\label{eq:overlapbound}
\end{equation}

\subsection{Robustness of numerical results}
Here, we show that our numerical results in Fig. 1(c) of the main text do not depend strongly on the Hamiltonian parameters, which is consistent with the expectation that quantum many-body chaos is robust. As explained in Ref.~\cite{kim2013ballistic}, $H_0$ remains chaotic (in the Wigner-Dyson sense) as long as $h_x$ and $h_y$ are comparable to $J$. The choice of parameters has an influence on the short-time dynamics, which does not scale with the system size and thus does not affect the asymptotic properties. Consequently, we expect deep thermalization to hold for a wide range of parameters. To this end, we numerically study a random Hamiltonian of the form
\begin{equation}
    H = H_0 + H_1,
\end{equation}
where $H_0$ is the (deterministic) mixed-field Ising Hamiltonian given by Eq. (4) in the main text, and
\begin{equation}
    H_1 = \sum_{j=1}^{N_A + N_B} (\eta_x X_j + \eta_y Y_j),
\end{equation}
with $\eta_x, \eta_y$ independent uniform random variables $\in [-0.5,0.5]$ in units of the interaction strength $J$. The numerical results for $\Delta^{(k)}$ are shown in Fig.~\ref{fig:randomMFIM}, analogous to Fig. 1(c) in the main text, averaged over 10 random realizations of $H$. Notice that the standard deviation of $\Delta^{(k)}$, represented by the error bars, is very small. This suggests that the optimal conversion of classical to quantum randomness is robust and holds generically for chaotic many-body Hamiltonians.
\begin{figure}
\centering
\includegraphics[width=0.6\textwidth]{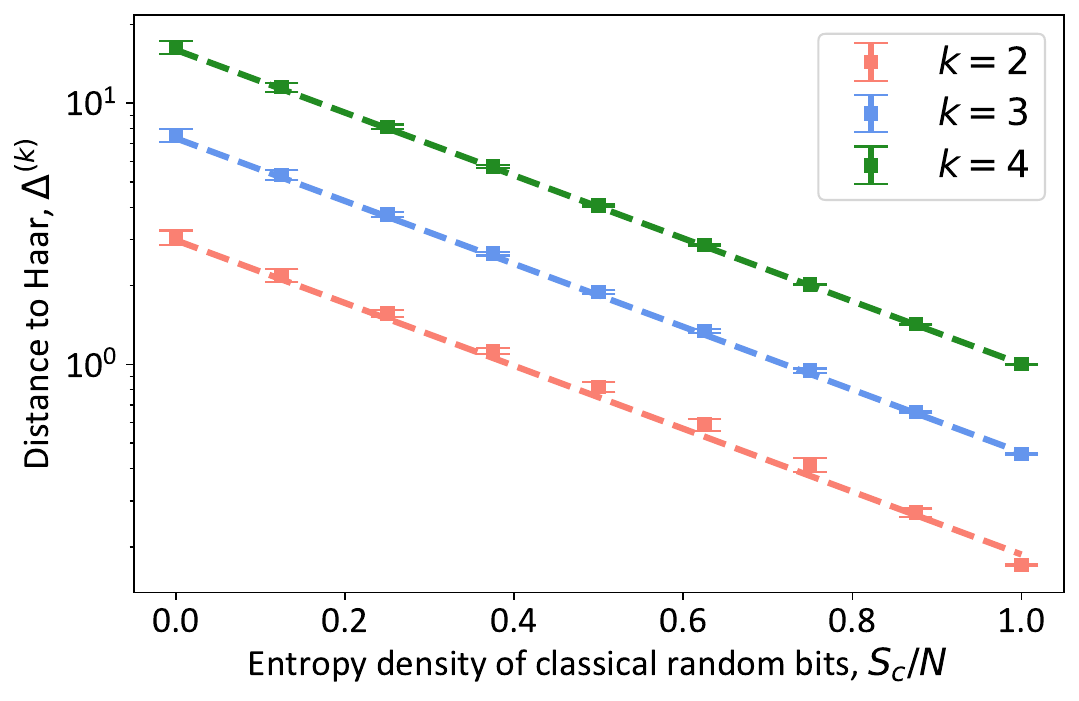}
\caption{Normalized Hilbert-Schmidt distance $\Delta^{(k)}$ between the $k$-th moment of the projected ensemble $\mathcal{E}$ and the Haar ensemble, against the entropy density $S_{\text{c}}/N$ of the classical distribution $q(x)$, with $N_A = N_B = 4$. The points are obtained numerically by evolving the initial state with the random Hamiltonian $H = H_0 + H_1$ for a time $JT = 10^3$. The distance is averaged over 10 realizations of $H$, with the standard deviation of $\Delta^{(k)}$ represented by error bars. The dashed lines denote the analytical root-mean-square distance $\Delta_{\text{rms}}^{(k)}$ when $U$ is a Haar random unitary.}
 \label{fig:randomMFIM}
\end{figure}

\section{Additional results on Hamiltonian disorder}

\subsection{Estimates on the timescale}
For a $D$-dimensional Hamiltonian with geometrically local interactions, the timescale for the $N_A$ system qubits to thermalize with the bath qubits is typically $\sim N_A^{1/D}$, which can be intuitively understood as the time taken for volume-law entanglement to be built up between subsystems $A$ and $B$. Previous numerical and analytical results~\cite{Cotler2023emergent,Ippoliti2022solvable,chan2024projected} suggest that the timescale for deep thermalization (i.e. to form approximate $k$-designs) does not depend strongly on $k$. For our scheme of converting classical to quantum randomness via Hamiltonian disorder to be feasible, it is imperative to understand the timescales required in terms of the number of qubits. Here, we provide an estimate based on a physical argument.

In the case where the initial state $\ket{\psi_0}$ is fixed and classical randomness is injected via disorder in the Hamiltonian, the condition~\eqref{eq:overlapbound} becomes relevant to achieve optimal conversion of randomness, where $\ket{x}$ and $\ket{x^\prime}$ are interpreted as the time-evolved state corresponding to two independent realizations of disorder. 

The overlap can be written as
\begin{equation}
    F(t) = \left| \braket{\psi_0|e^{i(H_0 + V_2)t} e^{-i(H_0 + V_1)t}|\psi_0}\right|^2 \equiv |\braket{e^{i(H_0 + V_2)t} e^{-i(H_0 + V_1)t}}_0|^2,
\label{eq:echofidelity}
\end{equation}
analogous to the Loschmidt echo, which typically decays exponentially for chaotic dynamics~\cite{gorin2006dynamics}. Here, $V_1$ and $V_2$ are arbitrary Hamiltonians in general, although for our purposes these correspond to independent realizations of the disorder Hamiltonian. Expanding $F(t)$ to order $t^2$, we have
\begin{equation}
    F(t) = 1 - t^2 \sparens{(\delta V_1)^2 + (\delta V_2)^2 - \braket{V_1 V_2}_c - \braket{V_2 V_1}_c} + \bigO{t^3},
\end{equation}
where
\begin{equation}
    (\delta V_j)^2 \equiv \braket{V_j^2}_0 - (\braket{V_j}_0)^2, \quad j = 1,2
\end{equation}
is the variance of $V_j$ for the initial state $\ket{\psi_0}$, and
\begin{equation}
    \braket{V_j V_k}_c \equiv \braket{V_j V_k}_0 - \braket{V_j}_0 \braket{V_k}_0, \quad j,k = 1,2
\end{equation}
is the connected correlation function for the initial state. From our expectation that $F(t)$ decays exponentially, we posit that
\begin{equation}
    F(t) \approx \exp\sparens{-t^2\parens{(\delta V_1)^2 + (\delta V_2)^2 - \braket{V_1 V_2}_c - \braket{V_2 V_1}_c}}.
\end{equation}
To test the validity of this formula, we consider the example where $H_0$ is the 1D mixed-field Ising Hamiltonian (5) in the main text, while $V_1$ and $V_2$ are independent realizations of the disorder Hamiltonian $H_d$ in Eq. (5) of the main text. We write
\begin{equation}
    V_1 = \sum_{i=1}^{N} \xi_i X_i, \quad V_2 = \sum_{i=1}^{N} \xi_i^\prime X_i,
\end{equation}
where $\xi_i,\xi_i^\prime \overset{\text{i.i.d.}}{\sim} \text{Uniform}[-W,W]$. The initial state is chosen to be any computational basis state. Thus, $\braket{V_1}_0 = \braket{V_2}_0 = 0$,
\begin{equation}
    (\delta V_1)^2 = \sum_{i,j=1}^{N} \xi_i \xi_j \braket{X_i X_j}_0 = \sum_{i=1}^{N} \xi_i^2 \approx \frac{1}{3} N W^2,
\end{equation}
and similarly for $(\delta V_2)^2$. The correlation functions are
\begin{equation}
    \braket{V_1 V_2}_c = \braket{V_1 V_2}_0 = \sum_{i,j} \xi_i \xi_j^\prime \braket{X_i X_j}_0 = \sum_i \xi_i \xi_i^\prime \approx 0,
\end{equation}
and also $\braket{V_2 V_1}_c \approx 0$. Thus, for the 1D mixed-field Ising model with disorder, we have
\begin{equation}
    F(t) \approx \exp\parens{-\frac{2}{3}N W^2 t^2}.
\label{eq:echofid_approx}
\end{equation}
This is numerically verified in Fig.~\ref{fig:echo_fidelity} for $N = 12$ qubits. Therefore, the condition~\eqref{eq:overlapbound} becomes (assuming $N_B \geq k N_A$),
\begin{equation}
    \exp\parens{-\frac{2}{3} (N_A + N_B) W^2 T^2} \ll \frac{1}{k! 2^{S_c + N_B - k N_A}}
\end{equation}
which for large $N_B$ and constant $k$ gives $WT \gg 1$, as stated in the main text. Note that this is only the timescale for the evolved states to be sufficiently orthogonal. For deep thermalization, one would further require $JT = \Omega(N_A)$ to generate volume-law entanglement between $A$ and $B$.

\begin{figure}
\centering
\subfloat{%
  \includegraphics[width=0.5\linewidth]{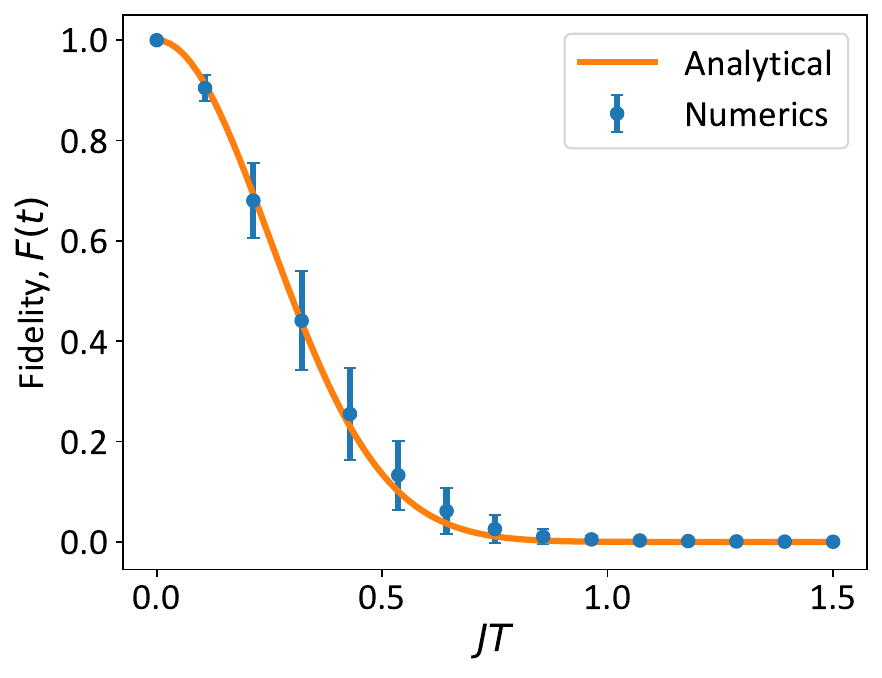}%
}
\caption{Fidelity $F(t)$ between two states evolved under a 1D mixed-field Ising model $H_0$ with different independent realizations of the longitudinal-field disorder Hamiltonian $H_d$ with disorder strength $W/J$. Blue data points are obtained from numerical simulation of $N = 12$ qubits with parameters $(h_x,h_y,J) = (0.809,0.9045,1)$ and $W = 1$, averaged over $30$ random instances. The standard deviation is represented by error bars. Orange line shows the analytical approximation~\eqref{eq:echofid_approx}.}
 \label{fig:echo_fidelity}
\end{figure}

\subsection{Many-body localization}
Here, we study many-body localization of the mixed-field Ising model used for the projected ensemble.
Following Eq. (5) of the main text, we consider the 1D mixed-field Ising Hamiltonian with open boundary conditions, where we use the parameters $\{h_x, h_y, J\} = \{0.8090,0.9045,1\}$:
\begin{equation}
\label{eq:ising_sup}
    H = \sum_{i=1}^{N_A+N_B} (h_x X_i + h_y Y_i + J X_i X_{i+1})  +\sum_{i=1}^{N_A+N_B}\xi_i X_i,
\end{equation}
with Pauli operators $\{X_i,Y_i,Z_i\}$ and
spatially inhomogeneous disorder field $\xi_i\in[-W,W]$ which is sampled randomly from the uniform distribution. 
For $W/J=0$, the Hamiltonian is non-integrable and exhibits chaos~\cite{kim2014testing}. In contrast, for sufficiently large disorder strengths $W/J$, the disorder impedes the propagation of excitations, effectively localizing the dynamics, which is known as many-body localization. 
Such systems typically exhibit a cross-over from chaotic to localized dynamics at a critical disorder $W_\text{c}$~\cite{oganesyan2007localization}.

Numerically, we estimate $W_\text{c}/J\approx0.42$, as shown in Fig.~\ref{fig:gapratio}. To do this, we compute the spectral gap ratio $r$: First, we diagonalize $H$ and determine its eigenvalue spectrum $E_n$, which we sort in ascending order $E_n\leq E_{n+1}$ for $n=1,\dots,2^{N}$, where $N = N_A + N_B$ is the total number of qubits. We compute the gap between neighboring eigenvalues $g_n=E_{n+1}-E_n$, and construct the gap ratio 
\begin{equation}
    r_n=\frac{\text{min}(g_n,g_{n+1})}{\text{max}(g_n,g_{n+1})}\,.
\end{equation}
The distribution of $r_n$ is known to be Poissonian in the localized regime, while following a Wigner-Dyson distribution in the chaotic regime.
The average gap ratio $\langle r\rangle\approx0.53590$ for the Wigner-Dyson distribution of the Gaussian Orthogonal Ensemble (GOE), and $\langle r\rangle\approx0.38629$ for the Poisson distribution~\cite{atas2013distribution}. 
In Fig.~\ref{fig:gapratio}, we plot the average gap ratio for different number of qubits $N$ and disorder $W$, finding that it converges towards the theoretical expected values in the respective chaotic and localized phases.
Furthermore, we find that curves for different $N$ cross at single disorder $W_\text{c}\approx0.42$, which indicates the transition point in the limit $N \to \infty$.

\begin{figure}
\centering
\subfigimg[width=0.5\textwidth]{}{gapratioT.pdf}
\caption{Average gap ratio $\langle r \rangle$ against disorder $W$ of the 1D mixed-field Ising Hamiltonian for different qubit number $N$. Dashed vertical line is estimation of MBL transition at $W_\text{c}\approx 0.42$. Horizontal lines denote the average gap ratios $\langle r\rangle\approx0.53590$ for the Wigner-Dyson distribution, and $\langle r\rangle\approx0.38629$ for the Poissonian distribution.}
 \label{fig:gapratio}
\end{figure}

\section{Shadow tomography with projected ensembles}

\change{Here, we show how projected ensembles can be used for classical shadow tomography~\cite{huang2020predicting} for learning properties of unknown states. We largely follow the treatment by McGinley and Fava~\cite{mcginley2023shadow}, although the presentation is adapted for our purposes. 

First, let us give a high-level introduction to shadow tomography in general and our projected ensemble protocol for shadow tomography:

Shadow tomography measures many expectation values of an unknown state $\rho$ from randomized measurements and classical post-processing~\cite{huang2020predicting}. 
In the standard protocol, the unknown quantum state is scrambled with a random unitary, and then measured in the computational basis. This is repeated for many different random unitaries. Then, in classical post-processing, one applies the inverse of the unitary on the measured outcomes. The resulting states are used to construct a classical representation of the unknown state to estimate expectation values.

However, the standard protocol requires the implementation of random unitaries drawn from a reference distribution (e.g., a unitary $3$-design), which can be difficult to implement in quantum simulators.
Instead, Refs.~\cite{tran2023measuring,mcginley2023shadow} proposed to use projected ensembles to generate the random dynamics, not necessarily unitary. Here, the unknown state is subsystem $A$, while one attaches (fixed) bath qubits as subsystem $B$. Then, a fixed unitary is applied to the full state, and all the qubits are measured. In the context of quantum simulators, this can be naturally realized by a quench dynamics $U = \exp(-iHT)$ for a fixed Hamiltonian $H$ and quench time $T$. Then, in the classical post-processing, one applies the inverse of the unitary on the post-measurement state, followed by an inverse map to construct an estimator of the target state $\rho$, and uses this to estimate expectation values. Notably, the measurements on the bath qubits supply the randomness needed for the protocol.

Our work now improves the projected measurement protocol by randomizing the initial state of the bath qubits. We show that our protocol gives an exponential improvement of the bias error of the shadow tomography protocol, substantially increasing accuracy and reducing measurement cost.

Conceptually, the projected ensembles protocol replaces the step of sampling random unitaries in the original classical shadows protocol~\cite{huang2020predicting} with random quantum evolution on subsystem $A$ induced by measurements on the bath qubits in subsystem $B$. The random unitaries used in Ref.~\cite{huang2020predicting} are sampled from an exact $2$-design, resulting in an unbiased shadow estimator. On the other hand, the projected ensembles protocol produces an approximate $2$-design for a finite bath size $N_B$, and thus causes a bias error. As we have shown in the main text, injecting classical randomness with entropy $S_c$ causes the projected ensemble to converge exponentially to the Haar ensemble. Consequently, the bias error of the shadow estimation is exponentially reduced with $S_c$, illustrated in Fig. 3(a) of the main text.


We now give an in-depth explanation of our protocol.}
We initialize the system in $\rho \otimes \ket{x}\bra{x}$, where $\rho$ is an unknown $N_A$-qubit state, and $\ket{x}$ is an $N_B$-qubit computational basis state, sampled from the classical distribution $q(x)$, $x = \{0,1\}^{N_B}$, with entropy $S_c \leq N_B$. The system is evolved under a fixed unitary $U$. Measuring subsystem $B$ in the computational basis $\{\ket{z_B}\}$ yields the projected ensemble $\mathcal{E}$, which comprises the post-measurement states
\begin{equation}
    \frac{(I_A \otimes \bra{z_B})U(\rho \otimes \ket{x}\bra{x})U^\dag(I_A \otimes \ket{z_B})}{\Tr (I_A \otimes \bra{z_B})U(\rho \otimes \ket{x}\bra{x})U^\dag(I_A \otimes \ket{z_B})} \otimes \ket{z_B}\bra{z_B}
\end{equation}
with weights $\Tr (I_A \otimes \bra{z_B})U(\rho \otimes \ket{x}\bra{x})U^\dag(I_A \otimes \ket{z_B})$. The $N_A$ qubits in the projected  states are then measured in the computational basis $\{\ket{z_A}\}$. Thus, we end up with the states $\ket{z_A} \otimes \ket{z_B}$ with outcome probabilities
\begin{equation}
    p_x(z_A,z_B) = \Tr (\bra{z_A} \otimes \bra{z_B})U(\rho \otimes \ket{x}\bra{x})U^\dag(\ket{z_A}\otimes \ket{z_B}).
\label{eq:zazb_distr}
\end{equation}
Note that the sequential measurements on $B$ followed by $A$ is purely conceptual. Since the measurements on $A$ and $B$ commute, they can be performed simultaneously, which can be done in practice with a global readout of the $N_A + N_B$ qubits. 

The main task of the quantum processor is to sample the bit strings $(z_A, z_B)$ from the distribution~\eqref{eq:zazb_distr}. In the classical post-processing, we construct the shadow estimator for $\rho$,
\begin{equation}
    \hat{\rho}_x(z_A,z_B) = (d_A + 1) \hat{\chi}_x(z_A,z_B) - I_A,
\label{eq:shadow_estimator}
\end{equation}
corresponding to each measurement outcome $(z_A,z_B)$, and
\begin{equation}
    \hat{\chi}_x(z_A,z_B) = \frac{(I_A \otimes \bra{x})U^\dag \ket{z_A,z_B}\bra{z_A,z_B}U (I_A \otimes \ket{x})}{\Tr (I_A \otimes \bra{x})U^\dag \ket{z_A,z_B}\bra{z_A,z_B}U (I_A \otimes \ket{x})}.
\end{equation}
$I_A$ is the identity operator on the $N_A$ qubits. The estimator satisfies the normalization $\Tr (\hat{\rho}_x(z_A,z_B)) = 1$. This can be used to construct an estimator for an arbitrary observable $O$, which we denote as
\begin{equation}
    \hat{O} = \Tr(O \hat{\rho}_x(z_A,z_B)).
\end{equation}
The estimator $\hat{O}$ is averaged over $L$ measurement shots. The error of the estimation, $\delta O$, is defined as
\begin{equation}
    \delta O = |\hat{O} - \Tr(O \rho)|. 
\end{equation}
As $L \to \infty$, the statistical fluctuations from measurement shot noise vanish, and $\delta O$ converges to the bias error, which can be written explicitly as
\begin{equation}
    \text{Bias error} = \sum_x \sum_{z_A,z_B} q(x) p_x(z_A z_B) \Tr(O\hat{\rho}_x(z_A,z_B)) - \Tr(O\rho).
\end{equation}
The sign of the bias error indicates a systematic over- or under-estimation of the true observable value $\Tr(O\rho)$.   

Since $U$ is fixed, an analytical treatment of the bias error is non-trivial and depends on $U$. Instead, for analytical tractability, we consider the average behavior of the bias error when $U$ is a Haar random unitary,
\begin{equation}
    \Eset{U \sim \text{Haar}(d_A d_B)} (\text{Bias error}) = \sum_x q(x) \E_U \sum_{z_A,z_B} \sparens{p_x(z_A,z_B) \Tr(O \hat{\rho}_x(z_A,z_B))} - \Tr(O\rho).
\label{eq:haar_shadowbias_defn}
\end{equation}
The Haar average bias error can be analytically calculated to be zero.
\begin{proposition}[Average bias of shadow estimation with projected ensemble]
    The average bias of the shadow estimation, defined in Eq.~\eqref{eq:haar_shadowbias_defn}, is zero. 
\end{proposition}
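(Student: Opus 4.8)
The plan is to prove the sharper pointwise statement that for each fixed bath initial string $x$,
\begin{equation*}
\E_{U\sim\text{Haar}(d_Ad_B)}\ \sum_{z_A,z_B} p_x(z_A,z_B)\,\hat{\rho}_x(z_A,z_B)\ =\ \rho ,
\end{equation*}
from which the Proposition follows at once: take $\Tr(O\,\cdot)$ of both sides, average over $x$ with weights $q(x)$, and use $\sum_x q(x)=1$, so that the right-hand side of \eqref{eq:haar_shadowbias_defn} becomes $\Tr(O\rho)-\Tr(O\rho)=0$. To get the pointwise identity I would first strip off the trivial piece: since $\{\ket{z_A,z_B}\}$ is complete, $\sum_{z_A,z_B}p_x(z_A,z_B)=\Tr[\rho\otimes\ket{x}\bra{x}]=1$, so by the definition \eqref{eq:shadow_estimator} of $\hat{\rho}_x$ it is enough to show that the Haar-averaged ``measurement channel'' $\mathcal{M}(\rho)\defeq\E_U\sum_{z_A,z_B}p_x(z_A,z_B)\,\hat{\chi}_x(z_A,z_B)$ equals the depolarizing channel $(I_A+\rho)/(d_A+1)$, whose inverse is exactly $X\mapsto(d_A+1)X-I_A$.

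The next step is to recast the summand via a POVM. With the Kraus operator $K_{z_B,x}\defeq(I_A\otimes\bra{z_B}_B)\,U\,(I_A\otimes\ket{x}_B)$ on $\mathcal{H}_A$ and the rank-one positive operators $E_{z_A,z_B,x}\defeq K_{z_B,x}^\dagger\ket{z_A}\bra{z_A}K_{z_B,x}=\ket{\phi_{z_A,z_B}}\bra{\phi_{z_A,z_B}}$, $\ket{\phi_{z_A,z_B}}\defeq(I_A\otimes\bra{x}_B)\,U^\dagger\ket{z_A,z_B}$, one checks directly that $p_x(z_A,z_B)=\Tr(E_{z_A,z_B,x}\rho)$, $\hat{\chi}_x(z_A,z_B)=E_{z_A,z_B,x}/\Tr(E_{z_A,z_B,x})$, and $\sum_{z_A,z_B}E_{z_A,z_B,x}=I_A$ (the last from $\sum_{z_A,z_B}\ket{z_A,z_B}\bra{z_A,z_B}=I_{AB}$ and $U^\dagger U=I$). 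Thus
\begin{equation*}
\mathcal{M}(\rho)=\E_U\sum_{z_A,z_B}\|\phi_{z_A,z_B}\|^2\,\bra{\hat{\phi}_{z_A,z_B}}\rho\ket{\hat{\phi}_{z_A,z_B}}\,\ket{\hat{\phi}_{z_A,z_B}}\bra{\hat{\phi}_{z_A,z_B}},\qquad \ket{\hat{\phi}}\defeq\ket{\phi}/\|\phi\|.
\end{equation*}

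The key input is the projected-ensemble property of Haar-random states. For Haar-random $U$ and any fixed computational basis vector $\ket{z_A,z_B}$, the vector $U^\dagger\ket{z_A,z_B}$ is a Haar-random pure state of $AB$; its unnormalized projection $\ket{\phi_{z_A,z_B}}$ onto $\ket{x}_B$ then satisfies $\E\,\|\phi_{z_A,z_B}\|^2=\Tr(I_A\otimes\ket{x}\bra{x})/(d_Ad_B)=1/d_B$, and — this is the crucial point — the \emph{normalized} projected state $\ket{\hat{\phi}_{z_A,z_B}}$ is Haar-distributed on $\mathcal{H}_A$ and statistically independent of $\|\phi_{z_A,z_B}\|^2$ (in the Gaussian model $\ket{\Psi}=g/\|g\|$, the $\ket{x}_B$-block is $g_x/\|g\|$, whose direction $g_x/\|g_x\|$ is uniform on the unit sphere of $\mathcal{H}_A$ and independent of $\|g_x\|$ and of all other components of $g$, hence of $\|\phi\|^2=\|g_x\|^2/\|g\|^2$). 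Permutation invariance of the Haar measure gives every one of the $d_Ad_B$ summands the same expectation, so $\mathcal{M}(\rho)$ equals $d_Ad_B$ times a single term; independence factorizes that term as $\E\|\phi\|^2\cdot\E_{\hat{\phi}\sim\text{Haar}}\!\big[\bra{\hat{\phi}}\rho\ket{\hat{\phi}}\,\ket{\hat{\phi}}\bra{\hat{\phi}}\big]$, and the second-moment identity $\E_{\hat{\phi}\sim\text{Haar}}[(\ket{\hat{\phi}}\bra{\hat{\phi}})^{\otimes2}]=(I_A+S)/\!\big(d_A(d_A+1)\big)$ (with $S$ the swap on $\mathcal{H}_A^{\otimes2}$) yields $\E_{\hat{\phi}}[\bra{\hat{\phi}}\rho\ket{\hat{\phi}}\,\ket{\hat{\phi}}\bra{\hat{\phi}}]=(I_A+\rho)/\!\big(d_A(d_A+1)\big)$. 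Multiplying out, $\mathcal{M}(\rho)=d_Ad_B\cdot\tfrac1{d_B}\cdot(I_A+\rho)/\!\big(d_A(d_A+1)\big)=(I_A+\rho)/(d_A+1)$, as required.

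The main obstacle — and the reason Weingarten calculus cannot be applied naively term by term — is that $\hat{\chi}_x$ is a \emph{rational} (not polynomial) function of $U$, because of the normalization $1/\Tr(E_{z_A,z_B,x})=1/\|\phi_{z_A,z_B}\|^2$; the fix is precisely the norm/direction independence above, which lets the random normalization be pulled out of the Haar average. A secondary point to handle with care is that the summands for different $(z_A,z_B)$ are correlated, so the reduction to a single term rests on linearity of expectation together with the exchangeability of the summands, not on independence across outcomes.
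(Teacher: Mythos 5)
Your proof is correct, but it takes a genuinely different route from the paper's. The paper handles the problematic normalization $1/\Tr\bigl[(I_A\otimes\bra{x})U^\dag\ket{z_A,z_B}\bra{z_A,z_B}U(I_A\otimes\ket{x})\bigr]$ with the replica trick: it defines $f_m(U)$ with the trace raised to an integer power $m$, evaluates $\E_U f_m$ by Weingarten calculus, and analytically continues to $m\to-1$, obtaining $\lim_{m\to-1}\E_U f_m=\tfrac{1}{d_A+1}[\Tr O+\Tr(O\rho)]$ and hence zero bias. You instead observe that for Haar-random $U$ the vector $U^\dag\ket{z_A,z_B}$ is a Haar-random state of $AB$, so its normalized projection onto the $\ket{x}_B$ block is exactly Haar-distributed on $\mathcal{H}_A$ and statistically independent of the squared norm $\|\phi\|^2$; this independence is precisely what lets the rational dependence on $U$ factor out of the expectation, reducing everything to $\E\|\phi\|^2=1/d_B$, exchangeability of the $d_Ad_B$ outcomes, and the elementary second-moment identity for Haar states on $A$. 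Your route buys two things: it sidesteps the analytic-continuation step (which the paper does not rigorously justify beyond asserting analyticity), and it delivers the stronger operator identity $\E_U\sum_{z_A,z_B}p_x(z_A,z_B)\hat{\rho}_x(z_A,z_B)=\rho$ for every fixed $x$, i.e., the Haar-averaged measurement channel is exactly the depolarizing channel $(I_A+\rho)/(d_A+1)$ whose inverse defines the estimator. What the paper's replica machinery buys is uniformity with the rest of the manuscript: the same technique handles the frame-potential computations in Theorem~1, where the exponent to be continued is $1-k$ rather than $-1$ and no norm/direction factorization is available.
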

\begin{proof}
    The main computation is to evaluate the Haar average in Eq.~\eqref{eq:haar_shadowbias_defn} over $U$, which is not immediately straightforward due to the appearance of $U$ in the denominator of $\hat{\rho}_x(z_A,z_B)$ in Eq.~\eqref{eq:shadow_estimator}. As used in the proof of Theorem 1, we invoke the replica trick here. Let us define
    \begin{equation}
    \begin{split}
        f_m(U) &= \sum_{z_A,z_B} p_x(z_A,z_B) \Tr\sparens{(I_A \otimes \bra{x})U^\dag \ket{z_A,z_B}\bra{z_A,z_B}U(O \otimes \ket{x})} \\&\times \Tr^m\sparens{(I_A \otimes \bra{x})U^\dag \ket{z_A,z_B}\bra{z_A,z_B}U(I_A \otimes \ket{x})}.
    \end{split}
    \end{equation}
where we treat $m$ as a positive integer. The Haar average of $f_m$ can be evaluated using Weingarten calculus, and we take $m \to -1$ at the end to get $\E_U \delta O = (d_A + 1)\sum_x q(x) \lim_{m \to -1} \E_U f_m(U) - \Tr (O) -\Tr(O \rho)$. Substituting the definition of $p_x(z_A,z_B)$ from Eq.~\eqref{eq:zazb_distr}, we obtain 
\begin{equation}
\begin{split}
    \Eset{U \sim \text{Haar}(d_A d_B)} f_m(U) &= \frac{(d_A d_B - 1)!}{(d_A d_B + m +1)!} \sum_{z_A,z_B} \sum_{\pi \in S_{m+2}} \Tr(\hat{\pi}_A (\rho \otimes O \otimes I_A^{\otimes m})) \\&= \frac{(d_A d_B)!}{(d_A d_B + m +1)!} \frac{(d_A + m+1)!}{(d_A +1)!} \sum_{\pi \in S_2} \Tr(\hat{\pi}_A(\rho \otimes O)) \\&= \frac{(d_A d_B)!}{(d_A d_B + m +1)!} \frac{(d_A + m+1)!}{(d_A +1)!} \sparens{\Tr (O) + \Tr(O\rho)}.
\end{split}
\end{equation}
Taking the limit $m \to -1$ yields
\begin{equation}
    \lim_{m \to -1} \Eset{U\sim \text{Haar}(d_A d_B)} f_m(U) = \frac{1}{d_A + 1}\sparens{\Tr O + \Tr(O\rho)}
\end{equation}
which implies that the bias error, on average, is zero.
\end{proof}

Although the average bias error is zero, this does not mean that the bias error is zero for a fixed unitary $U$, even if sampled from the Haar distribution. A detailed analysis of the bias error for a fixed $U$ is beyond the scope of this work. Here, we provide an intuitive account for the bias error.

In defining the shadow estimator in Eq.~\eqref{eq:shadow_estimator}, we are essentially assuming that the ensemble of $\hat{\chi}_x(z_A,z_B)$ (or the tomographic ensemble, in the language of Ref.~\cite{mcginley2023shadow}) forms an exact state $2$-design. While this is true in the limit $N_B \to \infty$, one only obtains an approximate $2$-design for a finite $N_B$. From our analysis of the projected ensemble, we expect the distance to the Haar ensemble to scale as
\begin{equation}
    \Delta_\alpha^{(2)} \sim \frac{1}{2^{(S_c + N_B)/2}}
\end{equation}
with the number of bath qubits $N_B$ and the entropy $S_c$ of the classical distribution $q(x)$, for any Schatten index $\alpha$ in Definition~\ref{defn:approxdesign}, up to factors of $D_k$ which does not involve $N_B$ and $S_c$. This is exactly the observed scaling behavior in Fig. 3(a) of the main text, where we plot the absolute value of the bias error, $\delta O (L \to \infty)$. Thus, by injecting classical randomness into the shadow tomography protocol, we can achieve an exponentially smaller bias error with a fixed bath size $N_B$.

\subsection{Patched quench}

\begin{figure}
\centering

\subfigimg[width=0.37\textwidth]{a}{PatchSketch.pdf}
\subfigimg[width=0.4\textwidth]{b}{errorEnergyShadowShdN10B0r100m6S5000M1t2e0s41.pdf}
\caption{
\idg{a} Setup for classical shadow tomography with projected ensembles and patch-wise evolution. Small circles represent qubits arranged in a $N_\text{A}\times(N_\text{B}+1)$ grid, vertical ellipses represent quantum states, and horizontal square boxes represent on which qubits the unitaries are applied. The goal is to perform classical shadow tomography on the unknown $N_\text{A}$-qubit state $\ket{\psi_\text{g}}$ on the left column. $N_\text{B} \times N_\text{A}$ qubits are prepared in computational basis states $\ket{x} \equiv \ket{x_1}\otimes\dots\otimes\ket{x_{N_\text{B}}}$. Then, an $(N_\text{B}+1)$-qubit unitary $V$ is applied in parallel row-wise on all qubits. Finally, the full system is measured in the computational basis, and one performs the usual post-processing for classical shadows. Since the size of $V$ can be chosen independently of $N_\text{A}$, this is computationally efficient. Classical entropy $S_{\text{c}}$ is introduced by randomizing the computational basis states $\ket{x_i}$ between each round of measurements, which greatly reduces the estimation error.
\idg{b} Error $\delta E/N_A$ of the estimated energy density $E/N_A$ of ground state $\ket{\psi_\text{g}}$ of the mixed-field Ising Hamiltonian $H_{N_A}$~\eqref{eq:Hising_patch} using the `patched quench' shadow tomography protocol. We evolve with unitary $V=\exp(-i H_{N_B+1} T)$. The error is plotted against the number of system qubits $N_A$ for fixed ancilla state ($S_{\text{c}}=0$) and randomly chosen computational basis ancilla states with classical randomness $S_{\text{c}}/N_A=N_B$. Black dashed line is the corresponding result for the shadow tomography protocol of Ref.~\cite{huang2020predicting} with Haar random single-qubit unitaries as reference. We have $N_B=6$ bath qubits, $L=5000$ measurement shots, and quench time $JT=100$.}
 \label{fig:Patchshadow}
\end{figure}

In the classical post-processing given by Eq.~\eqref{eq:shadow_estimator}, we require an efficient classical simulation of the inverse evolution $U^\dagger$~\cite{huang2020predicting}. In our case, $U$ is typically fixed by the many-body Hamiltonian dynamics which is in general exponentially hard to simulate classically. To resolve this issue, the `patched quench' approach was proposed to apply the evolution $U=V^{\otimes{q}}$ on $q$ disjoint patches of qubits, using smaller unitaries $V$~\cite{tran2023measuring}. Here, the size of unitary $V$ is chosen such that it can be inverted classically.

The scheme is sketched in Fig.~\ref{fig:Patchshadow}(a).
We prepare a two-dimensional grid of size $N_A \times (N_B+1)$ qubits with the initial state $\ket{\psi_\text{g}}\otimes \ket{x}$ where the ancilla of size $N_A N_B$ qubits is prepared in a computational basis state $\ket{x}$, and $\ket{\psi_g}$ is the unknown state to be learned. Here, we arrange $\ket{\psi_\text{g}}$ along the first column of the 2D grid. Then, we evolve the system and ancilla together via the unitary $U=V^{\otimes{N_A}}$ with $N_A$ disjoint patches. Each patch unitary $V=\exp(-i H T)$ acts on a row of the grid, i.e., on one system qubit and $N_B$ ancilla qubits. 

After evolution, all the qubits are measured in the computational basis. In the classical post-processing, the inverse evolution $U^\dagger=\exp(i H T)^{\otimes{N_A}}$ is applied to construct $N_A$ individual single-qubit shadow estimators of the form in Eq.~\eqref{eq:shadow_estimator} (with $d_A = 2$, since each patch only contains one sys). We note that for a constant $N_B$, one can efficiently invert $V$ for the shadow tomography as it acts only on $N_B+1$ qubits. This remains efficient even when scaling up $N_A$. From our numerical results shown in the main text, the bias error decreases exponentially with $N_B$. This can be further reduced exponentially by injecting classical randomness (for example, sampling the ancilla state $\ket{x}$ from a classical distribution). 

Essentially, the `patched quench' approach mimics the original classical shadows protocol in Ref.~\cite{huang2020predicting} of applying random single-qubit unitaries to the system, which is now replaced with a random evolution induced by bath measurements. This approach is thus useful for efficiently learning many non-commuting low-weight observables, especially in many quantum simulators where one cannot easily access or control single-qubit dynamics.

In Fig.~\ref{fig:Patchshadow}(a), we illustrate the scheme where each patch consists of one system qubit and $N_B$ ancilla qubits. This scheme can be easily generalized to include more than one system qubit in each patch, or have patches that act on varying number of qubits. The classical simulation cost grows polynomially with the number of patches and exponentially with the number of qubits in the patch.

\subsection{Numerical demonstration: Estimating ground state energy density}
To implement $U=V^{\otimes{N_A}}=\exp(-i H T)^{\otimes{N_A}}$, one needs to evolve the qubits with the Hamiltonian $H$ separately along each row of the two-dimensional grid. This can be physically achieved using two-dimensional Ising simulators where the rows can be decoupled. In cold atom simulators, the decoupling of the qubits can be realized by adding a tunnel barrier between the rows. In other architectures such as trapped ions and tweezer arrays of  neutral atoms, the decoupling can be done by moving the traps apart, at a timescale that is sufficiently faster than the decoherence time of the qubits.

Here, for illustration purposes, we consider a quantum simulator where qubits on each row interact via the 1D mixed-field Ising Hamiltonian $H_{N_B+1}$, where 
\begin{equation}
\label{eq:Hising_patch}
    H_N \equiv \sum_{i=1}^{N} (h_x X_i + h_y Y_i + J X_i X_{i+1}),
\end{equation}
with parameters $\{h_x, h_y, J\} = \{0.8090,0.9045,1\}$ and open boundary conditions.
The initial state $\ket{\psi_g}$ is prepared in the ground state of $H_{N_A}$.
Our goal is to efficiently estimate the ground state energy density $E/N_A=\bra{\psi_\text{g}}H_{N_A}\ket{\psi_\text{g}}/N_A$. 



Applying the `patch quench' protocol, we obtain an estimation for the energy density $\hat{E}/N_A$ (averaged over $L = 5000$ measurement shots) from $N_A$ individual shadow estimators, with an error $\delta E/N_A= \vert \hat{E} - E\vert/N_A$. 
In each run, we either keep ancilla state $\ket{x}$ fixed (i.e. classical randomness $S_{\text{c}}=0$) or randomize the computational basis state ($S_{\text{c}}/N_A=N_B$).
In Fig.~\ref{fig:Patchshadow}(b), we plot $\delta E /N_A$ against $N_A$. We find adding classical randomness into the initialization reduces the error $\delta E / N_A$ by nearly an order of magnitude. The reduction in error is nearly independent of $N_A$, indicating that our protocol can achieve low estimation errors in a scalable manner. Note that our protocol is  computationally efficient while only requiring access to the native many-body Hamiltonian of the analog quantum simulator.

\subsection{Additional numerical results on bias error of shadow tomography}
In Fig.~\ref{fig:shadowstd}, we show the standard deviation of the bias error $\delta O$ in our shadow tomography protocol of Fig. 3(a) of the main text. We find that the standard deviation is on the same order as the average value of the bias error.
\begin{figure}
\centering
\subfigimg[width=0.4\textwidth]{}{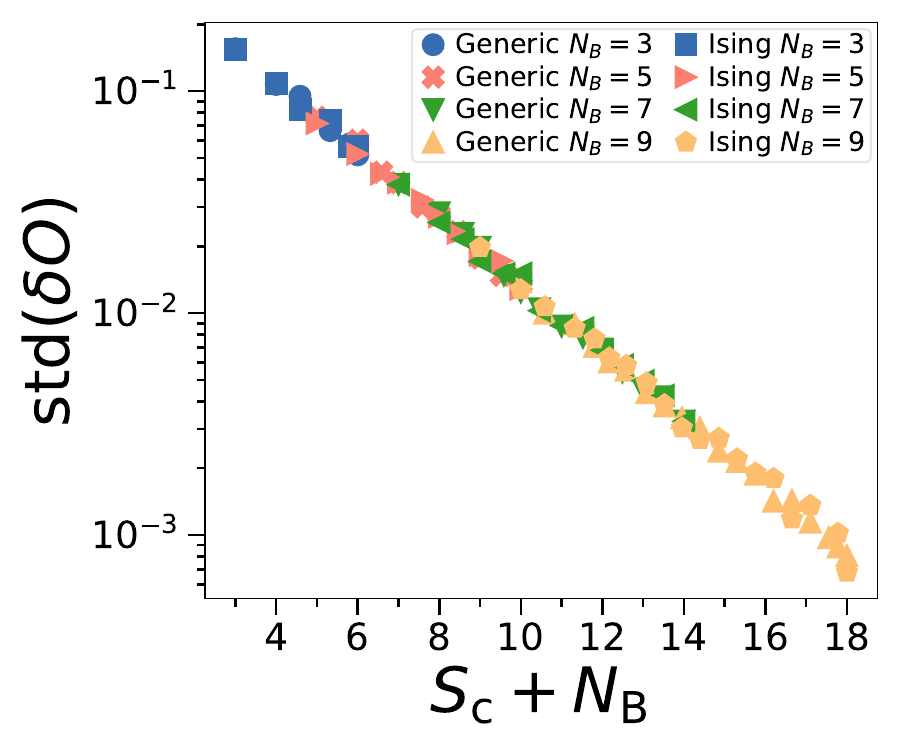}
\caption{\revA{Standard deviation} of bias error $\delta O$ (with $L \to \infty$ measurement shots) against classical entropy $S_{\text{c}}$ plus number of bath qubits $N_\text{B}$. For `Generic', we randomize the initial state of the $N_\text{B}$ bath qubits and evolve with a fixed Haar random unitary. 
For `Ising', we evolve with Eq. (5) (in the main text) for $JT=100$,  where the initial state is fixed and the disorder of the Hamiltonian is randomized for $2^{S_c}$ realizations with $W/J=0.5$. 
We show $N_A=1$, and take standard deviation over 100 randomly chosen instances of unitaries.  }
 \label{fig:shadowstd}
\end{figure}

\section{Deep thermalization from generic Hamiltonian evolution}
In the main text, the analytical result is derived for the evolution under a (fixed) unitary drawn from the Haar measure, which we show agrees well with numerical simulations of the MFIM. Here, we provide analytical evidence that deep thermalization occurs for generic Hamiltonian. To this end, we consider an ensemble of Hamiltonians $\mathcal{E}_H$ which is invariant under unitary conjugation. This includes, as a special case, the Gaussian Unitary Ensemble (GUE) which is widely studied for quantum chaos.  We only sketch the proof idea below, while the detailed rigorous proof is beyond the scope of this paper and deferred to a forthcoming manuscript.

Assuming the probability for any bath outcome $z$ is approximated as $p(z) \approx 1/d_B$ (this will be addressed below), the frame potential of the projected ensemble is approximately
\begin{equation}
    F^{(k)} \approx \frac{1}{d_B} + d_B^{2k-2} \sum_{z_1 \neq z_2} |\braket{0|e^{iHT}(I_A \otimes \ket{z_1}\bra{z_2})e^{-iHT}|0}|^{2k},
\end{equation}
where $T$ is the evolution time and $\ket{0}$ is the initial state. Since $H = U^\dag \Lambda U$ is diagonalized by a Haar random unitary $U$ with $\Lambda$ being the diagonal matrix containing the eigenenergies, averaging over $\mathcal{E}_H$ is equivalent to averaging over $U$ and the spectrum $\Lambda$ separately. This gives
\begin{equation}
    \E_{\mathcal{E}_H} F^{(k)} = \E_{\Lambda,U} F^{(k)} \approx \frac{1}{d_B} + d_B^{2k-2} \sum_{z_1\neq z_2} \E_{\Lambda,U} |\braket{0|U^\dag e^{i\Lambda T} U (I_A \otimes \ket{z_1}\bra{z_2}) U^\dag e^{-i\Lambda T} U|0}|^{2k}.
\label{eq:hamUE_avgF}
\end{equation}
Evaluating the Haar average, and keeping the leading order term as $d_A, d_B \to \infty$ yields
\begin{equation}
    \E_{\mathcal{E}_H} F^{(k)} \approx \frac{1}{d_B} + \frac{k!}{d_A^k} \approx \frac{1}{d_B} + F_{\text{Haar}}^{(k)},
\label{eq:hamUE_avgF_simplified}
\end{equation}
which matches the result in Theorem 1 of the main text for $S_c = 0$. The calculation can be extended to the case of $S_c > 0$, using a similar approach as in Sec.~\ref{sec:sm_derivation}. In deriving Eq.~\eqref{eq:hamUE_avgF_simplified} from Eq.~\eqref{eq:hamUE_avgF}, it might seem like the average frame potential is independent of $\Lambda$ and $T$, which is not true. Going back to our initial assumption, the measurement probability can be evaluated as
\begin{equation}
    \E_{\mathcal{E}_H} p(z) = \E_{\Lambda,U} \braket{0|U^\dag e^{i\Lambda T} U (I_A \otimes \ket{z}\bra{z})U^\dag e^{-i\Lambda T} U|0} \approx \frac{1}{d_A^2 d_B^2} \sparens{d_A^2 d_B + \mathcal{O}\left(R_2(T) \times |\braket{z|0}|^2\right)},
\end{equation}
where $R_2(T) \equiv \E_\Lambda \sparens{|\text{Tr} (e^{i\Lambda T})|^2}$ is the (2-point) spectral form factor. For sufficiently large $T$, $R_2(T)$ typically decays far below its initial value of $d_A^2 d_B^2$. In the case of GUE, it has been shown that $R_2(T) \sim \sqrt{d_A d_B}$ at the dip time and $\sim d_A d_B$ at late times~\cite{cotler2017chaos}. In either case, $\E_{\mathcal{E}_H} p(z) \approx 1/d_B$ to leading order, justifying the assumption. The statistical fluctuations of $p(z)$ can be controlled by evaluating the variance of $p(z)$, which involves the $4$-point spectral form factor $R_4(T) \equiv \E_\Lambda \sparens{|\text{Tr} (e^{i\Lambda T})|^4}$. In other words, one can relate the emergence of deep thermalization to dynamical signatures of chaos.

Although the `typical' Hamiltonian here is rather unphysical and describes essentially infinite-range interactions, such an approach based on random matrix theory is well established in explaining the statistical properties for quantum thermalization in local chaotic Hamiltonians~\cite{dAlessio2016quantum} without time-reversal symmetry. In particular, this is a reasonable description of the late-time behavior (i.e., past the scrambling time) of chaotic Hamiltonians, where the locality of the interactions are not significant~\cite{cotler2017chaos}.

Additionally, we also expect our conclusions to also hold for generic Hamiltonians obeying time-reversal symmetry, in which the Haar random unitary $U$ should be replaced by a Haar random orthogonal matrix $O$. The projected ensemble in this case is expected to approximate the ensemble of real-valued Haar random states. As shown in a recent work~\cite{schatzki2024random}, such an ensemble is statistically indistinguishable from complex-valued Haar random states (which we study) for large $N_A$, with the trace distance between the two ensembles vanishing exponentially as $\sim k^2/2^{N_A}$.

\bibliography{bib}